\colorlet{shadecolor}{orange!15}
\author{
Xi Chen\thanks{Columbia University. Email: \url{xc2198@columbia.edu}.} \and 
William Pires \thanks{Columbia University. Email: \url{wp2294@columbia.edu}.} \and 
Toniann Pitassi\thanks{Columbia University. Email: \url{tonipitassi@gmail.com}.} \and 
Rocco A. Servedio\thanks{Columbia University. Email: \url{rocco@cs.columbia.edu}.} 
}
\theoremstyle{definition}
\newtheorem{assumption}{Assumption}
\begin{document}
\setcounter{section}{0}

\title{Relative-error testing of conjunctions and decision lists}%

\newcommand{\red}[1]{{\color{red} {#1}}}
\newcommand{\blue}[1]{{\color{blue} {#1}}}
\newcommand{\gray}[1]{{\color{gray} {#1}}}

\newcommand{\xnote}[1]{\footnote{{\bf \color{purple}Xi}: {#1}}}
\newcommand{\rnote}[1]{\footnote{{\bf \color{red}Rocco}: {#1}}}
\newcommand{\toni}[1]{\footnote{{\bf\color{blue} toni}: {#1}}} 
\newcommand{\wnote}[1]{\footnote{{\bf \color{teal}Will}: {#1}}}

\newcommand{\bxO}{\mathbf{x^1}}
\newcommand{\bxT}{\mathbf{x^2}}
\newcommand{\bxTh}{\mathbf{x^3}}

\newcommand{\byO}{\mathbf{y^1}}
\newcommand{\byT}{\mathbf{y^2}}
\newcommand{\byTh}{\mathbf{y^3}}

\newcommand{\bGamma}{\mathbf{\Gamma}}

\newcommand{\uth}{\bigskip \bigskip{\huge{\bf UP TO HERE} \bigskip \bigskip}}
\newcommand{\SAMP}{\mathrm{Samp}}
\newcommand{\MQ}{\mathrm{MQ}}

\newcommand{\reldist}{\mathrm{rel}\text{-}\mathrm{dist}}
\newcommand{\default}{\ensuremath{\mathsf{default}}}

\newcommand\Algphase[1]{%
\Statex\hspace*{\dimexpr-\algorithmicindent-2pt\relax}%
\Statex\hspace*{-\algorithmicindent}\textbf{#1}%
\Statex\hspace*{\dimexpr-\algorithmicindent-2pt\relax}%
}

\thispagestyle{empty}

\pagenumbering{gobble}

\maketitle

\begin{abstract}
We study the \emph{relative-error} property testing model for Boolean functions that was recently introduced in the work of \cite{CDHLNSY2024}.
In relative-error testing, the testing algorithm gets uniform random satisfying assignments as well as black-box queries to $f$, and it must  accept $f$ with high probability whenever $f$ has the property that is being tested and reject any $f$ that is \emph{relative-error} far from having the property.  Here the relative-error distance from $f$ to a function $g$ is measured with respect to $|f^{-1}(1)|$ rather than with respect to the entire domain size $2^n$ as in the Hamming distance measure that is used in the standard model; thus, unlike the standard model, relative-error testing allows us to study the testability of \emph{sparse} Boolean functions that have few satisfying assignments.
It was shown in \cite{CDHLNSY2024} that relative-error testing is at least as difficult as standard-model property testing, but for many natural and important Boolean function classes the precise relationship between the two notions is unknown.

In this paper we consider the well-studied and fundamental properties of being a \emph{conjunction} and being a \emph{decision list}.  In the relative-error setting, we give an efficient one-sided error tester for conjunctions with running time and query complexity $O(1/\epsilon)$. 

Secondly, we give a two-sided relative-error $\tilde{O}(1/\epsilon)$ tester for decision lists, matching the query complexity of the state-of-the-art algorithm in the standard model \cite{Bshouty20,DLM+:07}.

\end{abstract}

\newpage
\tableofcontents
\newpage

\setcounter{page}{1}
\pagenumbering{arabic}

\section{Introduction}

\noindent 
{\bf Background.} Over the past few decades, property testing of Boolean functions has grown into a rich and active research area (see e.g.~the extensive treatment in \cite{Goldreich17book,BY22}). Emerging from the study of program testing and self-correction \cite{BLR93,GLRSW91}, the field has developed strong connections to many other topics in theoretical computer science including complexity theory,~Boolean function analysis, and computational learning theory, see e.g. \cite{DBLP:series/lncs/12050,odonnell-book,Ron:08testlearn}.

In the ``standard'' model of Boolean function property testing, the algorithm for testing an unknown Boolean function $f: \zo^n \to \zo$ has black-box oracle (also known as ``membership query'' and will be referred to as $\MQ(f)$ in this paper) access as its only mode of access to $f$.  The goal of a tester in this standard~model~is~to distinguish between the two alternatives that (a) $f$ has some property of interest, versus (b)~$f$~is ``far'' (in the sense of disagreeing on at least an $\eps$-fraction of all the $2^n$ inputs) from every function with the property. A wide range of properties (or equivalently, classes of Boolean functions) have been studied from this perspective, including but not limited to monotone Boolean functions \cite{GGLRS,FLNRRS,CS13a, CS13b,CST14, CDST15,BB16,CWX17stoc,KMS18}, unate Boolean functions \cite{KhotShinkar16,CS16,LW19,CW19,CWX17stoc,CWX17focs,CW19}, literals \cite{PRS02}, conjunctions \cite{PRS02,GoldreichRon20}, decision lists \cite{DLM+:07,Bshouty20}, linear threshold functions \cite{MORS10}, $s$-term monotone DNF \cite{PRS02,DLM+:07,CGM11,Bshouty20}, $s$-term DNF, size-$s$ decision trees, size-$s$ branching programs, size-$s$ Boolean formulas, size-$s$ Boolean circuits, functions with Fourier degree at most $d$, $s$-sparse polynomials over $\F_2$ \cite{DLM+:07,CGM11,Bshouty20}, and many others (see e.g. Figure~1 of \cite{Bshouty20}).

As the field of Boolean function property testing has grown more mature, an interest has developed in considering alternative variants of the ``standard'' property testing model discussed above.  One early variant was the \emph{distribution-free} property testing model \cite{GGR98,HalevyKushilevitz:03}, inspired by Valiant's distribution-free PAC learning model \cite{Valiant:84}. In this model the distance between two functions $f$ and $g$ is measured according to $\Pr_{\bx \sim {\cal D}}[f(\bx) \ne g(\bx)]$, where ${\cal D}$ is an unknown and arbitrary distribution over the domain $\zo^n$.  In addition to making black-box oracle queries, distribution-free testing algorithms can also draw i.i.d.~samples from the unknown distribution~${\cal D}$. The motivation behind this model was to align property testing more closely with the influential distribution-free PAC learning model and to capture scenarios of interest in which there is an underlying non-uniform distribution over the domain.  Unfortunately, many of the results that have been established for the distribution-free PAC model are negative in nature \cite{HalevyKushilevitz:05,GlasnerServedio:09toc,CX16,CP22,CFP24}, often showing that almost as many oracle calls are needed for distribution-free testing a class ${\cal C}$ as would be required for PAC learning ${\cal C}$.
(As shown in \cite{GGR98}, the PAC learning complexity of any class ${\cal C}$ gives an easy upper bound on its distribution-free testing complexity.)

\medskip

\noindent{\bf Relative-error testing.}
Very recent work \cite{CDHLNSY2024} introduced a new model of Boolean function property testing, called \emph{relative-error testing}.  The motivation for the model comes from the observation that the standard testing framework is not well suited for testing \emph{sparse} Boolean functions, i.e.,~functions $f: \zo^n \to \zo$ that have $|f^{-1}(1)| = p 2^n$ where $p$ is very small\footnote{Such functions can be of significant interest for practical as well as theoretical reasons; since Boolean functions correspond to classification rules, a sparse Boolean function corresponds to a classification rule which outputs 1 only on positive examples of some rare but potentially desirable phenomenon.} --- perhaps as small as $1/n^{\omega(1)}$ or even $2^{-\Theta(n)}$ ---
 simply because any such function will be $p$-close to the constant-$0$ function.
 (This is closely analogous to the well-known fact that the standard dense graph property testing model, where error $\epsilon$ corresponds to adding or deleting $\epsilon n^2$ potential edges in an $n$-vertex graph, is not well suited to testing
sparse graphs, because any sparse graph trivially has small distance from the empty graph on $n$ vertices.)

Motivated by the above considerations, the relative-error property testing model of \cite{CDHLNSY2024} defines the distance between the unknown $f: \zo^n \to \zo$ and another function $g$ to be
\[
\reldist(f,g) := {\frac {|f^{-1}(1) \hspace{0.05cm} \triangle \hspace{0.05cm} g^{-1}(1)|}{|f^{-1}(1)|}};
\]
so relative distance is measured \emph{at the scale of $|f^{-1}(1)|$} rather than at the ``absolute'' scale of~$2^n =$ $ |\zo^n|$. In addition to the black-box oracle $\MQ(f)$, the model also allows the testing algorithm~to sample i.i.d.~uniform random satisfying assignments of $f$ via a sampling oracle $\SAMP(f)$: each call to 
  $\SAMP(f)$ returns a uniformly random sample from $f^{-1}(1)$ (see \Cref{sec:background-prop-test} and \Cref{rem:relative-versus-dist-free} for a detailed description of the model and comparison with the distribution-free testing model).

The initial work \cite{CDHLNSY2024}
established 
some 
relations between the query complexity of standard-model testing and relative-error testing.  In more detail, letting $q_\mathrm{std}(\eps,n)$ and $q_{\mathrm{rel}}(\eps,n)$ denote the number of oracle calls needed to test a property $\calC$ in the standard model and the relative-error model, respectively, they showed (as Facts 8 and 9 in \cite{CDHLNSY2024}) that 
\begin{equation}\label{eq:lastone}
q_{\mathrm{std}}(\eps,n) \hspace{0.04cm}\lessapprox\hspace{0.04cm}  {q_{\mathrm{rel}}(\eps,n)}\big/{\eps}\ \quad\text{and}\ \quad
q_{\mathrm{rel}}(\eps,n) \hspace{0.04cm} \lessapprox \hspace{0.04cm}q_{\mathrm{std}}(p\eps,n).
\end{equation}
The first inequality shows that standard-model testing is never significantly harder than relative-error testing, but since the value of $p=|f^{-1}(1)|/2^n$ can be extremely small, the second inequality leaves open the possibility that relative-error testing can be much harder than standard-model testing.  Indeed, \cite{CDHLNSY2024} gave an example of an property which is constant-query testable in the standard model but requires \emph{exponentially} many queries for relative-error testing. However, the class of functions exhibiting this separation\footnote{The class consists of all functions $g$ for which  $|g^{-1}(1)|$ is a multiple of $2^{2n/3}$ (see Appendix~A of \cite{CDHLNSY2024}).} is arguably contrived. 
This leads to the following question, which is the motivation for our work:  

\vspace{0.1cm}
\begin{center}
\begin{tcolorbox}[width=13cm]\centering
What is the complexity of relative-error testing \\for \emph{natural and well-studied} properties of Boolean functions?
\end{tcolorbox}
\end{center}\vspace{0.1cm}

  Progress in this direction was made by \cite{CDHLNSY2024}, which give a relative-error tester for monotonicity which makes at most quadratically more queries than the best standard-model tester.\medskip %

\noindent
{\bf The problems we consider.} We attack the above broad question by   studying the relative-error testability of two basic and natural classes of Boolean functions, namely \emph{conjunctions} and \emph{decision lists}.\footnote{Recall that a \emph{conjunction} is an AND of literals $\ell_1 \wedge \cdots \wedge \ell_k$, where each $\ell_i$ is either some Boolean variable $x_j$ or its negation $\overline{x_j}$.
A \emph{decision list} is a sequence of nested ``if (condition) holds, then output (bit), else \dots'' rules, where each (condition) is a Boolean literal (see the beginning of \Cref{sec:DL} for the formal definition of decision lists).} 
Conjunctions and decision lists have been thoroughly studied in both the standard model of Boolean function property testing and the distribution-free model \cite{PRS02,DLM+:07,GlasnerServedio:09toc,CX16,GoldreichRon20,Bshouty20,CFP24}. For the standard model, unlike monotonicity, both problems belong to the regime of ``efficiently testable'' properties, where the number of queries needed depends on $\eps$ only but not the dimension $n$: 
\cite{PRS02,GoldreichRon20} gave $O(1/\eps)$-query algorithms for testing conjunctions~and \cite{Bshouty20} gave an $\tilde{O}(1/\eps)$-query algorithm for testing decision lists (improving on an earlier $\tilde{O}(1/\eps^2)$-query algorithm of \cite{DLM+:07}). In contrast,  distribution-free testing is provably much more difficult: improving on \cite{GlasnerServedio:09toc}, \cite{CX16} gave  an $\tilde{\Omega}(n^{1/3})$ lower bound for conjunctions (and a  matching upper bound);  \cite{CFP24} gave an $\tilde{\Omega}(n^{1/2})$ lower bound for decision lists (and~an $\tilde{O}(n^{11/12})$ upper bound).  

One intuitive reason for this disparity between the standard model and the distribution-free model is that, \emph{under the uniform distribution, (i) every conjunction is either $\eps$-close~to the constant-0 function or else is a $\log(1/\eps)$-junta}, and 
\emph{(ii) every decision list is $\eps$-close to a $\log(1/\eps)$-junta}.
\footnote{A $k$-junta is a function that only depends on at most $k$ of its variables.}  
In the distribution-free setting, though, these facts fail, and notably, they fail in the relative-error setting as well. %
Given this, it is \emph{a priori} unclear whether one should expect the relative-error testability of conjunctions and decision lists to be~more like the standard setting, or more like the distribution-free setting. {Resolving this is one of the motivations of the present work.}

\subsection{Our results and techniques}\label{sec:results-techniques}

{\bf Results.} As our main results, we show that both classes of conjunctions and decision lists are efficiently testable in the 
  relative-error model; indeed we give relative-error
testing algorithms that have the same query complexity as the best standard-model testing algorithms. 

For conjunctions we prove the following theorem:%

\begin{theorem}[Relative-error testing of conjunctions] \label{thm:conjunction}
There is a one-sided non-adaptive
algorithm which is an $\eps$-relative-error tester for conjunctions on $\{0,1\}^n$. The algorithm makes $O(1/\eps)$ calls to the sampling oracle $\SAMP(f)$ and the membership query oracle $\MQ(f)$. 
\end{theorem}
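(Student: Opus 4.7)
The plan is a very simple one-sided, non-adaptive tester built on the structural fact that $f$ is a conjunction if and only if $f^{-1}(1)$ is a subcube of $\{0,1\}^n$, which in turn is equivalent to $f^{-1}(1)$ being closed under the \emph{mixing} operation $\mathrm{mix}_{\mathbf{b}}(\mathbf{u},\mathbf{v})$ defined coordinate-wise by $\mathrm{mix}_{\mathbf{b}}(\mathbf{u},\mathbf{v})_i = \mathbf{u}_i$ if $\mathbf{b}_i=0$ and $\mathrm{mix}_{\mathbf{b}}(\mathbf{u},\mathbf{v})_i = \mathbf{v}_i$ if $\mathbf{b}_i=1$. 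The algorithm runs $T=\Theta(1/\eps)$ independent rounds; in each round it draws $\mathbf{u},\mathbf{v}\sim\SAMP(f)$ and $\mathbf{b}\in\{0,1\}^n$ uniformly, forms $\mathbf{w}=\mathrm{mix}_{\mathbf{b}}(\mathbf{u},\mathbf{v})$, and queries $\MQ(f)(\mathbf{w})$, rejecting if the answer is $0$ (and accepting at the end otherwise). The total query complexity is $O(1/\eps)$; since the $\MQ$ queries depend only on $\SAMP$ outputs and internal randomness and never on other $\MQ$ answers, the algorithm is non-adaptive in the standard sense.

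Completeness is immediate and gives the one-sided guarantee: if $f=\bigwedge_{i\in F}(x_i=c_i)$, then both $\mathbf{u}$ and $\mathbf{v}$ agree with $c_i$ on $F$, so $\mathbf{w}_i = c_i$ on $F$ and $\MQ(f)(\mathbf{w})=1$ with probability $1$.

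Soundness reduces to the following key lemma, which is the main technical content: \emph{if every conjunction $g$ satisfies $\reldist(f,g)>\eps$, then a single round rejects with probability at least $\Omega(\eps)$.} Standard amplification across the $T=\Theta(1/\eps)$ rounds then pushes the overall rejection probability above $2/3$. I would prove the lemma by contrapositive: assume a single round accepts with probability at least $1-\delta$ and explicitly construct a conjunction $g$ with $\reldist(f,g)\le O(\delta)$. A natural candidate uses the per-coordinate bias profile $p_i:=\Pr_{\mathbf{x}\sim \SAMP(f)}[\mathbf{x}_i=1]$: let $F=\{i:\min(p_i,1-p_i)\le \Theta(\delta)\}$ and let $g$ fix each $i\in F$ to its majority value. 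I would bound $\reldist(f,g)$ in two parts, namely (i) controlling $|f^{-1}(1)\setminus g^{-1}(1)|/|f^{-1}(1)|$ by arguing that any minority violation at a nearly-fixed coordinate produces a mix-test failure once paired with a typical majority sample, and (ii) controlling $|g^{-1}(1)\setminus f^{-1}(1)|/|f^{-1}(1)|$ by arguing that every point of $g^{-1}(1)$ appears as $\mathrm{mix}_{\mathbf{b}}(\mathbf{u},\mathbf{v})$ for some $\mathbf{u},\mathbf{v}\in f^{-1}(1)$ with noticeable probability, so that a missing point in $g^{-1}(1)\setminus f^{-1}(1)$ would itself witness a mix-test failure.

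The hardest part will be carrying out step (ii) tightly, and more broadly combining coordinate-wise bias information into a single global bound on $\reldist(f,g)$: individually small deviations at many coordinates can compound, and the mix distribution does not factor cleanly across coordinates conditional on $(\mathbf{u},\mathbf{v})\in f^{-1}(1)$. I expect to need either a Fourier-analytic expansion of the mix-acceptance probability in terms of the coefficients of $\mathbf{1}_{f^{-1}(1)}$, or a direct combinatorial coupling that relates the marginal biases $\{p_i\}$ to the joint mix behavior—and the correct choice of threshold in the definition of $F$ should fall out of this analysis.
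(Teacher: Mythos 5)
Your mixing test is a genuinely different algorithm from the paper's: the paper reduces to anti-monotone conjunctions via an XOR shift by a single sample $y\sim f^{-1}(1)$, and then runs two separate phases --- a linearity test on pairs $\bx,\by\sim\SAMP(f)$ (checking $f(\bx\oplus\by)=1$) and an anti-monotonicity test (checking $f(\by\oplus\bu)=1$ for $\by\preceq\bx$) --- while you collapse everything into a single mix-closure check. The completeness argument, the one-sidedness, the non-adaptivity, and the $O(1/\eps)$ query count are all fine, and the subcube-closure characterization via $\mathrm{mix}_{\mathbf{b}}$ is correct.

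The gap is in soundness, and you essentially concede it yourself. Your candidate witness conjunction $g$ is built entirely from per-coordinate marginals $p_i$, but marginals cannot see the global correlation structure of $f^{-1}(1)$, which is exactly what the mix test probes. Concretely, in step (i) you must control $|f^{-1}(1)\setminus g^{-1}(1)|/|f^{-1}(1)|=\Pr_{\bx\sim f^{-1}(1)}[\exists\,i\in F:\bx_i\ne c_i]$, and the naive union bound gives $|F|\cdot\Theta(\delta)$, which is not $O(\delta)$ unless you separately establish $|F|=O(1)$ --- you don't, and it need not hold. Step (ii), which you yourself call ``the hardest part,'' requires showing that every point of the hypothesized subcube $g^{-1}(1)$ is hit by the mix distribution with probability $\gtrsim 1/|g^{-1}(1)|$; you defer this to an unspecified Fourier expansion or combinatorial coupling. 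Neither step is proved, and together they constitute the entire soundness argument. The paper sidesteps both difficulties by defining a \emph{self-corrected} function $g_f$ via a majority vote over random shifts (\Cref{eq:g}), not via coordinate marginals, and then proving an amplification lemma (\Cref{lem:amplification}) showing agreement $1-O(\delta)$ at \emph{every} point plus a structural lemma (\Cref{lem:g is affine}) forcing $g_f^{-1}(1)$ to be an honest linear subspace whenever Phase~1 passes. If you want to rescue your mixing tester, the natural route is to replace your bias-threshold $g$ with a majority-vote definition of the form ``$g(a)=1$ iff the mix distribution puts enough mass on $a$'' and prove an amplification lemma for the mixing operation; but that is substantially harder than the paper's XOR case because mixing is not a group operation (it has no inverse), so the paper's counting and linearity arguments (\Cref{claim:PRS-15}, the witness set $\W(\cdot)$, the ``by linearity'' steps in \Cref{lem:g is affine} and \Cref{lem:amplification}) do not transfer directly.
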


Readers who are familiar with the literature on testing conjunctions in the standard setting   may have noticed that the algorithm above is one-sided and uses $O(1/\eps)$ queries, while only two-sided algorithms are known for testing conjunctions in the standard model
with $O(1/\eps)$ queries \cite{PRS02} and the current best 
one-sided algorithm needs $\tilde{O}(1/\eps)$ queries \cite{GoldreichRon20}. 
It was posed as an open~problem in \cite{Bshouty20}  whether 
  there exists a one-sided, $O(1/\eps)$-query tester for conjunctions in the standard model.
We resolve this question by giving such a tester as a corollary of \Cref{thm:conjunction}:

\begin{theorem}\label{thm:conjunction-standard}
    There is a one-sided, adaptive
algorithm which is an $\eps$-error tester for conjunctions  over $\{0,1\}^n$ in the standard setting. The algorithm makes $O(1/\eps)$ calls to $\MQ(f)$. 
\end{theorem}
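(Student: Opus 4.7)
The plan is to deduce Theorem~\ref{thm:conjunction-standard} from Theorem~\ref{thm:conjunction} via a reduction that is tailored to conjunctions: the relative-error tester will drive the standard-model tester after a preliminary ``density check'' and a simulation of $\SAMP(f)$ by rejection sampling. The crucial structural observation is that if $f:\{0,1\}^n \to \{0,1\}$ is standard $\eps$-far from every conjunction, then necessarily $p:=|f^{-1}(1)|/2^n > \eps/2$: otherwise $f$ would be Hamming-$\eps$-close to the single-satisfying-assignment conjunction $x_1 \wedge \cdots \wedge x_n$, whose $1$-density is $2^{-n} \le \eps/2$ (the complementary case $n < \log(2/\eps)$ is handled directly in $2^n \le 2/\eps$ queries by exhaustive evaluation).

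First, draw $N = \Theta(1/\eps)$ uniformly random inputs and evaluate $f$ on each via $\MQ(f)$. If fewer than $cN\eps$ of them are satisfying (for a small absolute constant $c>0$), accept. A Chernoff bound shows this rule fires with non-negligible probability only when $p < \eps/2$, and in that regime the structural observation guarantees $f$ cannot be standard $\eps$-far from every conjunction; meanwhile any conjunction is accepted by this rule with probability $1$, preserving one-sided error.

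If the count exceeds the threshold, then $p = \Theta(\hat p) = \Omega(\eps)$ with high probability, where $\hat p := T/N$ is the empirical density. Now invoke the relative-error tester of Theorem~\ref{thm:conjunction} with accuracy parameter $\eps' := \eps/(2\hat p) = \Theta(\eps/p)$, simulating each call to $\SAMP(f)$ by rejection sampling: repeatedly query $f$ on fresh uniform inputs and return the first satisfying one. The relative-error tester uses $O(1/\eps') = O(p/\eps)$ oracle calls; each simulated $\SAMP(f)$ call costs $O(1/p)$ expected $\MQ(f)$ queries, yielding $O(1/\eps)$ total queries in expectation. Capping the overall $\MQ(f)$ budget at $O(1/\eps)$ and accepting if exceeded preserves one-sided error and affects correctness only on a small-probability event (by Markov). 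For soundness, whenever $f$ is Hamming $\eps$-far from every conjunction $g$ we have $\reldist(f,g) = |f^{-1}(1) \triangle g^{-1}(1)|/|f^{-1}(1)| > \eps/p \ge 2\eps'$ uniformly in $g$, so Theorem~\ref{thm:conjunction} rejects with probability $\ge 2/3$.

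The main technical point will be coupling the three stochastic events---the Chernoff-based estimate of $p$, the Markov bound on the rejection-sampling budget, and the soundness guarantee inherited from Theorem~\ref{thm:conjunction}---so that their combined success probability stays above $2/3$ and the total number of $\MQ(f)$ queries remains $O(1/\eps)$ in the worst case, all without introducing any two-sided error. The delicate aspect is that the cap on the rejection-sampling budget must be reached only on an event of small probability when $f$ is a conjunction, so that one-sidedness is not disturbed, which requires the initial density check to give a reliable constant-factor lower bound on $p$ before entering the simulated $\SAMP(f)$ loop.
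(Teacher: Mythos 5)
Your proposal is correct and follows essentially the same route as the paper: a density estimate via $O(1/\eps)$ uniform queries (accept if too sparse), then simulate $\SAMP(f)$ by rejection sampling and invoke the relative-error tester of Theorem~\ref{thm:conjunction} with the rescaled parameter $\eps/(2\hat p)$, with soundness justified by the key observation that $\dist(f,\mathcal{C})>\eps$ implies $\reldist(f,\mathcal{C})>\eps/p_f$ (the paper's \Cref{thm: dist to rel-dist}); the paper packages this as a general reduction (\Cref{thm: transormation }) for any class containing the all-$0$ function whose query complexity is at least linear in $1/\eps$. One small simplification: since the all-$0$ function is itself a conjunction, your structural observation about $p>\eps/2$ is immediate without appealing to $x_1\wedge\cdots\wedge x_n$, and your worry about the budget cap disturbing one-sidedness is unfounded --- accepting on that branch can never cause a conjunction to be rejected, so one-sidedness is automatic.
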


Note that combining the first part of \Cref{eq:lastone} and 
  \Cref{thm:conjunction} will only lead to a one-sided upper bound
  of $O(1/\eps^2)$ for the standard model, where the extra $1/\eps$ factor comes from the natural fact that whenever $f$ has density at least $\eps$, it takes $1/\eps$ uniformly random samples to simulate one sample from $f^{-1}(1)$.
To prove \Cref{thm:conjunction-standard}, we obtain a more 
  efficient reduction from relative-error testing to 
  standard testing which, informally, states 
\begin{equation}\label{heeh6}
q_{\textrm{std}}(\eps,n)\lessapprox {1/\epsilon} + q_{\text{rel}}(\eps/2,n),
\end{equation} 
from which \Cref{thm:conjunction-standard} follows.
The reduction is presented in \Cref{sec:rel-error-to-standard}.

Our second, and more involved, main result obtains an efficient relative-error testing algorithm for decision lists:

\begin{theorem} [Relative-error testing of decision lists] \label{thm:DL}
There is an
algorithm which is an  $\eps$-relative-error tester for decision lists on $\{0,1\}^n$. 
The algorithm makes 
$\tilde{O}(1/\eps)$ calls to $\SAMP(f)$ and~$\MQ(f).$
\end{theorem}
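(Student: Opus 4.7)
The plan is to construct a candidate decision list $L^\star$ from positive samples drawn via $\SAMP(f)$ and then verify that $f$ is $\eps$-relative-error close to $L^\star$ using additional samples and $\MQ(f)$ queries. The chief conceptual difficulty is that the standard-model $\tilde{O}(1/\eps)$ testers of \cite{DLM+:07,Bshouty20} crucially use the fact that, under the uniform distribution, every decision list is $\eps$-close to a $\log(1/\eps)$-junta---so after a quick junta test one can simply learn the junta. This reduction fails completely in the relative-error setting when $f^{-1}(1)$ is very sparse: the constant-$0$ function is always $p$-close to $f$ in Hamming distance but is maximally far in relative-error distance. The sampling oracle $\SAMP(f)$ must therefore play a genuinely new role, giving direct access to uniform random elements of $f^{-1}(1)$ that the junta-approximation approach cannot exploit.

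\paragraph{Stages of the algorithm.} I would proceed in three stages. \emph{Stage 1 (Cluster into branches):} If $f = L$ is a decision list, $L^{-1}(1)$ decomposes into a disjoint union of ``branches,'' one per output-$1$ rule, where branch $i$ is the satisfying set of the conjunction $C_i = \overline{\ell_1}\wedge \cdots \wedge \overline{\ell_{i-1}}\wedge \ell_i$. Draw $\tilde O(1/\eps)$ samples from $\SAMP(f)$; two positive samples $\bx,\by$ lie in the same branch iff hybrid inputs that agree with both on the literals they share (and are arbitrary elsewhere) always evaluate to $1$, a condition that can be checked with a small number of $\MQ$ calls per pair and that partitions the samples into candidate clusters. \emph{Stage 2 (Learn each branch's conjunction):} For each cluster, form the tentative conjunction consisting of the literals fixed across all samples in the cluster, and then use $\MQ$ on randomized restrictions to prune literals that turn out not to be required for the branch. \emph{Stage 3 (Assemble and verify):} Check that the recovered branch conjunctions can be linearly ordered and nested into a valid decision list---exploiting the structural fact that for a genuine decision list the conjunctions $C_i$ are totally ordered by refinement of their negative-literal portions---and then use fresh $\SAMP(f)$ draws (to verify that essentially all of $f^{-1}(1)$ lands in some recovered branch, catching ``extra'' mass) together with $\MQ$ queries (to rule out unexpected $1$-inputs outside the branches) to confirm the candidate within relative-error $\eps$.

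\paragraph{Main obstacle.} The main challenge is controlling the query budget in Stages 1 and 2 in the presence of ``light'' branches: a decision list may have up to $n$ distinct output-$1$ rules, and while the total relative mass of branches of weight less than $\eps/\mathrm{polylog}(1/\eps)$ is $O(\eps)$, there can still be many such branches, each contributing essentially a single sample to our multiset. To attain $\tilde O(1/\eps)$ overall queries rather than $\tilde O(1/\eps)$ per branch, the clustering and pruning must be amortized across branches; I anticipate showing that ``heavy'' branches are identified almost entirely from samples with essentially no $\MQ$ calls per sample, while each ``light'' branch is essentially pinned down by the unique sample that witnesses it, so that a careful accounting keeps the total $\MQ$ budget at $\tilde O(1/\eps)$. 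A secondary difficulty is the verification of Stage 3: a decision list that is slightly corrupted may still produce branch clusters that look locally consistent, so the verification step must explicitly distinguish true decision-list structure from ``approximate'' structure by leveraging both oracles together, and getting the $\mathrm{polylog}$ factor here to match Bshouty's bound will be the most delicate part of the analysis.
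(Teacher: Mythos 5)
Your proposal takes a fundamentally different route from the paper. The paper uses the decomposition $f = C \wedge L$ (head conjunction and pivot-plus-tail decision list), observes that the head variables are \emph{unanimous} across $f^{-1}(1)$, and exploits this to reduce the problem: restrict to the unanimous coordinates $\bU$ to get a \emph{dense} restriction $f{\upharpoonleft_{\bu}}$ (on which the standard-model tester of Bshouty applies directly), and separately test that the auxiliary Boolean function $\bGamma(\alpha) = \mathbf{1}\bigl[f{\upharpoonleft_\alpha}\ \text{is dense}\bigr]$ on $\{0,1\}^{\bU}$ is close to a conjunction, using the robust version of the relative-error conjunction tester with a \emph{simulated} oracle for $\bGamma$. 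Your proposal instead attempts to cluster samples into branch-conjunctions, learn each branch, assemble them into a decision list, and verify.

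Your route has a concrete gap that I do not think can be patched with the ideas sketched. First, the Stage 1 clustering criterion (``hybrid inputs \ldots always evaluate to $1$'') is not checkable with finitely many $\MQ$ calls, and replacing ``always'' with a sample-based test gives only a probabilistic relation whose soundness across up to $\Theta(1/\eps)$ clusters is not analyzed. Second, and more seriously, the amortization in Stage 2 is asserted rather than argued, and the assertion that a light branch is ``essentially pinned down by the unique sample that witnesses it'' is false: a single positive point $z$ determines the \emph{values} $z_i$ of all $n$ coordinates but gives no information about which coordinates actually appear as literals in the branch conjunction $C_i$. Identifying even one relevant literal from among $n$ candidates requires $\Omega(\log n)$ queries in general, and a decision list may have $\Theta(n)$ output-$1$ rules, so the total $\MQ$ budget is not obviously $\tilde O(1/\eps)$ — indeed it threatens to pick up a dependence on $n$ that the theorem statement forbids. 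This is exactly the difficulty the paper sidesteps by never attempting to learn individual branches: after fixing the unanimous coordinates, the \emph{whole tail} is handled at once by the dense standard-model tester, and the head is handled at once by the conjunction tester on $\bGamma$. Finally, your soundness story in Stage 3 is incomplete: when $f$ is $\eps$-far from decision lists there is no guarantee the clustering produces anything locally consistent, and you would need a separate argument that an arbitrary function failing the ``far'' condition is caught — the paper spends considerable effort (Lemmas~\ref{thm: large premimage fu}, \ref{thm: f close to g (DL case)}, \ref{thm: f would be close to DL}, \ref{thm: gamma is rejected whp}) making this half of the argument go through via the intermediate function $g$ and the approximate triangle inequality, and nothing analogous appears in your sketch. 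As written, the proposal is a plausible research direction but is not a proof.
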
 

We remark that the algorithm of \Cref{thm:DL} crucially uses the relative-error testing algorithm for conjunctions, i.e.~\Cref{thm:conjunction}, as a subroutine; this is explained more below and a more detailed overview of the algorithm can be found in \Cref{sec:DL-algorithm}.
We also note that the algorithm of \Cref{thm:DL} is adaptive. However, an easy variant of our algorithm and analysis yields a non-adaptive $\epsilon$-relative-error tester for decision lists %
that makes
$O(1/\eps)$ calls to $\SAMP(f)$ and $\tilde{O}(1/\eps^2)$ calls to $\MQ(f)$.

Finally, in \Cref{appendix:lower}, we give a  lower bound of $\Omega(1/\eps)$ for any two-sided, adaptive algorithm for~the relative-error testing of both conjunctions and decision lists, thus showing that  \Cref{thm:conjunction} is optimal and \Cref{thm:DL} is nearly optimal.\medskip\vspace{0.1cm}

{
\noindent {\bf Techniques.}
We give a brief and high-level overview of our techniques, referring the reader to \Cref{sec:conjunction-intuition} and \Cref{sec:DL-algorithm} for more details. Our relative-error testing algorithm for conjunctions begins with a simple reduction to the problem of relative-error testing whether an unknown $f$ is an anti-monotone conjunction\footnote{A conjunction is an anti-monotone conjunction if it only contains negated literals.}.  We solve the relative-error anti-monotone conjunction testing problem with an algorithm which, roughly speaking, first checks whether $f^{-1}(1)$ is close to a linear subspace (as would be the case if $f$ were an anti-monotone conjunction), and then checks whether $f$ is close to anti-monotone (in a suitable sense to meet our requirements).  This is similar, at a high level, to the approach from \cite{PRS02}, but as discussed in \Cref{sec:conjunction-intuition} there are various technical differences; in particular we are able to give a simpler analysis, even for the more general relative-error setting, which achieves one-sided rather than two-sided error.

For decision lists, our starting point is the simple observation that when $f$ is a sparse decision list then $f$ can be written as $C \wedge L$ where $C$ is a conjunction and $L$ is a decision list (intuitively, since $f$ is sparse, a large prefix of the list of output bits of the decision list must all be 0, which corresponds to a conjunction $C$); this decomposition plays an important role for us.  Another simple but useful observation is that all of the variables in the conjunction $C$ must be ``unanimous,'' in the sense that each one of them must always take the same value across all of the positive examples.  Our algorithm draws an initial sample of positive examples from $\SAMP(f)$ and then works with a restriction of $f$ which fixes the variables which were seen to be ``unanimous'' in that initial sample.  Intuitively, if $f$ is a decision list, then it should be the case that the resulting restricted function is close to a \emph{non-sparse} decision list, so we can check that the restricted function is non-sparse, and, if the check passes, apply a standard-model decision list tester to the restricted function.  

This gives some intuition for how to design an algorithm that accepts in the yes-case, but the fact that we must reject all functions that are relative-error far from decision lists leads to significant additional complications. To deal with this, we need to incorporate additional checks %
into the test, and we give an analysis showing that if a function $f$ passes all of these checks then it must be close (in a relative-error sense) to a function of the form $C' \wedge L'$ where $C'$ is relative-error close to a conjunction and $L'$ is relative-error close to a decision list; this lets us conclude that $f$ is relative-error close to a decision list, as desired.  These checks involve applying our relative-error conjunction testing algorithm (which we show to have useful robustness properties against being run with a slightly imperfect sampling oracle --- this robustness is crucial for our analysis); we refer the reader to \Cref{sec:DL-algorithm} for more details.\vspace{0.1cm}
}\medskip

{Looking ahead, while our results resolve the relative-error testability of conjunctions and decision lists, there are many open questions remaining about how relative-error testing compares against standard-error testing for other natural and well-studied Boolean 
function classes, such as $s$-term DNF formulas, $k$-juntas, subclasses of $k$-juntas, unate functions, and beyond.  All of these classes have been intensively studied in the standard testing model, see e.g.~\cite{DLM+:07,CGM11,KhotShinkar16,CS16,LW19,CW19,CWX17stoc,CWX17focs,CW19,Bshouty20}; it is an interesting direction for future work to establish either efficient relative-error testing algorithms, or relative-error testing lower bounds, for these classes.}\vspace{0.1cm}

\noindent{\bf Organization.} \Cref{sec:not-def} contains preliminary notation and definitions, and the general reduction from relative-error testing to standard testing is presented in \Cref{sec:rel-error-to-standard}. We prove \Cref{thm:conjunction} in \Cref{sec:conjunction}, and we prove  \Cref{thm:DL} in \Cref{sec:DL}.

\section{Notation, Definitions, and Preliminaries} \label{sec:not-def}

\subsection{Notation} \label{sec:notation}

For $x,y \in \{0,1\}^n$, we write $y \preceq x$ to indicate that $y_i \leq x_i$ for all $i \in [n]$. 
We write $x \oplus y$ to denote the string in $\{0,1\}^n$ whose $i$-th coordinate is the XOR of $x_i$ and $y_i$.  
Given $S \subseteq [n]$ and $z \in \{0,1\}^n$, we denote by $z_S$ the string in $\{0,1\}^S$ that agrees with $z$ on every coordinate in $S$. 

We use standard notation for  restrictions of functions.  For $f: \zo^n \to \zo$ and $u\in \{0,1\}^S$ for some $S\subseteq [n]$, the function $f{\upharpoonleft_{u}}: \zo^{[n] \setminus S}\rightarrow \{0,1\}$ is defined as 
the function 
\[
f{\upharpoonleft_{u}}(x)=f(z),
\quad \text{where} \quad
z_i = 
\begin{cases}
    u_i & \text{if~}i \in S,\\
    x_i & \text{if~}i \in [n]\setminus S.
\end{cases}
\]
We write ${\cal U}_S$ to denote the uniform distribution over a finite set $S$, and we write ${\cal U}_n$ to denote the uniform distribution over $\zo^n$; we usually skip the subscript $n$ when it is clear from context.

When we refer to vector spaces/subspaces, we will always be talking about the vector space $\F_2^n$, and we will frequently identify $\{0,1\}^n$ with the vector space $\F_2^n$.

Finally, we write $\textsf{DL}$ to denote the class of all decision lists over $\zo^n$, and $\textsf{Conj}$ to denote the class of all conjunctions over $\zo^n$.

\subsection{Background on property testing and relative-error testing} 
\label{sec:background-prop-test}

We recall the basics of the standard property testing model.  In the {\it standard} testing model for a class ${\cal C}$ of $n$-variable Boolean functions, the testing algorithm is given oracle access $\MQ(f)$ to an unknown and arbitrary function $f: \zo^n \to \zo$.
The algorithm must output ``yes'' with high probability (say at least 2/3; this can be amplified using standard techniques) if $f \in {\cal C}$, and must output ``no'' with high probability (again, say at least 2/3) if $\dist(f,{\cal C}) \geq \eps$, where
$$\dist(f,{\cal C}):=\min_{g \in {\cal C}}\hspace{0.05cm}\dist(f,g)\ \quad\text{and}\ \quad 
\dist(f,g) := {\frac {|f^{-1}(1) \ \triangle \ g^{-1}(1)|}{2^n}}.
$$

As defined in \cite{CDHLNSY2024}, a \emph{relative-error} testing algorithm for ${\cal C}$ has oracle access to $\MQ(f)$ and also has access to a $\SAMP(f)$ oracle which, when called, returns a uniform random element $\bx \sim f^{-1}(1)$.
A relative-error testing algorithm for ${\cal C}$ must output ``yes'' with high probability (say at least 2/3; again this can be easily amplified) if $f \in {\cal C}$ and must output ``no'' with high probability (again, say at least 2/3) if $\reldist(f,{\cal C}) \geq \eps$, where
$$\reldist(f,{\cal C}):=\min_{g \in {\cal C}}\hspace{0.05cm}\reldist(f,g)\ \quad\text{and}\ \quad 
\reldist(f,g) := {\frac {|f^{-1}(1) \ \triangle \ g^{-1}(1)|}{|f^{-1}(1)|}}.
$$

Finally, recall that a property testing algorithm for a class of functions ${\cal C}$ (in either the standard model or the relative-error model) is said to be \emph{one-sided} if it returns ``yes'' with probability 1 when the unknown function belongs to ${\cal C}$.

\begin{remark} [Relative-error testing versus distribution-free testing] \label{rem:relative-versus-dist-free} %
Like the distribution-free model, the relative-error model gives the testing algorithm access to both black-box oracle calls to $f$ and i.i.d.~samples from a distribution ${\cal D}$. 
In distribution-free testing ${\cal D}$ is unknown and arbitrary, whereas in relative-error testing ${\cal D}$ is unknown but is guaranteed to be the uniform distribution over $f^{-1}(1)$. 
However, we remark that relative-error testing is \emph{not} a special case of distribution-free testing, because the distance between functions is measured  differently between the two settings. In the distribution-free model, the distance $\Pr_{\bx \sim {\cal D}}[f(\bx) \neq g(\bx)]$ between two functions $f,g$ is measured vis-a-vis the distribution ${\cal D}$ that the testing algorithm can sample from. In contrast, in the relative-error setting the distribution ${\cal D}$ that the tester can sample from is ${\cal D}={\cal U}_{f^{-1}(1)}$, but $\reldist(f,g)$ is \emph{not} equal to $\smash{\Pr_{\bx \sim {\cal U}_{f^{-1}(1)}}[f(\bx)\neq g(\bx)]}$; indeed, a function $g$ can have large relative distance from $f$ because $g$ disagrees with $f$ on many points outside of $f^{-1}(1)$ (which are points that have zero weight under the distribution ${\cal D}$).
\end{remark}

We will need the following simple ``approximate triangle inequality'' for relative distance:

\begin{lemma}[Approximate triangle inequality for relative distance]\label{thm: approx triangle ineq}
    Let $f,g,h:\{0,1\}^n \to \{0,1\}$ be such that $\reldist(f,g)\leq \epsilon$ and $\reldist(g,h)\leq \epsilon'$. Then $\reldist(f,h) \leq \epsilon+(1+\epsilon)\epsilon'$.
\end{lemma}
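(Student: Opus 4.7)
The plan is to translate everything into set-theoretic language and then combine the standard (exact) triangle inequality for symmetric difference with a simple bound relating $|g^{-1}(1)|$ to $|f^{-1}(1)|$. Write $F=f^{-1}(1)$, $G=g^{-1}(1)$, $H=h^{-1}(1)$, so the hypotheses read $|F\triangle G|\le \eps|F|$ and $|G\triangle H|\le \eps'|G|$, and the goal is $|F\triangle H|\le (\eps+(1+\eps)\eps')\,|F|$.

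First, I would invoke the fact that the symmetric difference $\triangle$ satisfies the exact triangle inequality on finite sets (since $F\triangle H\subseteq (F\triangle G)\cup(G\triangle H)$), which gives
\[
|F\triangle H|\ \le\ |F\triangle G|+|G\triangle H|\ \le\ \eps|F|+\eps'|G|.
\]
The right-hand side is already almost of the desired form; the only issue is that the second term is measured against $|G|$ rather than $|F|$. Second, I would bound $|G|$ in terms of $|F|$ using the first hypothesis: since $G\setminus F\subseteq F\triangle G$, we have
\[
|G|\ \le\ |F|+|G\setminus F|\ \le\ |F|+|F\triangle G|\ \le\ (1+\eps)|F|.
\]
Plugging this into the previous display yields $|F\triangle H|\le \eps|F|+\eps'(1+\eps)|F|=(\eps+(1+\eps)\eps')|F|$, and dividing through by $|F|$ (which we may assume is nonzero, else $\reldist$ is undefined) gives the claim.

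There is really no main obstacle here: the only subtlety worth mentioning is keeping the asymmetry of $\reldist$ straight, namely that the denominator is always $|f^{-1}(1)|$ (the first argument), so when we pass from $\reldist(g,h)$ to a statement about $|G\triangle H|$ we pick up a factor of $|G|$ rather than $|F|$, which is precisely what forces the $(1+\eps)$ slack in the conclusion. Everything else is immediate from the metric property of symmetric difference.
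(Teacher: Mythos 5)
Your proof is correct and is essentially the same as the paper's: both use the triangle inequality for symmetric differences (equivalently, $\{f\neq h\}\subseteq\{f\neq g\}\cup\{g\neq h\}$) together with the observation that $\reldist(f,g)\le\eps$ forces $|g^{-1}(1)|\le(1+\eps)|f^{-1}(1)|$. The only cosmetic difference is that you phrase things in set-theoretic language with $F,G,H$ while the paper writes the same inequalities in terms of counting points $z$ where the functions disagree.
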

\begin{proof}
    Since $\reldist(f,g)\leq \epsilon$ we have that $${\frac {|f^{-1}(1) \ \triangle \ g^{-1}(1)|}{|f^{-1}(1)|}} \leq \eps,$$ which means we must have $|g^{-1}(1)| \leq (1+\epsilon)|f^{-1}(1)|$. Hence
    \begin{align*}
        \reldist(f,h) &= {\frac {|f^{-1}(1) \ \triangle \ h^{-1}(1)|}{|f^{-1}(1)|}}\\
        &=\frac{1}{|f^{-1}(1)|}\left| \{ z \::\: f(z) \neq h(z) \}\right| \\
        & \leq \frac{1}{|f^{-1}(1)|}\left| \{ z \::\: f(z) \neq g(z) \lor g(z) \neq h(z) \}\right| \\
        & \leq \frac{1}{|f^{-1}(1)|}\left| \{ z \::\: f(z) \neq g(z) \}\right| + \frac{1}{|f^{-1}(1)|} \left| \{ g(z) \neq h(z) \}\right| \\
        & \leq  \reldist(f,g) + (1+\epsilon) \frac{1}{|g^{-1}(1)|} \left| \{ g(z) \neq h(z) \}\right| \\
        &= \reldist(f,g) + (1+\epsilon) \reldist(g,h) \\
        &\leq \epsilon+(1+{\epsilon})\epsilon'. \qedhere
    \end{align*}
\end{proof}

 Finally, we recall two results for testing decision lists in the standard model that we will use:

\begin{theorem}[\cite{Bshouty20}]\label{thm:Bshouty20}
    There is an adaptive 
algorithm $\mathcal{A}$ for testing decision lists in the standard model with query complexity $\smash{\tilde{O}(1/\eps)}$. In particular, if $f$ is a decision list then $\mathcal{A}$ accepts~with~probability at least $19/20$; if $\dist(f,\sf{DL})\ge\epsilon$ then $\calA$ rejects with probability at least $19/20$. 
\end{theorem}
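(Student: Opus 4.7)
The plan is to reduce decision-list testing to junta testing combined with exact reconstruction of a small truth table, exploiting the fact that under the uniform distribution every decision list is close to one of logarithmic length. The first step is the observation that for any decision list $L$ of arbitrary length, truncating $L$ after the first $k := \lceil \log_2(4/\eps)\rceil$ rules and replacing the tail by the default bit yields a function that disagrees with $L$ only on inputs whose evaluation reaches past the $k$-th rule; since each rule in the list fires with probability $1/2$ under the uniform distribution, the probability of reaching past rule $k$ is exactly $2^{-k}\le \eps/4$. Thus every decision list is $(\eps/4)$-close (in Hamming distance) to a $k$-junta where $k = O(\log(1/\eps))$.

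Given this, the algorithm $\calA$ proceeds in three stages. Stage~1 runs a standard adaptive $k$-junta tester on $f$ with distance parameter $\eps/8$; Blais-style junta testers do this with $\tilde O(k/\eps)=\tilde O(1/\eps)$ membership queries and will accept, with high probability, whenever $f$ is $(\eps/4)$-close to a $k$-junta. Stage~2, conditional on Stage~1 accepting, identifies the at-most-$k$ relevant coordinates via the standard binary-search-over-random-restrictions procedure, which costs another $\tilde O(k)$ queries. Stage~3 draws $m = \Theta(2^k \log 2^k)=\tilde O(1/\eps)$ uniform random membership queries and uses them to fill in a $2^k$-cell truth table indexed by assignments to the identified relevant variables; by coupon-collector together with a Chernoff bound, with high probability every one of the $2^k = O(1/\eps)$ cells receives $\Omega(\log(1/\eps))$ samples, so we can extract an explicit junta $g:\zo^k\to\zo$ whose value on each cell is the majority value of $f$ observed on that cell. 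Finally, $\calA$ accepts iff $g$ is expressible as a decision list; this is a purely combinatorial check on a truth table of size $O(1/\eps)$.

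For completeness, if $f\in \textsf{DL}$ then $f$ is $(\eps/4)$-close to a $k$-junta that is itself a decision list, so Stages~1--3 succeed and the reconstructed $g$ is a decision list with high probability. For soundness, if $\calA$ accepts then the Stage~3 estimates certify that $f$ is $O(\eps)$-close to the specific junta $g$ reconstructed from the samples, and $g$ is a decision list, so $f$ is $\eps$-close to $\textsf{DL}$ by the triangle inequality. I expect the main obstacle to be Stage~3: when $f$ is only \emph{close to} a $k$-junta rather than exactly a $k$-junta, the individual cells of the partition need not be constant, so the reconstructed $g$ can disagree with $f$'s ``intended'' nearest junta unless the sample counts and thresholds are chosen carefully. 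Controlling this requires a union bound over all $2^k$ cells paired with per-cell Chernoff bounds to estimate the majority value accurately, and choosing the Stage~1 and Stage~3 error budgets so that their sum stays well below $\eps$; it is this union bound over $2^k = O(1/\eps)$ cells that pushes the final query complexity to $\tilde O(1/\eps)$ rather than the $\tilde O(\log(1/\eps))$ that the junta test in isolation would demand.
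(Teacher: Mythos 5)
This is a cited external result (Bshouty 2020); the paper states it as a black box and gives no proof of its own, so I can only assess your proposal on its own terms. Your high-level blueprint---every decision list is $\eps$-close to an $O(\log(1/\eps))$-junta obtained by truncation, so junta-test and then reconstruct and check a small truth table---is the right intuition and matches the ``testing by implicit learning'' paradigm of \cite{DLM+:07} that Bshouty sharpens. But Stage~2 as written does not achieve the claimed query complexity. Binary-search-over-random-restrictions isolates one relevant coordinate per search at a cost of $\Omega(\log n)$ queries, and indeed any procedure that outputs the identity of a coordinate out of $n$ needs $\log_2 n$ bits of information. The cited bound $\tilde{O}(1/\eps)$ has \emph{no} dependence on $n$, and this is essential: the theorem is used as a subroutine in the paper's relative-error decision-list tester, which is claimed to make $\tilde{O}(1/\eps)$ oracle calls overall. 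The known proofs sidestep coordinate isolation entirely: they randomly partition $[n]$ into $\mathrm{poly}(1/\eps)$ blocks, identify the $\le k$ influential blocks, and then use correlated restrictions within each influential block so that a membership query implicitly reveals the hidden relevant variable's \emph{value} without ever determining its \emph{name}. You would need to replace Stage~2 with that machinery for the query bound to go through.

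A secondary gap, which you flag but leave open, is Stage~3's soundness. The junta test certifies only that $f$ is $\eps/8$-close to \emph{some} $k$-junta, not to the particular $g$ you reconstruct from per-cell majorities. To accept safely you must additionally verify consistency between $f$ and $g$ directly (e.g.\ draw $O(1/\eps)$ further uniform inputs and check that $f(\bx)=g(\bx_{\text{junta coords}})$ on at least a $1-O(\eps)$ fraction), and only then invoke the triangle inequality $\dist(f,\textsf{DL}) \le \dist(f,g)+\dist(g,\textsf{DL})$. Without that check, an $f$ that is far from $\textsf{DL}$ but close to a non-decision-list $k$-junta could, through estimation noise on borderline cells, yield a reconstructed $g$ that happens to be a decision list, and your algorithm would falsely accept. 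The extra consistency test is cheap, but it must appear.
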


\begin{theorem}[\cite{DLM+:07}]\label{thm:DLM+:07}
    There is a non-adaptive 
algorithm $\mathcal{B}$ for testing decision lists in the standard model with query complexity $\smash{\tilde{O}(1/\eps^2)}$. In particular, if $f$ is a decision list then $\mathcal{B}$ accepts with probability at least $19/20$; if $\dist(f,\sf{DL})\ge\epsilon$ then $\calB$ rejects with probability at least $19/20$. 
\end{theorem}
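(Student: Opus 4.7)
The plan is to follow the ``testing by implicit learning'' framework, whose foundation is a structural fact about decision lists under the uniform distribution. Reading the rules of any decision list $L: \{0,1\}^n \to \{0,1\}$ in order, each successive rule is triggered with conditional probability $1/2$ among the inputs that have not yet been decided, so the probability that a uniformly random input reaches rule number $k := \lceil \log_2(4/\eps) \rceil + 1$ is at most $\eps/4$. Hence $L$ is $(\eps/4)$-close to the decision list $L'$ obtained by truncating $L$ after $k-1$ rules, and $L'$ depends on at most $k = O(\log(1/\eps))$ variables. It therefore suffices, up to a loss of $\eps/4$ in the distance parameter, to test membership in the class $\mathcal{J}_k$ of $k$-juntas whose $k$-variable truth tables compute decision lists.

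The tester $\calB$ would then proceed in three non-adaptive phases. \emph{(i) Block identification:} partition $[n]$ randomly into $r = \Theta(k^2)$ blocks $B_1, \dots, B_r$, so that with high probability each of the $k$ relevant variables of the closest $L' \in \mathcal{J}_k$ sits alone in its block. For each $B_j$, draw $\tilde O(1/\eps)$ independent pairs $(\bx,\bx')$ with $\bx \sim {\cal U}_n$ and $\bx'$ obtained by resampling the coordinates of $\bx$ inside $B_j$, and flag $B_j$ as relevant whenever the fraction of pairs with $f(\bx) \neq f(\bx')$ exceeds an appropriate $\eps$-scaled threshold. \emph{(ii) Variable pinpointing and truth-table learning:} within each flagged block, identify the unique relevant variable via $O(\log n)$ pre-chosen sub-partition pair tests, and then draw $\tilde{O}(1/\eps^2)$ additional uniform samples so that each of the $2^k = O(1/\eps)$ cells of the relevant-variable assignment receives enough samples to recover, by majority vote, the value of the candidate junta on that cell. \emph{(iii) Verification:} enumerate the $2^{O(k \log k)} = \mathrm{poly}(1/\eps)$ decision lists on $k$ variables, and accept iff at least one of them agrees with the learned truth table on a $(1-\eps/4)$-fraction of cells. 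All queries are pre-committed, so $\calB$ is non-adaptive, and the total complexity is dominated by phase (ii) at $\tilde{O}(1/\eps^2)$.

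The main obstacle will be the soundness analysis, i.e., showing that if $\dist(f,\textsf{DL}) \geq \eps$ then $\calB$ rejects with probability at least $19/20$. The cleanest route is the contrapositive through a chain of closeness inequalities: passing phase (i) forces $f$ to be $(\eps/8)$-close to a $k$-junta $g$ on the flagged variables (via the standard ``small block-influences imply closeness to a junta'' lemma); passing phase (iii) forces the learned truth table, and therefore $g$, to be $(\eps/4)$-close to an honest decision list $D$; combined with the initial truncation budget, $\dist(f,D) < \eps$ would follow. The delicate part is tuning the thresholds and sample counts so that the errors contributed by all three phases, together with the Chernoff and union bounds taken over the $r$ blocks, the $O(\log n)$ sub-partitions used for variable identification, and the $2^k$ truth-table cells, all remain within their respective budgets simultaneously; this is what accounts for the polylogarithmic factors hidden inside $\tilde{O}(1/\eps^2)$.
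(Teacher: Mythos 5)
The theorem you are addressing is stated in the paper purely as a citation to \cite{DLM+:07}; the paper offers no proof of its own, so what follows is an assessment of your reconstruction against what \cite{DLM+:07} actually does. Your sketch correctly recovers the two main ingredients of the ``testing by implicit learning'' argument: the truncation lemma showing every decision list is $(\eps/4)$-close under the uniform distribution to one depending on $k=O(\log(1/\eps))$ variables, and the random-block-partition framework for locating influential blocks.

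Phase~(ii), however, contains a genuine gap. You pin down the unique relevant variable within each flagged block via $O(\log n)$ sub-partition tests, and this injects an $\Omega(\log n)$ factor into the query count. That is incompatible with the stated $\tilde O(1/\eps^2)$ bound: as the paper's surrounding discussion makes explicit (the comparison to Bshouty's $\tilde O(1/\eps)$ improvement), $\tilde O$ here suppresses only polylogarithmic factors in $1/\eps$, so the complexity must be independent of $n$. Moreover, the $n$-dependence is inherent to the step as you describe it --- distinguishing one variable out of $n$ candidates requires $\Omega(\log n)$ bits of information, so no tuning of thresholds, sample sizes, or union bounds will remove it. The defining feature of the \cite{DLM+:07} implicit-learning technique is precisely that it \emph{never} identifies the relevant variables: the truth table of the hidden junta is learned indexed by \emph{blocks} rather than by variables, using correlated query strings (for instance, a common uniform base string with entire flagged blocks flipped according to each $\gamma\in\{0,1\}^k$) whose block-level pattern is known by construction, so the individual relevant-variable bits never need to be determined. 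Replacing your phase~(ii) with that device, and letting the enumeration in phase~(iii) absorb the resulting per-block polarity ambiguities, would bring your outline in line with the actual argument and restore the $n$-free query bound.
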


\subsection{From relative-error testing to standard testing} \label{sec:rel-error-to-standard}

In this section, we use our relative-error tester for conjunctions (see \Cref{thm:conjunction}) to give an optimal one-sided tester for this class in the standard model. Fact~8 of \cite{CDHLNSY2024} gives a way of transforming any relative-error tester for a property $\mathcal{C}$ into a tester in the standard model for $\mathcal{C}$. However, this transformation incurs a multiplicative $1/\epsilon$ blowup in the number of oracle calls made by the tester. We show that if the all-$0$ function is in $\mathcal{C}$ and the number of oracle calls made by the relative-error tester grows at least linearly in $1/\epsilon$, we can remove this blowup.
\begin{lemma}\label{thm: transormation }
Let $\mathcal{C}$ be a class of Boolean functions over $\zo^n$ that contains  the all-$0$ function. Suppose that there is a relative-error $\epsilon$-testing algorithm $\mathcal{A}$ for $\mathcal{C}$ that makes $q(\epsilon, n)$ many calls to $\MQ(f)$ and $\SAMP(f)$ such that $q( \epsilon/\delta,n) \leq \delta q(\epsilon,n)$ for every $0<\delta<1$. 
Then there is an adaptive standard-model $\epsilon$-testing algorithm $\mathcal{B}$ for $\mathcal{C}$ which makes at most $O(1/\epsilon+q(\epsilon/2, n))$ calls to $\MQ(f)$. Furthermore, if $\mathcal{A}$ is one-sided, so is $\mathcal{B}$. 
\end{lemma}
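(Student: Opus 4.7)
The plan is to design $\mathcal{B}$ in two stages: a short initial density-estimation phase to separate the ``sparse'' regime (in which $f$ is automatically standard-distance-close to the all-$0$ function, which lies in $\mathcal{C}$) from the ``dense'' regime (in which rejection sampling from $\mathcal{U}_n$ becomes cheap), followed by a single invocation of $\mathcal{A}$ in the dense regime with a carefully tuned relative-error parameter that exploits the assumed linear-in-$1/\epsilon$ lower bound on $q$ encoded in the hypothesis $q(\epsilon/\delta,n) \le \delta\, q(\epsilon,n)$.

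Concretely, $\mathcal{B}$ first draws $m_1 = \Theta(1/\epsilon)$ uniform samples from $\{0,1\}^n$, queries $\MQ(f)$ on each, and computes the empirical fraction $\hat{p}$ of positive samples, so that $|\hat{p}-p| \le \epsilon/16$ with high probability, where $p := |f^{-1}(1)|/2^n$. If $\hat{p} < \epsilon/2$, $\mathcal{B}$ accepts outright. Otherwise, with high probability $p \ge \epsilon/4$, and $\mathcal{B}$ simulates $\mathcal{A}$ on $f$ with parameter $\epsilon_r := \epsilon/(2\hat{p})$, answering each $\SAMP(f)$ query of $\mathcal{A}$ by drawing fresh uniform samples and returning the first whose value under $\MQ(f)$ is $1$ (a standard Chernoff pre-sampling argument caps the total number of uniform draws at $O(q(\epsilon_r,n)/\hat{p})$ with sufficiently small failure probability).

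For correctness, one-sidedness is preserved: when $f \in \mathcal{C}$ we either accept in the first phase or run $\mathcal{A}$, which by one-sidedness never rejects members of $\mathcal{C}$. If $\dist(f,\mathcal{C}) \ge \epsilon$ then, since the all-$0$ function lies in $\mathcal{C}$, we have $p = \dist(f,0) \ge \epsilon$, so with high probability $\hat{p} > \epsilon/2$ and $\mathcal{B}$ reaches the second phase; there $\reldist(f,\mathcal{C}) = \dist(f,\mathcal{C})/p \ge \epsilon/p \ge \epsilon/(2\hat{p}) = \epsilon_r$ (using $\hat{p} \le 9p/8$), so $\mathcal{A}$ rejects with high probability.

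For the query complexity, the first phase uses $O(1/\epsilon)$ membership queries. In the second phase, $\mathcal{A}$ makes at most $q(\epsilon_r,n)$ oracle calls and each simulated $\SAMP$ costs $O(1/\hat{p})$ membership queries. The heart of the argument is that setting $\delta := 2\hat{p}$ and applying the hypothesis $q(\epsilon/\delta,n) \le \delta\, q(\epsilon,n)$ exactly cancels the $1/\hat{p}$ overhead of rejection sampling: when $2\hat{p} < 1$, the cost is $O(q(\epsilon_r,n)/\hat{p}) \le O(2\hat{p}\, q(\epsilon,n)/\hat{p}) = O(q(\epsilon,n))$, and when $\hat{p} \ge 1/2$ one has $\epsilon_r \ge \epsilon/2$ so $q(\epsilon_r,n) \le q(\epsilon/2,n)$ directly. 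Combining, $\mathcal{B}$ makes $O(1/\epsilon + q(\epsilon/2,n))$ membership queries in total. The main subtlety is precisely this cancellation; without the linear-growth hypothesis on $q$, one cannot avoid the multiplicative $1/\epsilon$ blowup of the naive reduction used in Fact~8 of \cite{CDHLNSY2024}.
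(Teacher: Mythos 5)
Your proposal takes essentially the same route as the paper's: a $\Theta(1/\epsilon)$-sample density-estimation phase that accepts outright when $\hat{p} < \epsilon/2$ (safe because the all-zero function is in $\mathcal{C}$), followed by simulating $\mathcal{A}$ with relative-error parameter $\epsilon/(2\hat{p})$ via rejection sampling, with the hypothesis $q(\epsilon/\delta,n)\le\delta\, q(\epsilon,n)$ cancelling the $1/\hat{p}$ overhead exactly as in the paper. Two small bookkeeping points worth fixing: the chain $\epsilon/p \ge \epsilon/(2\hat{p})$ requires the lower bound $\hat{p}\ge p/2$ (which your concentration estimate does supply), not the upper bound $\hat{p}\le 9p/8$ that you cite; and to preserve one-sidedness you must explicitly accept---not merely stop---if the rejection-sampling budget is exhausted before enough positive examples are found, which is exactly what the paper does by halting and accepting when the pool of sampled satisfying assignments is too small.
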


Before entering into the formal proof, we give some intuition. To prove \Cref{thm: transormation }, we want to run $\mathcal{A}$ on $f$ to see if $f$ is $\epsilon$-far from the class. For this, we will need uniform samples from $f^{-1}(1)$; these can be obtained by randomly querying points of the hypercube to find $1$s of $f$. However if $\Pr_{\bz \sim \mathcal{U}_n}[f(\bz)=1]=:p_f$ is small, then we would need roughly $1/p_f$ queries per sample.
The following key lemma shows that if $f$ is sparse and $\epsilon$-far from $\cal{C}$, then its relative distance to $\mathcal{C}$ is at least $\epsilon/p_f$. This means we can simulate $\mathcal{A}$ on $f$ with the distance parameter for ${\cal A}$ set to a higher value $\epsilon$. Hence while getting samples is hard, this is canceled out by the fact we need less of them.

\begin{lemma}\label{thm: dist to rel-dist}
    If   $\dist(f, \mathcal{C}) > \epsilon$, then $\reldist(f,\mathcal{C}) > \epsilon/p_f$ where $p_f:=|f^{-1}(1)|/2^n$.
\end{lemma}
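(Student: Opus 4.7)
The plan is to observe that the two distance notions differ only by the normalization factor, so the inequality follows by rewriting and then taking a minimum. Concretely, for any fixed $g:\{0,1\}^n\to\{0,1\}$,
\[
\reldist(f,g) \;=\; \frac{|f^{-1}(1)\,\triangle\,g^{-1}(1)|}{|f^{-1}(1)|} \;=\; \frac{|f^{-1}(1)\,\triangle\,g^{-1}(1)|}{2^n}\cdot\frac{2^n}{|f^{-1}(1)|} \;=\; \frac{\dist(f,g)}{p_f},
\]
using $p_f = |f^{-1}(1)|/2^n$. This identity is the whole content of the argument; no approximation or combinatorial estimate is needed.

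Having established this pointwise identity, I would take the infimum over $g\in\mathcal{C}$ on both sides. Since dividing by the positive constant $p_f$ is an order-preserving operation (note that if $|f^{-1}(1)|=0$ then the hypothesis $\dist(f,\mathcal{C})>\epsilon$ fails because the all-$0$ function lies in $\mathcal{C}$, so we may assume $p_f>0$), we get
\[
\reldist(f,\mathcal{C}) \;=\; \min_{g\in\mathcal{C}}\reldist(f,g) \;=\; \min_{g\in\mathcal{C}}\frac{\dist(f,g)}{p_f} \;=\; \frac{\dist(f,\mathcal{C})}{p_f}.
\]
Finally, applying the hypothesis $\dist(f,\mathcal{C})>\epsilon$ yields $\reldist(f,\mathcal{C})>\epsilon/p_f$, as desired.

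There is essentially no obstacle: the only subtle point worth noting is the degenerate case $|f^{-1}(1)|=0$, which must be ruled out before writing $\reldist$ at all. This is handled by the assumption that the all-$0$ function belongs to $\mathcal{C}$, so if $f\equiv 0$ then $\dist(f,\mathcal{C})=0$ and the hypothesis of the lemma cannot hold. Hence $p_f>0$ throughout and the division in the computation above is valid.
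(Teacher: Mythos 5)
Your proof is correct and uses the same core observation as the paper — that $\reldist(f,g)$ and $\dist(f,g)$ differ exactly by the factor $p_f$ — just presented directly rather than by contradiction. The paper simply assumes a nearby $g\in\mathcal{C}$ with $\reldist(f,g)\le\eps/p_f$, unwinds the same algebraic identity to conclude $\dist(f,g)\le\eps$, and derives a contradiction; yours minimizes over $g$ up front, which is an equivalent and slightly cleaner framing.
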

\begin{proof}
Assume for a contradiction that $\reldist(f,g)\leq \epsilon/p_f$ for some $g\in \calC$. Then we have that:
$${\frac {|f^{-1}(1) \ \triangle \ g^{-1}(1)|}{|f^{-1}(1)|}} = \frac{|f^{-1}(1) \ \triangle \ g^{-1}(1)|}{p_f 2^n } \leq \frac{\epsilon}{p_f},$$
which implies that $|f^{-1}(1) \ \triangle \ g^{-1}(1)| \leq \epsilon 2^n$ and thus, $\dist(f,g)\leq \epsilon$, a contradiction.
\end{proof}

\begin{algorithm}[t!]\caption{Standard-model testing algorithm for $\mathcal{C}$ using a relative-error tester $\mathcal{A}$. Here \\ both $c_1$ and $c_2$ are sufficiently large absolute constants. }\label{algo: transformer}
\vspace{0.15cm}\textbf{Input: } Oracle access to $\MQ(f)$ and an accuracy parameter $\eps > 0$. \\
\textbf{Output: } ``Reject'' or ``Accept.''

\begin{tikzpicture}
\draw [thick,dash dot] (0,1) -- (15.9,1);
\end{tikzpicture}
\begin{algorithmic}[1] \vspace{-0.15cm}
\Algphase{Phase 1: \vspace{-0.1cm} Estimating $|f^{-1}(1)|.$}
    \State Draw $c_1/\epsilon$ many samples $\bz \sim \mathcal{U}_n$ and let $\bS$ be the set of samples.
    \State For each $\bz \in \bS$ query $f(\bz)$ and let $\hat{\bp}$ be the fraction of $\bz\in \bS$ with $f(\bz)=1$.%
    \State If $\hat{\bp} \leq \epsilon/2$, then halt and accept $f$.    
\Algphase{Phase 2: \vspace{-0.1cm} Getting samples}
\State Draw $c_2 \cdot q(\epsilon/2,n)$ many samples $\bz \sim \mathcal{U}_n$ and 
  let $\bG$ be set of samples with $f(\bz)=1$. \State If $\smash{|\bG| \leq q(\epsilon/2 
 \hat{\bp},n)}$, then halt and accept $f$. 
\Algphase{Phase 3: \vspace{-0.1cm} Running the relative error tester}
\State Run $\mathcal{A}$ using the samples in $\bG$ to simulate $\SAMP(f)$ and distance parameter $\epsilon/2 \hat{\bp}$.
\State Output what $\mathcal{A}$ outputs.\vspace{0.15cm}
\end{algorithmic}
\end{algorithm}

We now prove \Cref{thm: transormation }.
\begin{proofof}{\Cref{thm: transormation }}
    Our goal will be to show that \Cref{algo: transformer} is a standard testing algorithm for $\mathcal{C}$. Without loss of generality, we assume that $\mathcal{A}$ is correct with high enough probability, that is, if $\reldist(f,\mathcal{C}) \geq \epsilon$ the algorithm rejects with probability at least $0.9$. Similarly, if $f \in \mathcal{C}$, we assume that $\mathcal{A}$ accepts with probability at least $0.9$. 

    First, consider the case where $f \in \mathcal{C}$. Note that $f$ can only be rejected if $\mathcal{A}$ rejects $f$. Since $\bG$ is a set of uniformly random samples from $f^{-1}(1)$, $\mathcal{A}$ must accept with probability at least $0.9$.~And if $\mathcal{A}$ is one-sided, then $f$ is always accepted, meaning that   \Cref{algo: transformer} is also one-sided. 

    Now, assume that $f$ is $\epsilon$-far from any function in $\mathcal{C}$. Because the all-$0$ function is in $\mathcal{C}$ we must have $p_f:= {|f^{-1}(1)|}/{2^n}\geq \epsilon$. So on line 2, we have by a Chernoff bound that $$\Pr\big[|\hat{\bp}-p_f|>0.05p_f\big] \leq 2e^{\frac{-(0.05)^2 \epsilon}{2+0.05} c_1/\epsilon} \leq 0.05,$$ by picking a suitably large constant $c_1$. Below we assume that $\hat{\bp} \in (1 \pm 0.05)p_f$. 
    
    Since $p_f \geq \epsilon$, this implies $\hat{\bp} \geq \epsilon/2$. So, the algorithm doesn't accept on line $3$. We also~have that $\Pr_{\bz \sim \mathcal{U}_n}[f(\bz)=1]\geq 0.9 \hat{\bp}$. Hence, the expected size of  $|\bG|$ is at least $0.9c_2\hat{\bp} \cdot q(\epsilon/2,n)$. By our assumption on $q(\cdot ,\cdot)$ this is at least $0.9c_2\cdot q(\epsilon/2\hat{p}, n)$.
    Again, by a Chernoff bound and by picking
      $c_2$ to be a sufficiently large constant, we have that  $|\bG|\leq q(\epsilon/2 \hat{\bp},n)$ with probability at most $0.05$.

 It remains to argue that $\mathcal{A}$ rejects with high probability on line $6$. By \Cref{thm: dist to rel-dist}, we have~that $\reldist(f,\mathcal{C})\geq \epsilon/p_f$. Using $\hat{\bp} \geq 0.95p_f$, we have that $\reldist(f,\mathcal{C}) \geq \epsilon/2\hat{\bp}$. Since we simulate $\mathcal{A}$ on line $6$ using uniform samples from $f^{-1}(1)$, it must accept $f$ with probability at most $0.1$.~As~a~result, the probability the algorithm accepts $f$ is at most $0.05+0.05+0.1 \leq 0.2$.

     Line 2 uses $O(1/\epsilon)$ many queries, line 4 uses $O(q(\epsilon/2,n))$ many queries, and simulating $\mathcal{A}$ on line 6 uses $q(\epsilon/2\hat{\bp}, n) = O(q(\epsilon/2,n))$ queries. So, the overall query complexity  is $O(1/\epsilon+q(\epsilon/2,n))$.
\end{proofof}

Using the above, we can obtain a one-sided tester for conjunctions in the standard setting from our one-sided relative-error conjunction tester. The result of \cite{BshoutyGoldreich022} implies an $\Omega(1/\epsilon)$ lower bound for this problem, hence this result is tight. 

\begin{proofof}{\Cref{thm:conjunction-standard}}
    Our relative-error tester for conjunctions in \Cref{thm:conjunction} makes $O(1/\eps)$ calls to $\SAMP(f)$~and $\MQ(f)$  and is one-sided. As the all-$0$ function is a conjunction, the theorem follows by \Cref{thm: transormation }. 
\end{proofof}

\section{A relative-error testing algorithm for conjunctions} \label{sec:conjunction}

In this section we prove \Cref{thm:conjunction}. {At the end of the section, we show that the main  algorithm for conjunctions works even when 
  its access to the sampling oracle is not perfect; this will be important when the algorithm is used as a subroutine to test decision lists in the next section.}

\begin{remark} \label{remark:monotone}
In the rest of this section, we give a relative-error tester for the class of \emph{anti-monotone} conjunctions. That is, for conjunctions of {\it negated} variables. Using an observation from \cite{PRS02} (see also \cite{GoldreichRon20}), this implies a relative-error tester for general conjunctions. 
Specifically, a slightly modified version of Observation 3 from Parnas et al.  \cite{PRS02} states that for any conjunction $f$ (not necessarily anti-monotone) and for any $y \in f^{-1}(1)$, the function
\[f_{y}(x) :=f(x \oplus y)\] 
is an anti-monotone conjunction. Furthermore, it is easy to see that if $f$  is $\epsilon$-relative-error far~from every conjunction, then $f_{{y}}$ must be $\epsilon$-relative-error far from every anti-monotone conjunction. %
So to get a relative-error tester for general conjunctions, we simply draw a single $y \sim f^{-1}(1)$ and then run the tester for anti-monotone conjunctions (\Cref{algo: mono conjunction tester}) on $f_{y}$, {where $\MQ(f_y)$ and $\SAMP(f_y)$ can be simulated  using 
 $\MQ(f)$ and $\SAMP(f)$, respectively}; we refer to this algorithm as \hypertarget{Algorithm2}{\sc Conj-Test}. %
\end{remark}

\subsection{Preliminaries}
\begin{definition}[Subspaces]  \label{def:affine-subspace}
A set $W \subseteq \{0,1\}^n$ is an affine subspace over $F_2^n$ if and only if $W$ is the set of solutions to a system of linear equations over $F_2^n$. That is, iff $W$ is the set of solutions to $Ax=b$, for some $b \in \{0,1\}^n$. 
$W$ is a linear subspace  over $F_2^n$ iff it is the set of solutions to $Ax=b$ where $b$ is the all-0 vector.
A well-known equivalent characterization states that $W$ is an affine subspace if and only if for all $x,y,z \in W$, $x \oplus y \oplus z \in W$.
And for linear subspaces, we have that $W$ is linear if and only if for all $x,y \in W$, $x \oplus y \in W$.
\end{definition}

We will need the following fact about the intersection of linear subspaces. %

\begin{fact}\label{claim:PRS-15}
    Let $H, H'$ be two linear subspaces of $\{0,1\}^n$ with $H \not \subseteq H'$. Then: 
    \[\frac{|H \cap H'|}{ |H| } \leq \frac{1}{2}.\]
\end{fact}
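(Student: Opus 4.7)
The plan is to show that $H \cap H'$ is a proper linear subspace of $H$ and then invoke the standard fact that any proper subspace of a finite $\mathbb{F}_2$-vector space has index at least $2$.

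First I would verify that $H \cap H'$ is itself a linear subspace of $\mathbb{F}_2^n$. This is immediate from the closure characterization recalled in \Cref{def:affine-subspace}: given $x, y \in H \cap H'$, both $H$ and $H'$ contain $x \oplus y$, so $x \oplus y \in H \cap H'$, and $0$ lies in both $H$ and $H'$. In particular, $H \cap H'$ is a subspace of $H$.

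Next I would use the hypothesis $H \not\subseteq H'$ to exhibit an element $x \in H$ with $x \notin H'$; such an $x$ lies in $H$ but not in $H \cap H'$, so $H \cap H'$ is a \emph{proper} subspace of $H$.

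Finally, I would conclude by a coset argument: within $H$, the coset $x \oplus (H \cap H') = \{x \oplus y : y \in H \cap H'\}$ is a subset of $H$ (since $x \in H$ and $H$ is closed under $\oplus$), has the same cardinality $|H \cap H'|$ as $H \cap H'$, and is disjoint from $H \cap H'$ (any element of the intersection would force $x \in H \cap H' \subseteq H'$, contradicting the choice of $x$). Hence $H$ contains two disjoint sets of size $|H \cap H'|$, so $|H| \geq 2|H \cap H'|$, which rearranges to the claimed inequality. The argument is entirely elementary; there is no real obstacle, the only subtlety being to remember to use the characterization of subspaces from \Cref{def:affine-subspace} rather than importing general linear-algebra language the paper hasn't set up.
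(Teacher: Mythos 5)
Your proof is correct. The paper states this as a Fact without providing a proof (it is treated as a standard linear-algebra fact, essentially Lagrange's theorem for $\mathbb{F}_2$-subspaces), and your coset argument is exactly the canonical way to establish it: $H \cap H'$ is a subspace of $H$, the hypothesis $H \not\subseteq H'$ makes it proper, and the translate $x \oplus (H \cap H')$ gives a second disjoint copy inside $H$, forcing $|H| \geq 2|H \cap H'|$.
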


\begin{definition}[Anti-monotone set]
A set $H \subseteq \{0,1\}^n$ is said to be \emph{anti-monotone} if and only if whenever $x \in H$, we have $y \in H$ for all $y \preceq x$. 
\end{definition}

We record the following simple result about anti-monotone conjunctions, which will be useful for understanding and analyzing \Cref{algo: mono conjunction tester}:

 \begin{lemma}\label{lem: monotone conj characterization}
     A function $f: \zo^n \to \zo$ is an anti-monotone conjunction if and only if $f^{-1}(1)$ is anti-monotone and is a linear subspace of $\zo^n$. 
 \end{lemma}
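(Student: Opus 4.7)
The plan is to prove both directions directly from the definitions, using the standard basis vectors as the key objects for reconstructing the membership rule of $W := f^{-1}(1)$.

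For the forward direction, I would simply verify that if $f(x) = \bigwedge_{i \in T}\overline{x_i}$ for some $T \subseteq [n]$, then $W = \{x \in \{0,1\}^n : x_i = 0 \text{ for all } i \in T\}$ is the solution set of the homogeneous linear system $\{x_i = 0\}_{i \in T}$ over $\mathbb{F}_2$, hence a linear subspace; and it is anti-monotone because $y \preceq x$ with $x_i = 0$ forces $y_i = 0$. This part is routine.

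For the harder direction, suppose $W$ is both anti-monotone and a linear subspace. I would define
\[
T := \{i \in [n] : x_i = 0 \text{ for every } x \in W\},
\]
and aim to prove $W = \{x : x_i = 0 \text{ for all } i \in T\}$, which immediately gives $f = \bigwedge_{i \in T}\overline{x_i}$. One inclusion is immediate from the definition of $T$. For the other, fix $x$ with $x_i = 0$ for every $i \in T$; I want to show $x \in W$. For each coordinate $j \notin T$, by definition there exists some $y^{(j)} \in W$ with $y^{(j)}_j = 1$, so the standard basis vector $e_j$ satisfies $e_j \preceq y^{(j)}$, and anti-monotonicity of $W$ gives $e_j \in W$. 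Now write $x = \bigoplus_{j : x_j = 1} e_j$; since every $j$ in this XOR satisfies $j \notin T$ (by our choice of $x$), each $e_j$ lies in $W$, and linearity of $W$ (which I'd invoke from \Cref{def:affine-subspace}, noting $0 \in W$ so ``affine'' sharpens to ``linear'') closes the XOR inside $W$, so $x \in W$.

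The only mildly subtle point is that neither hypothesis suffices alone: anti-monotonicity alone would let $W$ be an arbitrary downward-closed set, and linearity alone would let $W$ be an arbitrary affine subspace. The argument uses anti-monotonicity specifically to extract the basis vectors $e_j$ out of arbitrary witnesses $y^{(j)}$, and uses linearity specifically to assemble these basis vectors into every intended point of $W$. I don't expect any real obstacle beyond writing this carefully.
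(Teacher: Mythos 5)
Your proof is correct, and it differs from the paper's in a meaningful way on the harder direction. The paper handles the direction ``anti-monotone linear subspace $\Rightarrow$ conjunction'' by simply invoking Claim~18 of \cite{PRS02}, which states (without proof in this paper) that an anti-monotone linear subspace must be of the form $\{x : x_{i_1}=0 \land \cdots \land x_{i_k}=0\}$. You instead prove that claim from scratch: define $T$ as the set of coordinates that vanish on all of $W$, use anti-monotonicity to extract each standard basis vector $e_j$ (for $j \notin T$) from an arbitrary witness $y^{(j)} \in W$ with $y^{(j)}_j = 1$, and then use closure under XOR to reassemble any point supported off $T$ from those basis vectors. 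This is a clean, self-contained argument; the paper's route is shorter by delegating. The other direction (conjunction $\Rightarrow$ anti-monotone linear subspace) is the same in both: verify the solution set of the homogeneous system and check closure under XOR directly. One cosmetic remark: your parenthetical about ``affine sharpening to linear because $0 \in W$'' is unnecessary here, since the hypothesis already gives linearity (and hence closure under XOR and $0 \in W$) directly from \Cref{def:affine-subspace}; you are not starting from an affine assumption.
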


\begin{proof}
The ``if'' direction follows immediately from the following easy result of \cite{PRS02}:
\begin{claim} [Claim~18 of \cite{PRS02}]
    If $H$ is a monotone and affine subspace of $\{0,1\}^n$, then $H=\{x \in \zo^n : (x_{i_1}=1) \land \ldots \land (x_{i_k} =1) \}$ for some $k$ and some $i_1,\dots,i_k \in [n]$. 
    Similarly if $H$ is an anti-monotone and linear subspace, then 
    $H = \{x \in \{0,1\}^n ~:~(x_{i_1}=0) \land \ldots \land (x_{i_k}=0)\}$ for some $k$ and some $i_1,\ldots,i_k \in [n]$.
\end{claim}

For the other direction, suppose that $f(x)=x_{i_1} \land \cdots \land x_{i_k}$ is an anti-monotone conjunction. Then clearly $f^{-1}(1)$ is an anti-monotone set. Moreover, for $x,y \in f^{-1}(1)$, 
$f(x \oplus y) =  \neg (x_{i_1} \oplus y_{i_1}) \land \ldots \land \neg (x_{i_k} \oplus y_{i_k})$.
Since $x,y$ are both in $f^{-1}(1)$, $x_{i_1},...,x_{i_k},y_{i_1},\ldots,y_{i_k}$ are all zero, and thus $f(x \oplus y)=1$.
Recalling \Cref{def:affine-subspace}, we have that $f^{-1}(1)$ is a linear subspace, so the only-if direction also holds.
\end{proof}
 
\subsection{Intuition: the tester of \cite{PRS02} and our tester}
\label{sec:conjunction-intuition}

We begin with an overview of our algorithm and its proof.
At a bird's eye view, our algorithm~has two phases: Phase 1  tests whether $f^{-1}(1)$ is close to a linear subspace, and Phase 2 tests whether $f$ is anti-monotone.  
(Note that, while \cite{CDHLNSY2024} already gave a general tester for monotonicity in the relative error setting, its complexity has a dependence on $2^n/|f^{-1}(1)|$ which is necessary in the general case.  Here we want a much more efficient tester that is linear in $1/\epsilon$, so we cannot use the monotonicity tester of \cite{CDHLNSY2024}.)

If $f$ is an anti-monotone conjunction then both Phase 1 and Phase 2 tests will clearly pass; the hard part is to prove soundness, showing that when $f$ is far from every anti-monotone conjunction, then the algorithm will reject with high probability.
The crux of the analysis is to show that if $f$ is close to a linear subspace but far from all anti-monotone conjunctions, then the Phase 2 anti-monotonicity test will fail with high probability.
As in \cite{PRS02}, we accomplish this with the aid of an intermediate function, $g_f$ that we will use in the analysis.
In Phase 1 of the algorithm,  we will actually prove something stronger: if $f$ passes Phase 1 with good probability, we show that $f$ is close to a particular function, $g_f$ (that is defined based on $f$), where $g_f^{-1}(1)$ is a linear subspace. 
The function $g_f$ is chosen to help us downstream in the analysis, to argue that if Phase 2 passes, then $g_f$ is anti-monotone  with good probability; since $g_f^{-1}(1)$ is a linear subspace, by \Cref{lem: monotone conj characterization} this implies $g_f$ is an anti-monotone conjunction,
thereby contradicting the assumption that $f$ is far from every anti-monotone conjunction.

Finally, we remark that as described in \Cref{sec:rel-error-to-standard}, our relative-error tester yields an $O(1/\eps)$-query one-sided tester in the standard model.  This strengthened the result~of~\cite{PRS02}, which~gave an $O(1/\eps)$-query algorithm with two-sided error. Comparing our analysis with \cite{PRS02},  while~we borrow many of their key ideas, our analysis is arguably simpler in a couple of ways.   First since we are forced to create a tester that doesn't have the ability to estimate the size of the conjunction, this eliminates the initial phase in the \cite{PRS02} which does this estimation, giving a more streamlined argument {and enabling us to give an $O(1/\eps)$-query algorithm with one-sided error.} %
Second, by a simple transformation (see  \Cref{remark:monotone}), we test whether $f$ is close to a monotone {\it linear} subspace rather than close to a monotone affine subspace, which makes both the test and analysis easier.

\subsection{The relative-error anti-monotone conjunction tester. }
\label{sec:conj-algorithm}

\begin{algorithm}[t!] 
\caption{Relative-error Anti-monotone Conjunction Tester. 
{Here $c_1$ and $c_2$ are two\\ sufficiently large absolute constants. } }

\label{algo: mono conjunction tester}

\vspace{0.15cm}\textbf{Input: } Oracle access to $\MQ(f), \SAMP(f)$ and an accuracy parameter $\eps > 0$. \\
\textbf{Output: } ``Reject'' or ``Accept.''

\begin{tikzpicture}
\draw [thick,dash dot] (0,1) -- (15.9,1);
\end{tikzpicture}
\begin{algorithmic}[1] \vspace{-0.15cm}
\Algphase{Phase 1: \vspace{-0.1cm}}
    \State Repeat the following $c_1/\epsilon$ times :
    \State \hskip4em Draw $\bx,\by \sim \SAMP(f)$.
    \State \hskip4em If $\MQ(f)(\bx \oplus \by)=0$, halt and reject $f$.\vspace{-0.15cm}

\Algphase{Phase 2:\vspace{-0.15cm}}
\State Repeat the following $c_2$ times :
    \State \hskip4em Draw $\bx \sim \SAMP(f)$.
    \State \hskip4em Draw a uniform random $\by \preceq \bx$.
    \State \hskip4em Draw $\bu \sim \SAMP(f)$.
    \State \hskip4em If $\MQ(f)(\by \oplus \bu)=0$, halt and reject $f$.

\State Accept $f$.\vspace{0.15cm}
\end{algorithmic} 
\end{algorithm}

In the rest of this section we prove the following theorem, which implies 
\Cref{thm:conjunction} via \Cref{remark:monotone}:

\begin{theorem} [Relative-error anti- monotone conjunction tester] \label{thm:acceptable-monotone}
\Cref{algo: mono conjunction tester} is a one-sided non-adaptive
algorithm which, given $\MQ(f)$ and $\SAMP(f)$,
is an $\eps$-relative-error tester for anti-monotone conjunctions over $\{0,1\}^n$, making $O(1/\eps)$ oracle calls.
\end{theorem}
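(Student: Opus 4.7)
The plan is to follow the two-phase structure of \Cref{algo: mono conjunction tester}: completeness is essentially free from \Cref{lem: monotone conj characterization}, and the task reduces to showing that when $f$ is $\eps$-relative-error far from every anti-monotone conjunction, at least one of the two phases rejects with constant probability. For completeness: if $f$ is an anti-monotone conjunction then $A := f^{-1}(1)$ is both anti-monotone as a set and a linear subspace of $\mathbb{F}_2^n$, so Phase~1 never rejects (closure under XOR) and Phase~2 never rejects (the condition $\by \preceq \bx \in A$ gives $\by \in A$ by anti-monotonicity, and then linearity gives $\by \oplus \bu \in A$). For soundness, I assume the algorithm accepts with probability at least $1/3$ and aim to derive that $f$ is $\eps$-relative-error close to some anti-monotone conjunction, contradicting the hypothesis.

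To analyze Phase~1, I set $\rho := \Pr_{\bx,\by \sim A}[\bx \oplus \by \notin A]$. A Chernoff bound on the $c_1/\eps$ iterations shows that acceptance with probability at least $1/3$ forces $\rho \leq \eps/c$ for some constant $c = c(c_1)$ that can be made arbitrarily large by enlarging $c_1$. I then define the intermediate function $g_f := \mathbf{1}_V$ where $V := \{x \in \{0,1\}^n : |A \cap (x \oplus A)| \geq |A|/2\}$, and establish two properties. First, $|A \triangle V| \leq 4\rho |A|$: the bound $|A \setminus V| \leq 2\rho |A|$ follows from Markov's inequality applied to elements of $A$, while $|V \setminus A| \leq 2\rho |A|$ follows from the counting identity $\sum_{x \notin A} |A \cap (x \oplus A)| = \rho |A|^2$ together with the lower bound $|A \cap (x \oplus A)| \geq |A|/2$ for each $x \in V \setminus A$. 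Hence $\reldist(f, g_f) \leq 4\rho$. Second, $V$ is a linear subspace of $\mathbb{F}_2^n$; this follows from a BLR-style closure argument in which, for any $x_1, x_2 \in V$, a union bound over a constant number of independent auxiliary samples from $A$ forces $|A \cap ((x_1 \oplus x_2) \oplus A)| \geq |A|/2$ whenever $\rho$ is sufficiently small.

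For Phase~2, by tuning $c_1$ I ensure $4\rho < \eps$, so combining the contrapositive of the soundness hypothesis with \Cref{lem: monotone conj characterization} forces $V$ to be a linear subspace that is \emph{not} anti-monotone as a set. The key combinatorial lemma is then that any non-anti-monotone linear subspace $V \subseteq \{0,1\}^n$ satisfies $\Pr_{\bx \sim V,\, \mathbf{e} \preceq \bx}[\mathbf{e} \notin V] \geq 1/4$. To prove this, pick a coordinate $i^*$ with $e_{i^*} \notin V$ but some $x \in V$ having $x_{i^*} = 1$ (which must exist by non-anti-monotonicity). Then $\Pr_{\bx \sim V}[\bx_{i^*}=1] = 1/2$ by a standard coset argument, and conditional on $\bx_{i^*} = 1$ the involution $\mathbf{e} \mapsto \mathbf{e} \oplus e_{i^*}$ on $\{e : e \preceq \bx\}$ pairs each $\mathbf{e}$ with a partner, exactly one of which lies in $V$ (since $e_{i^*} \notin V$), giving $\Pr[\mathbf{e} \notin V \mid \bx_{i^*}=1] = 1/2$. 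Next, I account for the $O(\rho)$ total variation distance between $\bx \sim A$ and $\bx \sim V$ (since $|A \triangle V| \leq 4\rho |A|$) to get $\Pr_{\bx \sim A, \by \preceq \bx}[\by \notin V] \geq 1/4 - O(\rho)$. Finally, conditioned on $\by \notin V$, the event $\{\by \oplus \bu \in A\}$ for $\bu \sim A$ has probability $O(\rho)$: whenever $\bu \in V$ (which occurs with probability $\geq 1 - 2\rho$), the sum $\by \oplus \bu$ is forced outside $V$, and can only lie in $A$ via the small set $A \setminus V$. Hence each Phase~2 iteration rejects with probability at least a positive absolute constant, and over $c_2$ iterations the overall rejection probability exceeds $2/3$.

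The main technical obstacle I anticipate is the BLR-style closure argument showing $V$ is a linear subspace: naive inclusion-exclusion with the majority threshold $1/2$ only yields trivial lower bounds, so the argument requires a careful union bound over several auxiliary samples from $A$, exploiting the fact that $\rho$ can be driven arbitrarily small by tuning $c_1$. An alternative would be to invoke a ``stability'' theorem from additive combinatorics about $\mathbb{F}_2^n$-sets nearly closed under XOR, but a direct argument tailored to the small-$\rho$ regime is cleaner and self-contained. The rest of the analysis, including the non-anti-monotone subspace lemma and the Phase~2 accounting for the $\bx \sim A$ versus $\bx \sim V$ mismatch, is elementary once $g_f$ is in hand.
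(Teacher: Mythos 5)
Your proposal follows the same high-level architecture as the paper's proof: completeness via closure of $f^{-1}(1)$ under XOR plus anti-monotonicity, and soundness via the intermediate set $V$, which is exactly the paper's $G = g_f^{-1}(1)$ (the majority-vote set). Your Phase~2 lemma and the paper's are also essentially the same fact dressed differently: you exhibit a coordinate $i^*$ with $e_{i^*}\notin V$, some $x\in V$ with $x_{i^*}=1$, and use the involution $\mathbf{e}\mapsto\mathbf{e}\oplus e_{i^*}$ on $\{e:e\preceq x\}$; the paper instead defines $\mathcal{X}=\{x\in G:\mathsf{LT}(x)\subseteq G\}$ and applies the subspace-intersection bound (\Cref{claim:PRS-15}) to each $\mathsf{LT}(x)$, $x\in G\setminus\mathcal{X}$. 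Both yield $\Pr_{\bx\sim V,\,\by\preceq\bx}[\by\notin V]\geq 1/4$. One small slip: after the involution, each orbit has \emph{at most} one element in $V$ (both elements could lie outside), not \emph{exactly} one; the $\geq 1/2$ bound still holds. Your bookkeeping via $d_{\mathrm{TV}}(\mathcal{U}_A,\mathcal{U}_V)=O(\rho)$ replaces the paper's \Cref{lem: f cap g is large}, which is fine.

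The genuine gap is precisely the step you flag yourself: showing $V$ is a linear subspace. A naive union bound over the defining event $|A\cap((x_1\oplus x_2)\oplus A)|\geq |A|/2$ does not work, because two majority votes at threshold $1/2$ can each be $(1/2+o(1))$ while the third drops just below $1/2$. What is needed (and what you don't supply) is an \emph{amplification} step: for every $a\in\{0,1\}^n$, $\Pr_{\bx\sim A}[f(a\oplus\bx)=g_f(a)]$ is not just $\geq 1/2$ but $\geq 1-O(\rho)$. This is the content of the paper's \Cref{lem:amplification}, which the paper proves by introducing the witness sets $\W(a)=\{x\in F:f(a\oplus x)\neq f(a)\}$ and the ``bad'' set $H=\{a:|\W(a)|>\delta N\}$ with a \emph{fixed} constant $\delta=1/36$; it first shows (\Cref{lem: H big we reject}) that if $|H|$ is large then each Phase~1 round rejects with constant probability, and otherwise amplification holds, which then drives the closure proof in \Cref{lem:g is affine}. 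Your $\rho$-based framing can in fact bypass $H$ entirely: one can show directly that $\Pr_{\bx,\by\sim A}[f(a\oplus\bx)\neq f(a\oplus\bx\oplus\by)]\leq 2\rho$ for every $a$ (by the bijections $(x,y)\mapsto(a\oplus x,y)$ and $(x,y)\mapsto(a\oplus x\oplus y,y)$, each mapping the bad pairs injectively into $\{(z,w)\in F^2:z\oplus w\notin F\}$), whence $\gamma(1-\gamma)\leq 2\rho$ as in the paper's square identity, giving $\Pr_\bx[f(a\oplus\bx)=g_f(a)]\geq 1-4\rho$. From there the closure argument for $a\oplus b$ goes through with constants $O(\rho)$ on both sides, yielding a contradiction once $\rho$ is a sufficiently small constant (which your tuning of $c_1$ guarantees). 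So the approach is salvageable and arguably cleaner than the paper's $H$-dichotomy, but as written the key amplification step is asserted rather than proved.
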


The $O(1/\eps)$ bound on the number of oracle calls is immediate from inspection of \Cref{algo: mono conjunction tester}, so it remains to establish correctness (including soundness and completeness).
These are established in \Cref{thm: f is mono conj} and \Cref{thm: f isn't mono conj} respectively.

\subsection{Completeness} %

Completeness (with one-sided error) is straightforward to establish:  
\begin{theorem}\label{thm: f is mono conj}
If $f: \zo^n \to \zo$ is any anti-monotone conjunction, then \Cref{algo: mono conjunction tester} accepts $f$ with probability 1.
\end{theorem}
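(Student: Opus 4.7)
The plan is to show that every query the algorithm makes to $\MQ(f)$ returns $1$ with probability one, so the algorithm never halts and rejects, and therefore reaches line 10 and accepts. The key structural fact I will invoke is \Cref{lem: monotone conj characterization}, which tells us that when $f$ is an anti-monotone conjunction, $f^{-1}(1)$ is simultaneously (a) a linear subspace of $\mathbb{F}_2^n$ and (b) an anti-monotone subset of $\{0,1\}^n$. These two properties are exactly what the two phases of the algorithm probe.

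For Phase 1, the samples $\bx, \by$ both lie in $f^{-1}(1)$ by definition of $\SAMP(f)$. Since $f^{-1}(1)$ is a linear subspace (property (a)), it is closed under $\oplus$, so $\bx \oplus \by \in f^{-1}(1)$ and therefore $\MQ(f)(\bx\oplus \by)=1$. Hence no iteration of Phase 1 can cause a rejection.

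For Phase 2, I will argue in two steps. First, $\bx \in f^{-1}(1)$ and $\by$ is drawn with $\by \preceq \bx$; by property (b), anti-monotonicity of $f^{-1}(1)$ gives $\by \in f^{-1}(1)$. Second, $\bu \in f^{-1}(1)$ by construction, so applying property (a) again yields $\by \oplus \bu \in f^{-1}(1)$, and therefore $\MQ(f)(\by \oplus \bu)=1$. So no iteration of Phase 2 rejects either.

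There is no real obstacle here; the statement is essentially a direct unpacking of \Cref{lem: monotone conj characterization} against the two tests performed by \Cref{algo: mono conjunction tester}. The only thing worth noting carefully is that both conclusions hold \emph{deterministically} over the randomness of the draws (given that $f$ is a fixed anti-monotone conjunction), which is what yields the claimed probability-$1$ acceptance and in particular the one-sidedness of the tester.
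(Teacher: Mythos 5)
Your proof is correct and matches the paper's argument essentially verbatim: both invoke \Cref{lem: monotone conj characterization} to get that $f^{-1}(1)$ is a linear subspace (so Phase~1 queries always return $1$) and anti-monotone (so $\by \preceq \bx$ stays in $f^{-1}(1)$, and then linearity again handles Phase~2). The only nit is the reference to ``line 10'' --- the accepting line of \Cref{algo: mono conjunction tester} is line 9 --- but this has no bearing on the argument.
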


\begin{proof}
The algorithm can only reject $f$ on lines $3$ or $8$. We argue that if $f$ is an anti-monotone conjunction then the algorithm never rejects at either of these lines. Since $f$ is an anti-monotone conjunction, by \Cref{lem: monotone conj characterization} $f^{-1}(1)$ is a linear subspace. Hence for any $x,y \in f^{-1}(1)$ we also have $f(x \oplus y)=1$, so the algorithm never rejects at line $3$.
Continuing to Phase 2, since $f$ is anti-monotone, $\bx \in f^{-1}(1)$, and $\by \preceq \bx$, we also have $f(\by)=1$. Thus, since $\by, \bu \in f^{-1}(1)$ and $f^{-1}(1)$ is a linear subspace, recalling \Cref{def:affine-subspace}, the algorithm never rejects at line $8$. 
\end{proof}

\subsection{Soundness} \label{sec:conjunction-soundness}
Let $g_f : \{0,1\}^n \to \{0,1\}$ be the function defined as follows:
\begin{equation} \label{eq:g}
g_f(a)=\begin{cases}
         1 & \text{if $\Pr_{\bx \sim f^{-1}(1)}[f(a \oplus \bx)=1] \geq 1/2$} \\
         0 & \text{otherwise}
    \end{cases}
\end{equation}
For convenience, we will write $F := f^{-1}(1)$, $G := g_f^{-1}(1)$ and $N:=|F |$. %

\subsubsection{Analysis of phase 1 (steps 1-3)}

We want to prove the following two lemmas which together say that whenever \Cref{algo: mono conjunction tester} passes Phase 1 with probability at least $1/10$, then  (1) $f$ is  relative-error-close to $g_f$ (\Cref{lem:f is close to g conj}) which means that $|F \triangle G |$ is small relative to $N$, and (2) $G $ is a linear subspace (\Cref{lem:g is affine}).

\begin{lemma}\label{lem:f is close to g conj}
Suppose that when \Cref{algo: mono conjunction tester} is run on $f:\{0,1\}^n\rightarrow \{0,1\}$, it passes Phase 1 (that is, it reaches  line $4$) with probability at least $1/10$. Then $\reldist(f,g_f) \leq \epsilon/10$.
\end{lemma}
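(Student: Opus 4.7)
The plan is to convert the assumed $\ge 1/10$ survival probability of Phase~1 into a quantitative upper bound on the per-trial rejection probability, and then use a pair-counting argument to bound $|F \triangle G|$, where $F := f^{-1}(1)$, $N := |F|$, and $G := g_f^{-1}(1)$. Set
\[
p \;:=\; \Pr_{\bx,\by \sim {\cal U}_F}\!\big[f(\bx \oplus \by) = 0\big],
\]
the probability that a single iteration of Phase~1 rejects. Since the $c_1/\eps$ iterations are independent and Phase~1 is reached with probability at least $1/10$, we must have $(1-p)^{c_1/\eps} \ge 1/10$, which yields $p = O(\eps/c_1)$; choosing $c_1$ large enough (e.g.\ $c_1 \ge 40 \ln 10$) forces $p \le \eps/40$. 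The set of ``bad pairs'' $B := \{(x,y) \in F \times F : x \oplus y \notin F\}$ therefore has size $|B| = p N^2$.

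The core of the argument is a double count of $B$ that bounds each side of $F \triangle G$ separately. For the $F \setminus G$ side: if $a \in F \setminus G$, then by the definition of $g_f$ strictly more than half of $y \in F$ satisfy $f(a \oplus y) = 0$, so $a$ contributes more than $N/2$ pairs of the form $(a, y)$ to $B$; since pairs arising from different $a$'s have different first coordinates they are disjoint, so $(N/2)\,|F \setminus G| < |B| = pN^2$ and hence $|F \setminus G| < 2pN$. For the $G \setminus F$ side: if $a \in G \setminus F$, then at least $N/2$ values $x \in F$ satisfy $a \oplus x \in F$; setting $y := a \oplus x \in F$, each such $x$ yields a pair $(x,y) \in F \times F$ with $x \oplus y = a \notin F$, so $(x,y) \in B$. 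Pairs arising from different $a \in G \setminus F$ are automatically disjoint because $a = x \oplus y$ is determined by the pair, so $(N/2)\,|G \setminus F| \le |B| = pN^2$ and hence $|G \setminus F| \le 2pN$.

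Adding these two bounds yields $|F \triangle G| \le 4 p N \le (\eps/10) N$, which is exactly $\reldist(f, g_f) \le \eps/10$ as desired. The only delicate step is the $G \setminus F$ direction: Phase~1 only directly probes pairs drawn from $F \times F$, so one has to observe that every $a \in G \setminus F$ nevertheless arises as the XOR of at least $N/2$ pairs in $F \times F$, and that the pair uniquely determines the point $a$, so no double-counting occurs across different $a$'s. The remainder of the proof is just the initial conversion of $(1-p)^{c_1/\eps} \ge 1/10$ into $p \le \eps/40$, which is a routine choice of the absolute constant $c_1$.
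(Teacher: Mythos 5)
Your proof is correct and takes essentially the same approach as the paper: both arguments rest on the observation that each element of $F\setminus G$ (respectively $G\setminus F$) contributes at least $N/2$ disjoint pairs $(\bx,\by)\in F\times F$ with $f(\bx\oplus\by)=0$, which ties the per-round rejection probability to $|F\triangle G|/N$. You phrase it as a direct double-count of "bad pairs" and bound both sides of $F\triangle G$ simultaneously, whereas the paper argues by contraposition with a case split on which side of $F\triangle G$ is larger, but the inequalities and constants are identical.
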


\begin{lemma}\label{lem:g is affine}
Suppose that when \Cref{algo: mono conjunction tester} is run on $f:\zo^n \to \zo$, it passes Phase 1 (that is, it reaches line ${4}$) with probability at least $1/10$.
 Then $G $ is a linear subspace.
\end{lemma}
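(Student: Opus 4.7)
The plan is to establish a ``gap'' property for the autocorrelation function $\phi(c) := |F \cap (F + c)|$, where $F := f^{-1}(1)$ and $N := |F|$, and then leverage this gap to show that $G = \{a : \phi(a) \geq N/2\}$ is closed under XOR (closure under $0$ is automatic, as $\phi(0) = N$). The hypothesis that Phase~1 succeeds with probability at least $1/10$ over its $c_1/\eps$ iterations implies, by a routine estimate, that the per-iteration failure probability $\delta := \Pr_{\bx, \by \sim F}[f(\bx \oplus \by) = 0]$ can be made an arbitrarily small absolute constant by choosing $c_1$ large enough; I will pick $c_1$ so that $\delta$ is below whatever threshold the downstream argument needs.

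The key technical step is a \emph{gap lemma}: for every $a \in \{0,1\}^n$,
\[
\phi(a) \cdot (N - \phi(a)) \leq 4\delta N^2,
\]
equivalently $\alpha(1-\alpha) \leq 4\delta$ where $\alpha := \phi(a)/N$. I would prove this by double counting pairs $(x, y) \in F_a \times (F \setminus F_a)$ satisfying $x \oplus y \in F_a$ (that is, both $x \oplus y$ and $x \oplus y \oplus a$ lie in $F$), where $F_a := F \cap (F + a)$ has size $\phi(a)$. On the one hand, since $F_a \times (F \setminus F_a) \subseteq F \times F$, applying Phase~1 to the pair $(x, y)$ and to the pair $(x \oplus a, y)$ (valid because $x \in F_a$ forces $x \oplus a \in F$) yields a lower bound of $\phi(a)(N - \phi(a)) - 2\delta N^2$ on the number of such good pairs. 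On the other hand, via the bijection $(x, y) \mapsto (w, z) := (x, x \oplus y)$, every good pair corresponds to a pair in $F_a \times F_a$ whose XOR $w \oplus z = y$ lies in $F \setminus F_a$; a second application of Phase~1 to $F_a \times F_a \subseteq F \times F$ (together with the $+a$-invariance of $F_a$ to handle the $+a$-shifted constraint) shows that $F_a$ is approximately XOR-closed, with at most $2\delta N^2$ escaping pairs. Equating the two bounds gives the gap lemma.

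For $\delta$ sufficiently small, the gap lemma implies that any $a \in G$ (so $\alpha \geq 1/2$) must in fact satisfy $\alpha \geq 1 - 8\delta$, i.e., $|F_a| \geq (1 - 8\delta)N$. I would then prove closure under XOR as follows. Given $a, b \in G$, sample $\bx \sim \mathcal{U}_{F_a}$ and $\by \sim \mathcal{U}_{F_b}$ independently and set $\bu := \bx \oplus \by$ and $\bv := \bu \oplus (a \oplus b) = (\bx \oplus a) \oplus (\by \oplus b)$. Because $|F_a|, |F_b| \geq (1 - 8\delta)N$, Phase~1 applied to the dense subset $F_a \times F_b \subseteq F \times F$ gives $\Pr[\bu \notin F] \leq \delta N^2 / (|F_a||F_b|) = O(\delta)$, and the involution $\bx \mapsto \bx \oplus a$ on $F_a$ (and similarly on $F_b$) yields the same bound on $\Pr[\bv \notin F]$. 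Hence $\Pr[\bu \in F \cap (F + (a \oplus b))] \geq 1 - O(\delta)$. Combining with the point-mass bound $\Pr[\bu = w] \leq 1/\max(|F_a|, |F_b|)$ (which follows from $|\{(\bx, \by) \in F_a \times F_b : \bx \oplus \by = w\}| \leq \min(|F_a|, |F_b|)$) forces $\phi(a \oplus b) \geq (1 - O(\delta)) \cdot \max(|F_a|, |F_b|) \geq (1 - O(\delta))^2 N > N/2$, so $a \oplus b \in G$.

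The main obstacle is the gap lemma itself; without it, a direct closure argument starting only from $\phi(a), \phi(b) \geq N/2$ loses a factor of two and delivers only $\phi(a \oplus b) \geq (1 - O(\delta)) N / 2$, which (for any $\delta > 0$) falls just short of the required threshold. The gap lemma eliminates this slack by forbidding intermediate values of $\phi$ and upgrading the weak hypothesis $\phi(a) \geq N/2$ to the much stronger $\phi(a) \geq (1 - O(\delta))N$.
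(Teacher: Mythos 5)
Your proof is correct, and it takes a genuinely different route from the paper's. The paper follows~\cite{PRS02}: it introduces the witness sets $\W(a)$ and the set $H$ of ``bad'' points (those with many witnesses), shows $|H|\le \delta N$ via \Cref{lem: H big we reject}, and then uses $H$ to prove the amplification statement \Cref{lem:amplification}, $\gamma(1-\gamma)\le 2\delta$; finally, linearity of $G$ is shown by assuming $a,b\in G$, $a\oplus b\notin G$ and deriving two conflicting bounds on $\Pr_{\mathbf{x},\mathbf{y}\sim F}[f(a\oplus\mathbf{x}\oplus b\oplus\mathbf{y})=1]$. You instead work directly with the autocorrelation $\phi(a)=|F\cap(F+a)|$ and the $\oplus a$-invariant set $F_a=F\cap(F+a)$, and prove the same dichotomy $\alpha(1-\alpha)=O(\delta)$ by a self-contained double count of pairs in $F_a\times(F\setminus F_a)$ mapping into $F_a$ under XOR, with no need for $H$, $\W(\cdot)$, or the explicit appearance of $g_f$ in the probability estimates. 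Your XOR-closure step is then a constructive sampling argument over $F_a\times F_b$ rather than a proof by contradiction; both exploit that $x\mapsto x\oplus a$ is a bijection of $F_a$. The tradeoff is that the paper's structure reuses \Cref{lem: H big we reject} elsewhere in the soundness analysis, whereas your $H$-free route buys a cleaner, more self-contained proof of \Cref{lem:g is affine} alone. A small nit: in your lower bound on good pairs, the two bad events (for $x\oplus y$ and for $x\oplus a\oplus y$) each have size at most $\delta N^2$ because $F_a\times(F\setminus F_a)\subseteq F\times F$ and $\oplus a$ permutes $F_a$; you should say this explicitly so the $2\delta N^2$ slack is justified, and likewise note that the ``escaping'' pairs in the upper bound are a subset of $\{(w,z)\in F_a^2: w\oplus z\oplus a\notin F\}$, which already has size $\le\delta N^2$ (so $3\delta$ suffices, though your $4\delta$ is harmless).
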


\begin{proofof}{\Cref{lem:f is close to g conj}}
Assume for a contradiction that  
 $\reldist(f,g_f) \geq \epsilon/10$.
 First we show that  
\begin{equation}
\label{eq:hehe10}\Prx_{\bx,\by \sim F } \big[f(\bx \oplus \by) =0\big] \geq \epsilon/40.\end{equation}
Therefore,  the probability we fail to reject $f$ during Phase 1 is at most $(1-\epsilon/40)^{c_1/\epsilon}$, which is at most $1/10$ for a suitable choice of $c_1$. 

It remains to prove \Cref{eq:hehe10}. %
Since $\reldist(f,g_f) \geq \epsilon/10$, we have $|F \Delta G |/N \geq \epsilon/10$ and 
there are two cases: at least half of the elements in $F \Delta G $ are in $F $ but not in $G $, or vice versa. %

In the first case (at least half of the elements in $F \Delta G $ are in $F $ but not in $G $),  we have:
$$\Prx_{\bx,\by \sim F }\big[f(\bx \oplus \by)=0\big] \geq 
\Prx_{\bx, \by \sim F } \big[g_f(\by)=0\big] \cdot \Prx_{\bx,\by \sim F}\big[f(\bx \oplus \by)=0 \hspace{0.08cm}|\hspace{0.08cm} g_f(\by)=0\big].$$
Since we are in the first case, the first term is at least $\epsilon/20$, and for the second term, because we are conditioning on  $g_f(\by)=0$, by the definition of $g_f$, this term is at least $1/2$. \Cref{eq:hehe10} follows.

For the second case, we have $|G \setminus F | \geq \eps N / 20$. Fix any $a \in G \setminus F $, since $g_f(a)=1$, we have $\Pr_{\bx \sim F}[f(a \oplus \bx )=1] \geq 1/2.$
For any pair $(a,x)$ such that $f(a\oplus x )=1$, we have $a \oplus x =y$ for some $y \in F$. As a result, we have
\[
\Prx_{\bx,\by \sim F}\big[a \oplus \bx  = \by\big] \geq {\frac 1 2} \cdot {\frac 1 N}\quad\Longrightarrow\quad
\Prx_{\bx,\by \sim F}\big[\bx \oplus \by = a \big] \geq \frac{1}{2N}.
\]
Since we are in the second case, this gives
\[
\Prx_{\bx,\by \sim F}\big[f(\bx \oplus \by )=0 \big]
\ge \Prx_{\bx,\by \sim F}\big[ \bx \oplus \by\in G\setminus F\big]
=\sum_{a \in G \setminus F }\Prx_{\bx,\by \sim F}\big[\bx \oplus \by = a \big] \geq {\frac {\eps N} {20}}\cdot {\frac 1 {2N}}  = \frac{\epsilon}{40},
\]
which completes the proof of the lemma.
\end{proofof}
Recalling \Cref{def:affine-subspace}, we will prove \Cref{lem:g is affine} by showing that under the stated condition on $f$,
\text{for any $a,b \in G$, we have $g_f(a \oplus b)=1$}.

\def\W{\mathsf{W}}
Following \cite{PRS02},  we define the set of \emph{witnesses}  for $a \in \{0,1\}^n$ to be
$$
\W(a):=\big\{x \in F: f(a\oplus x ) \neq f(a)\big\}.
$$
Elements of the set $\W(a)$ are witnesses in the sense that they witness that $F$ is not a linear subspace. 
Let 
$$H = \big\{a \in \{0,1\}^n :  |\W(a) | > \delta N\big\}$$
be the set of elements $a \in \zo^n$ for which the set of witnesses for $a$ is ``large,'' i.e. at least a $\delta$ fraction of $N=|F|,$ where $\delta:=1/36$.
The next lemma states that if the algorithm passes Phase 1 with probability at least $1/10$, then the set $H$ is small. 

\begin{lemma} \label{lem: H big we reject}
    Assume that $|H| \geq \delta  N$. 
    Then $\Pr\big[\text{\Cref{algo: mono conjunction tester} on $f$ reaches line 7}\big] < 1/10.$
\end{lemma}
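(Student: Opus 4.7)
The plan is to lower bound the per-iteration rejection probability of Phase~1, namely $p := \Pr_{\bx, \by \sim F}[f(\bx \oplus \by) = 0]$, by a positive constant that depends only on $\delta$. Since Phase~1 performs $c_1/\epsilon \ge c_1$ independent iterations (assuming $\epsilon \le 1$), a constant per-iteration rejection probability will give $(1-p)^{c_1/\epsilon} \le e^{-p c_1} < 1/10$ once $c_1$ is chosen sufficiently large; in particular the algorithm fails to reach line~7 (which lies past Phase~1) with probability at most $1/10$.

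To bound $p$, I would split on where the mass of $H$ lies. By hypothesis $|H| \ge \delta N$, so either $|H \cap F| \ge \delta N/2$ or $|H \setminus F| \ge \delta N/2$. In the first case, $\Pr_{\by \sim F}[\by \in H \cap F] \ge \delta/2$, and for every such $\by$, since $f(\by) = 1$, the definition of witnesses gives $W(\by) = \{x \in F : f(\bx \oplus \by) = 0\}$, so the condition $|W(\by)| > \delta N$ translates directly to $\Pr_{\bx \sim F}[f(\bx \oplus \by) = 0] > \delta$. Chaining these yields $p > \delta^2/2$.

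The second case is the less immediate one and is the step I expect to require the most care. For $a \in H \setminus F$ we have $f(a) = 0$, so $W(a) = \{x \in F : a \oplus x \in F\}$; hence more than $\delta N$ elements $x \in F$ witness that $a$ can be written as $\bx \oplus \by$ for some $\bx, \by \in F$ (namely $\bx = x$ and $\by = a \oplus x$, both in $F$). This gives $\Pr_{\bx, \by \sim F}[\bx \oplus \by = a] > \delta/N$, and summing over $a \in H \setminus F$:
\[
\Prx_{\bx, \by \sim F}\big[\bx \oplus \by \in H \setminus F\big] \;=\; \sum_{a \in H \setminus F} \Prx_{\bx, \by \sim F}\big[\bx \oplus \by = a\big] \;>\; \frac{\delta N}{2} \cdot \frac{\delta}{N} \;=\; \frac{\delta^2}{2}.
\]
Since $f$ vanishes on $H \setminus F$, any iteration on which $\bx \oplus \by$ falls in $H \setminus F$ rejects, so again $p > \delta^2/2$.

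Combining the two cases, every iteration of Phase~1 rejects with probability greater than the absolute constant $\delta^2/2 = 1/(2 \cdot 36^2)$. Running $c_1/\epsilon \ge c_1$ independent iterations then drives the probability of passing Phase~1 (and thereby reaching line~7) below $1/10$ as soon as $c_1$ is taken large enough. The main subtlety is the double-counting step in Case~2 that converts a lower bound on $|H \setminus F|$ into a lower bound on collisions in $F \times F$ under XOR; the rest is a clean union bound and Chernoff-free amplification.
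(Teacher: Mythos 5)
Your proposal is correct and follows essentially the same argument as the paper's proof: you split on whether $|H \cap F| \ge \delta N/2$ or $|H \setminus F| \ge \delta N/2$ (the latter being exactly the paper's $H \cap f^{-1}(0)$), derive the per-iteration rejection bound $\delta^2/2$ in each case via the same witness-counting and XOR-collision argument, and amplify over the $c_1/\epsilon$ iterations of Phase~1.
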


\begin{proof}
Suppose that $|H| \geq \delta  N$. Similar to the proof of  \Cref{lem:f is close to g conj}, we consider two cases: either (i)  $|H \cap F| \geq \delta N /2$, or
(ii) $|H \cap f^{-1}(0)| \geq \delta N /2$.
We will argue that in both cases we have
\begin{equation} \label{eq:xyz}
\Prx_{\bx,\by \sim F}\big[f(\bx \oplus \by)=0\big] \geq \delta^2/2
\end{equation}
and thus, the probability that we fail to reject $f$ in line $3$ is at most $(1-\delta^2/2)^{c_1/\epsilon}$, which is less than $1/10$ for a suitably large $c_1$.

In the first case, we have for any $y \in H \cap F$:
$$\Prx_{\bx \sim F}\big[f(y \oplus \bx)=0\big] \geq {\delta}.
$$ 
On the other hand in this first case $\by \sim F$ belongs to $H \cap F$ with probability at least ${\delta/2}$. So altogether we have 
$$\Prx_{\bx,\by \sim F}\big[f(\bx \oplus \by)=0\big] \geq \delta^2/2,$$
giving~\Cref{eq:xyz} as desired in the first case.
    
In the second case, there exist at least $\delta N /2$ points $a$ with $f(a)=0$ and $a \in H$. For any such point $a \in H \cap f^{-1}(0)$ we have
    $$\Prx_{\bx \sim F}\big[f(a \oplus \bx)=1\big] \geq \delta.$$
    But if $f(a \oplus x)=1$ then $a \oplus x = y$ for some $y \in F$. Hence
    $$\Prx_{\bx, \by \sim F}\big[a \oplus \bx  = \by \big] \geq \delta/N,
\text{~which we can rewrite as~}
\Prx_{\bx,\by \sim F}\big[\bx \oplus \by = a \big] \geq \delta/N.$$
    Summing over all $a \in H \cap f^{-1}(0)$, we have 
\begin{align*}
    \Prx_{\bx,\by \sim F}\big[f(\bx \oplus \by)=0\big]
     &\ge \Prx_{\bx,\by \sim F}\big[ \bx \oplus \by\in H\cap f^{-1}(0)\big]\ge \frac{\delta N}{2}\cdot \frac{\delta}{N}\ge \frac{\delta^2}{2},\end{align*}
so indeed \Cref{eq:xyz} holds in the second case as well.
    \end{proof}

We know from the definition of $g_f$ that for every $a \in \{0,1\}^n$, $$\Prx_{\bx \sim F} \big[g_f(a)=f(a \oplus \bx)\big] \geq 1/2.$$
The next Lemma states that whenever $|H|$ is small, %
the agreement probability is a lot higher. This in turn will allow us to prove \Cref{lem:g is affine}. 

\begin{lemma} 
\label{lem:amplification}
    Assume that $|H| \leq \delta N$. Then for every $a \in \{0,1\}^n$,  we have  
    $$\Prx_{\bx \sim F}\big[g_f(a) = f(a \oplus \bx)\big] \geq 1- 4 \delta.$$
\end{lemma}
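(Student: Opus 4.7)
Fix any $a \in \{0,1\}^n$ and write $p_a := \Pr_{\bx \sim F}[f(a \oplus \bx) = 1]$, so that $g_f(a) = 1$ iff $p_a \geq 1/2$. I will bound $\min(p_a, 1-p_a)$ by $4\delta$; since $\Pr_{\bx \sim F}[g_f(a) = f(a \oplus \bx)]$ equals $p_a$ when $g_f(a) = 1$ and $1 - p_a$ otherwise, this yields the claim.

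The main tool is a two-sample ``pivot'' argument at the point $a \oplus \bx \oplus \by$. Draw $\bx, \by \sim F$ independently and define the indicators $X := f(a\oplus \bx)$, $Y := f(a\oplus \by)$, $Z := f(a \oplus \bx \oplus \by)$. I first claim that $\Pr[X \neq Z] \leq 2\delta$. To see this, note that as $\bx$ ranges over $F$, the point $a' := a \oplus \bx$ is uniform over the translated set $a \oplus F$, which has size $N$. Therefore
\[
\Prx_{\bx \sim F}[a \oplus \bx \in H] \;=\; \frac{|H \cap (a \oplus F)|}{N} \;\leq\; \frac{|H|}{N} \;\leq\; \delta,
\]
using the hypothesis $|H| \leq \delta N$. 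Conditioned on $a' = a \oplus \bx \notin H$, the definition of $H$ gives $\Pr_{\by \sim F}[f(a' \oplus \by) \neq f(a')] \leq \delta$, which is exactly $\Pr[Z \neq X \mid a' \notin H] \leq \delta$. Summing the two cases, $\Pr[X \neq Z] \leq \delta + \delta = 2\delta$. By the identical argument (swapping the roles of $\bx$ and $\by$), $\Pr[Y \neq Z] \leq 2\delta$, so a union bound gives $\Pr[X \neq Y] \leq 4\delta$.

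Now I convert this into a bound on $p_a$. Since $\bx$ and $\by$ are independent draws from $F$, the indicators $X, Y$ are i.i.d.\ Bernoulli$(p_a)$, so $\Pr[X \neq Y] = 2 p_a(1-p_a)$. Hence $2 p_a (1-p_a) \leq 4\delta$. If $g_f(a) = 1$ then $p_a \geq 1/2$, so $2p_a(1-p_a) \geq 1 - p_a$, forcing $1 - p_a \leq 4\delta$; if $g_f(a) = 0$ then $1 - p_a > 1/2$, so $2p_a(1-p_a) \geq p_a$, forcing $p_a \leq 4\delta$. In either case $\Pr_{\bx \sim F}[g_f(a) = f(a \oplus \bx)] \geq 1 - 4\delta$, as required.

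The only even mildly subtle step is verifying that $|H \cap (a \oplus F)| \leq |H|$, which is immediate since $a \oplus F$ is just some $N$-element subset of $\{0,1\}^n$; the rest is bookkeeping. The main idea to highlight is the ``triangle'' $X \leftrightarrow Z \leftrightarrow Y$ through the pivot $a \oplus \bx \oplus \by$, which lets a bound on disagreement at non-$H$ points (a \emph{local} property of $H$) upgrade the defining $1/2$-majority guarantee of $g_f(a)$ into the much stronger $(1-4\delta)$-supermajority.
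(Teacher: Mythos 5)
Your proof is correct and takes essentially the same route as the paper: both bound $\Pr_{\bx,\by\sim F}[f(a\oplus\bx)\ne f(a\oplus\by)]$ by a pivot argument through $a\oplus\bx\oplus\by$ (splitting on whether $a\oplus\bx\in H$) and then use independence of $\bx,\by$ to turn this into the quadratic inequality $p_a(1-p_a)\le 2\delta$, which forces the supermajority conclusion. The only cosmetic difference is that the paper parametrizes by $\gamma=\max(p_a,1-p_a)$ and writes the agreement probability as $\gamma^2+(1-\gamma)^2$, whereas you work directly with $p_a$ and the disagreement probability $2p_a(1-p_a)$.
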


Armed with the above Lemma (which we defer to the end of this section), we can now complete the proof of  \Cref{lem:g is affine}.

\begin{proofof}{\Cref{lem:g is affine}}
By \Cref{lem: H big we reject} if $|H| >\delta N$, then with probability at least $9/10$, $f$ is rejected in Phase 1. 
Thus it suffices to show that if $|H| \leq \delta N$, then $G$ is a linear subspace.
Towards this goal, assume for sake of contradiction that $G$ is not a linear subspace, and thus there exist $a,b \in \{0,1\}^n$ such that $g_f(a)=g_f(b)=1$ but $g_f(a \oplus b)=0$.
At a high level, we obtain a contradiction as follows: 
\begin{itemize}
    \item [(1)] First, using the assumption that $g_f(a)=g_f(b)=1$ plus  \Cref{lem:amplification} we show that
$$\Prx_{\bx,\by \sim F}\big[f(a \oplus \bx \oplus b \oplus \by)=1\big] \geq 1-10\delta.$$
\item[(2)] On the other hand using the assumption that $g_f(a \oplus b)=0$ plus \Cref{lem:amplification} we show that
$$\Prx_{\bx,\by \sim F}\big[f(a \oplus \bx \oplus b \oplus \by)=1\big] < 6 \delta.$$
\end{itemize}
Together this gives a contradiction by our choice of $\delta=1/36$.

\medskip

\noindent {\it Proving (1):}
First, by definition of $W(\cdot)$ we have
\begin{align*}
    \Prx_{\bx,\by \sim F}\big[f(a \oplus \bx \oplus b \oplus \by)=1\big]   &\geq \Prx_{\bx,\by \sim F}\left[\big( a \oplus \bx \in F \setminus H\big) \land \big(b \oplus \by \in F \setminus W(a \oplus \bx)\big)\right] \\[0.8ex]
    & \geq \Prx_{\bx \sim F}\big[ a \oplus \bx \in F \setminus H\big] \times \min_{a' \in F \setminus H} \Prx_{\by \sim F}\big[ b \oplus y \in F  \setminus W(a')\big].
\end{align*}
We claim that the first term above is at least $1-5 \delta$.
This is because 
\begin{flushleft}\begin{enumerate}
\item $\Pr_{\bx \sim F}[a \oplus \bx \in F] \geq 1-4\delta$ using $g_f(a)=1$ and \Cref{lem:amplification}, and 
\item $\Pr_{\bx \sim F}[a \oplus \bx \in H] \leq \delta$ since $|H|\le \delta N$ and for a fixed $a$, by linearity every $z \in H$ has {at most one} unique representation as $a \oplus \bx$ with $\bx\in F$. %
\end{enumerate}\end{flushleft}
It follows that the first term is at least $1-5\delta$.

Next we claim that the second term is at least $1-5 \delta$.
This is because
\begin{flushleft}\begin{enumerate}
    \item 
 $\Pr_{\by \in F}[b \oplus \by \in F ] \geq 1-4 \delta$ using $g_f(b)=1$
 and \Cref{lem:amplification}, and 
\item Given that $a' \in F \setminus H$, we have  
  $|W(a')| \leq \delta N$ and thus,
 $\Pr_{\by \sim F}[b \oplus \by \in W(a')] \leq \delta$. 
\end{enumerate}\end{flushleft}
So the second term is at least $1-5 \delta$, and 
  the product is at least $(1-5\delta)(1-5\delta) \geq 1-10\delta$.
\medskip

\noindent {\it Proving (2):}
We bound the probability of interest as follows:
\begin{align*}
    &\hspace{-1cm}\Prx_{\bx,\by \sim F}\big[f(a \oplus \bx \oplus b \oplus \by)=1\big] \\
    &\leq \Prx_{\bx \sim F}\big[a\oplus b \oplus \bx \in H\big] + \Prx_{\bx \sim F}\big[f(a \oplus b \oplus \bx)=1\big] \\
    & ~~~~~~~~~+\Prx_{\bx,\by \sim F}\big[f(a\oplus \bx \oplus b \oplus \by)=1 \hspace{0.08cm}|\hspace{0.08cm} a\oplus b \oplus \bx \not \in H \land f(a \oplus b \oplus \bx)=0\big].
\end{align*}
The first term is at most $\delta$ by linearity;
the second term is at most $4\delta$ since $g_f(a\oplus b)=0$ and by \Cref{lem:amplification}, and the last term is also bounded by $\delta$ by the definition of $H$. 
Therefore the probability that $f(a \oplus \bx \oplus b \oplus \by)=1$ is at most $6 \delta$. 
\end{proofof}

It is left to prove \Cref{lem:amplification}.

\begin{proofof}{\Cref{lem:amplification}}
We want to prove that if $H$ is small ($|H| \leq \delta N$), then the agreement between $g_f$ and $f$ is amplified in the sense that for any $a\in \{0,1\}^n$,  $\Pr_{\bx \sim F}[g_f(a)=f(a \oplus \bx)] \geq 1-4\delta$.~By the definition of $g_f$, this probability is at least $1/2$. Clearly any $a \not \in H$ has at most $\delta N$ witnesses $x$ that violate the equality, so for these $a$'s, the probability is easily seen to get amplified to $1-\delta \geq 1-4\delta$. However, we want to show that small $H$ implies that the agreement between $g_f$ and $f$ is amplified for {\it every} $a$. 
To show this, fix $a$ and let $\gamma = \Pr_{\bx \sim F}[f(a \oplus \bx) = g_f(a)]$.
We will consider the quantity
\begin{equation}\label{amplif-eqn}
\Prx_{\bx,\by \sim F}\big[f(a \oplus \bx) = f(a \oplus \by)\big].
\end{equation}
On the one hand, we will relate it to $\gamma$ and on the other hand derive an expression in terms of $\delta$.
\begin{align*}
         &\hspace{-1 cm}\Prx_{\bx,\by \sim F}\big[f(a \oplus \bx) = f(a \oplus \by)\big]
         \\&\geq \Prx_{\bx,\by \sim F}\big[f(a \oplus \bx)=f(a \oplus \bx \oplus \by) \land  f(a \oplus \by)=f(a \oplus \bx \oplus \by )\big]  \\
        &= 1 - \Prx_{\bx,\by \sim F}\big[f(a \oplus \bx) \neq f(a \oplus \bx \oplus \by) \lor  f(a \oplus \by) \neq f(a \oplus \bx \oplus \by)\big] \\
        & \geq 1 - 2 \times \Prx_{\bx, \by \sim F} \big[f(a \oplus \bx) \neq f(a \oplus \bx \oplus \by)\big].
\end{align*}
Now we will show that the last term is at most $2 \delta$:
\begin{align*}
        &\hspace{-1 cm}\Prx_{\bx, \by \sim F}\big[f(a \oplus \bx) \neq f(a \oplus \bx \oplus \by)\big]\\[-1ex]
        &\leq \Prx_{\bx \sim F}\big[a \oplus \bx \in H\big] \times \max_{a' \in H} \left\{ \Prx_{\by \sim F}\big[f(a') \neq f(a' \oplus \by)\big] \right\} \\ 
        &\hspace{1cm}+ \Prx_{\bx \sim F}\big[a \oplus \bx \not \in H\big] \times \max_{a' \not \in H} \left\{\Prx_{\by \sim F}\big[f(a') \neq f(a' \oplus \by)\big]\right\}.
    \end{align*}
The second term is at most $\delta$ since for every $a' \not \in H$, $\Pr_{\by \sim F}[f(a') \neq f(a' \oplus \by)] \leq \delta$ and trivially~we have $\Pr_{\bx \sim F}[a \oplus \bx \not \in H] \leq 1$.
The first term is also at most $\delta$, since we have $\Pr_{\bx \sim F}[a \oplus \bx \in H] \leq \delta$  by linearity %
and trivially $\Pr_{\by \sim F}[f(a') \neq f(a' \oplus \by)] \leq 1$. 
Therefore, \Cref{amplif-eqn} is at least $1-4 \delta$.

Now we will express the same quantity in terms of the agreement probability, $\gamma$:
\begin{align*}
        &\hspace{-1cm}\Prx_{\bx, \by \sim F}\big[f(a \oplus \bx) = f(a \oplus \by)\big] \\[-0.5ex]
        &= \Prx_{\bx, \by \sim F}\big[f(a \oplus \bx)=g_f(a)  \land  f(a \oplus \by)=g_f(a)\big] \\[-0.5ex]
        &\hspace{1cm}+ \Prx_{\bx,\by \sim F}\big[f(a \oplus \bx) \neq g_f(a) \land  f(a \oplus \by) \neq g_f(a)\big] \\
        &= \gamma^2 + (1-\gamma)^2,
\end{align*}
using independence of $\bx$ and $\by$.
By combining our two inequalities,  we have 
$\gamma^2+(1-\gamma)^2 \geq 1 - 4 \delta$. Rearranging we have $\gamma(1-\gamma) \leq 2\delta$. But $\gamma \geq 1/2$, so $1 - \gamma \leq 4 \delta$, as we wanted to show.
\end{proofof}

\subsubsection{Analysis of phase 2 (lines 4-9)}

For the second phase, we will establish the following lemma that states that any $f$ that has passed Phase 1 but is far from an anti-monotone conjunction, will be rejected with high probability.

\begin{lemma}\label{lem: line 7-11}
    Consider an execution of \Cref{algo: mono conjunction tester}  up through line 8 on a function $f$  that satisfies the following conditions:
    \begin{itemize}
        \item $f$ is $\epsilon$-far in relative distance from every {anti-monotone} conjunction;
        \item $\reldist(f,g_f) \leq \epsilon/10$;
        \item and $G:=g_f^{-1}(1)$ is a linear subspace.
    \end{itemize}
Then $\Pr[$\Cref{algo: mono conjunction tester} rejects $f$ in lines 4-8$]\geq 0.9$.
\end{lemma}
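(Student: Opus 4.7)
The plan is to show that each of the $c_2$ iterations of Phase 2 rejects $f$ with at least some absolute constant probability, so that taking $c_2$ a sufficiently large constant yields overall rejection probability at least $0.9$. The key combinatorial quantity is
$$\gamma_G := \Prx_{\bx \sim G,\, \by \preceq \bx}\big[\by \notin G\big],$$
which measures how ``non-anti-monotone'' the linear subspace $G$ is; the main technical goal will be to prove $\gamma_G \geq 1/4$.

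First I would combine the hypothesis $\reldist(f, g_f) \leq \epsilon/10$ with the assumption that $f$ is $\epsilon$-far in relative distance from every anti-monotone conjunction: by the approximate triangle inequality (\Cref{thm: approx triangle ineq}) this yields that $g_f$ is $\Omega(\epsilon)$-far from every anti-monotone conjunction, and in particular $g_f$ is not itself an anti-monotone conjunction.

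Next I would prove the structural lower bound $\gamma_G \geq 1/4$. Let $S := \{i \in [n] : e_i \in G\}$ where $e_i$ denotes the $i$-th unit vector; since $G$ is linear, $G \supseteq \mathrm{span}(\{e_i : i \in S\}) = \{x : \mathrm{supp}(x) \subseteq S\} = C_S^{-1}(1)$ for the anti-monotone conjunction $C_S := \bigwedge_{j \notin S} \overline{x_j}$. Because $g_f$ is not an anti-monotone conjunction and $G$ is linear, we must have $G \supsetneq C_S^{-1}(1)$; applying Lagrange's theorem to the $\F_2$-subspace $C_S^{-1}(1)$ of $G$ gives $|G| \geq 2\hspace{0.04cm}|C_S^{-1}(1)|$, so $|G \setminus C_S^{-1}(1)| \geq |G|/2$. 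For any $\bx \in G \setminus C_S^{-1}(1)$ there exists $j \in \mathrm{supp}(\bx)$ with $j \notin S$, so $e_j \preceq \bx$ but $e_j \notin G$; hence $G \cap \{y : y \preceq \bx\}$ is a linear subspace of the coordinate subspace $\{y : y \preceq \bx\}$ that omits $e_j$, and so is a proper subspace of density at most $1/2$ there. Combining both bounds yields $\gamma_G \geq (1/2)(1/2) = 1/4$.

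Finally, I would translate this into a per-iteration rejection probability. Both $\bx$ and $\bu$ sampled from $\SAMP(f)$ lie in $G$ with probability at least $1 - \epsilon/10$ each, since $|F \triangle G|/|F| \leq \epsilon/10$, and the density of $F \cap G$ in $G$ is likewise $1 - O(\epsilon)$. A direct calculation combining these facts with $\gamma_G \geq 1/4$ shows that the event $\{\bx \in G,\, \bu \in G,\, \by \notin G\}$ occurs with probability at least $1/8 - O(\epsilon)$; on this event, the linearity of $G$ together with $\bu \in G$ implies that $\by \oplus \bu \in G$ iff $\by \in G$, so $\by \oplus \bu \notin G$ and hence $g_f(\by \oplus \bu) = 0$. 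It remains to replace $g_f$ by $f$: for any fixed $\by$, the variable $\by \oplus \bu$ is uniform over the coset $\by \oplus F$, so the probability it lies in $F \triangle G$ is at most $|F \triangle G|/|F| \leq \epsilon/10$. Thus each iteration rejects with probability at least an absolute constant, and taking $c_2$ sufficiently large gives the desired $0.9$ rejection probability. The main obstacle is the structural bound $\gamma_G \geq 1/4$, which must be extracted from the merely qualitative fact that $g_f$ is not an anti-monotone conjunction, and which simultaneously exploits the linear-subspace structure of $G$ (via Lagrange's theorem) and of the coordinate subspace $\{y : y \preceq \bx\}$ (via a proper-subspace density count).
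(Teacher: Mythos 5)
Your proof is correct and follows essentially the same structural route as the paper's, though with two organizational differences worth noting.

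First, your subspace $C_S^{-1}(1)$ (where $S = \{i : e_i \in G\}$) is exactly the paper's set $\mathcal{X} := \{x \in G : \{y : y \preceq x\} \subseteq G\}$; you give an explicit generator-based description while the paper proves directly that $\mathcal{X}$ is a subspace via a hand-crafted decomposition $y = y^1 \oplus y^2$. Your route makes the subspace structure immediate, which is a mild simplification. The two subspace-density arguments you identify as the crux (Lagrange for $C_S^{-1}(1) \subsetneq G$, and the proper-subspace count inside $\{y : y \preceq \bx\}$) are exactly the paper's uses of Fact~\ref{claim:PRS-15}, so your $\gamma_G \geq 1/4$ is the same quantitative content as the paper's Claim~\ref{lemma:PRS02} together with Claim~\ref{claim:yuv} (minus the $g_f$-to-$f$ translation).

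Second, and this is the more substantive difference: to pass from $\by \notin G$ to $f(\by \oplus \bu) = 0$, you condition on $\bu \in G$, invoke linearity to get $\by \oplus \bu \notin G$, and then pay an additive $\epsilon/10$ to control $\Pr[\by\oplus\bu \in F\setminus G]$. The paper avoids this entirely by exploiting the \emph{definition} of $g_f$: since $g_f(\by)=0$ means by definition that $\Pr_{\bu\sim F}[f(\by\oplus\bu)=1]<1/2$, one gets $\Pr_{\bu\sim F}[f(\by\oplus\bu)=0]\ge 1/2$ pointwise for every $\by\notin G$, with no conditioning on $\bu$ and no additive loss. This gives the paper a clean multiplicative bound $\ge (1/10)\cdot(1/2)\cdot(1/2)=1/40$ per round (via the separate Lemma~\ref{lem: f cap g is large} for the $1/10$), whereas your bound comes out to roughly $0.9\cdot(1/8) - 1/10 \approx 0.0125$, so your constants are tighter but still positive for $\epsilon \le 1$. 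Both are valid; the paper's direct use of the $g_f$ definition is the cleaner move that you missed, and it is worth internalizing since it is the whole point of introducing $g_f$ in the first place.
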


\def\LT{\textsf{LT}}

We will need the following two definitions. For any $x \in G$, we define 
\begin{align*}
\LT(x) =\big\{ y \in \zo^n : y \preceq x \big\}\quad\text{and}\quad
\mathcal{X} =\big\{x \in G : \LT(x) \subseteq G\big\}.
\end{align*}

We recall the following useful claim that is analogous to Claim~21 from \cite{PRS02}:%

\begin{claim} \label{lemma:PRS02} %
If $G$ is a {linear} subspace of $\zo^n$ then the set $\mathcal{X}$ is a {linear} subspace of $G$. Moreover, if $g_f$ is not an anti-monotone conjunction then $|\mathcal{X}|\leq  |G|/2$.
\end{claim}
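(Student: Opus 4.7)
I would split the claim into its two assertions and handle them separately.

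\emph{First assertion: $\mathcal{X}$ is a linear subspace of $G$.} Since $G$ is a linear subspace, $0\in G$, and $\LT(0)=\{0\}\subseteq G$, so $0\in\mathcal{X}$. It remains to show $\mathcal{X}$ is closed under XOR. Fix $x,x'\in\mathcal{X}$. Then $x,x'\in G$, so $x\oplus x'\in G$ because $G$ is a linear subspace. I must still check $\LT(x\oplus x')\subseteq G$. The key idea is: for any $y\preceq x\oplus x'$, let $z:=y\wedge x$ and $w:=y\wedge x'$ (bitwise AND). Then $z\preceq x$ and $w\preceq x'$, so $z\in\LT(x)\subseteq G$ and $w\in\LT(x')\subseteq G$. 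A short coordinate-wise case analysis shows $z\oplus w=y$: wherever $y_i=0$, both $z_i=w_i=0$; wherever $y_i=1$, we have $(x\oplus x')_i=1$, so exactly one of $x_i,x'_i$ equals $1$ and hence exactly one of $z_i,w_i$ equals $1$. Thus $y=z\oplus w\in G$ by closure of $G$ under XOR, which gives $\LT(x\oplus x')\subseteq G$ and hence $x\oplus x'\in\mathcal{X}$.

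\emph{Second assertion: if $g_f$ is not an anti-monotone conjunction, then $|\mathcal{X}|\leq|G|/2$.} We are given that $G$ is a linear subspace, so by \Cref{lem: monotone conj characterization}, the fact that $g_f$ is not an anti-monotone conjunction forces $G$ to fail to be anti-monotone. Hence there exists some $x^\ast\in G$ with $\LT(x^\ast)\not\subseteq G$, i.e.~$x^\ast\in G\setminus\mathcal{X}$. So $\mathcal{X}$ is a \emph{proper} linear subspace of $G$. Applying \Cref{claim:PRS-15} with $H:=G$ and $H':=\mathcal{X}$ (noting $G\not\subseteq\mathcal{X}$ since $x^\ast\in G\setminus\mathcal{X}$), we get $|G\cap\mathcal{X}|/|G|\leq 1/2$, and since $\mathcal{X}\subseteq G$ this is $|\mathcal{X}|/|G|\leq 1/2$, as desired.

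\emph{Main obstacle.} The only step that requires genuine thought is the decomposition $y=z\oplus w$ in the first assertion; once one notices the trick of projecting $y$ onto the supports of $x$ and $x'$ via AND, everything else is essentially bookkeeping and a direct appeal to the previously established \Cref{lem: monotone conj characterization} and \Cref{claim:PRS-15}.
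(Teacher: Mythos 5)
Your proof is correct and takes essentially the same route as the paper's: the decomposition $y = (y\wedge x)\oplus(y\wedge x')$ is exactly the coordinate-wise construction the paper writes out explicitly, and the second part (passing from ``$g_f$ not an anti-monotone conjunction'' to ``$G$ not anti-monotone'' via \Cref{lem: monotone conj characterization}, hence $\mathcal{X}\subsetneq G$, then invoking \Cref{claim:PRS-15}) matches the paper verbatim in substance.
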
 
\begin{proof}
    We first show $\mathcal{X}$ is a linear subspace of $G$. To do this, we need to show that if $x^1,x^2 \in \mathcal{X}$ then $z=x^1 \oplus x^2 \in \mathcal{X}$, meaning $y \in G$ for every $y \preceq z$. Fix $y\preceq z$, we want to write $y=y^1 \oplus y^2$ where $y^1 \preceq x^1, y^2 \preceq x^2$. For a coordinate $i$ if $y_i=0$, then we set $y^1_i=y^2_i=0$. Otherwise if $y_i=1$, then $z_i=1$ so at least one of $x^1_i$ or $x^2_i$ is $1$. If $x^1_i=1$, we set $y^1_i=1$ and $y^2_i=0$ and otherwise  $y^2_i=1$ and $y^1_i=0$. It's easy to check $y^1 \preceq x^1$ and $y^2 \preceq x^2$. Since $x^1,x^2 \in \mathcal{X}$ this implies $y^1,y^2 \in G$ and since $G$ is a linear subspace we also have $y \in G$. So $\mathcal{X} \subseteq G$. 

    If $\mathcal{X}=G$, then $G$ is an anti-monotone subspace. By \Cref{lem: monotone conj characterization}, this implies $g_f$ is an anti-monotone conjunction. So, if $g_f$ isn't an anti-monotone conjunction, we must have $G \not \subseteq \mathcal{X}$, by \Cref{claim:PRS-15} this implies $|\mathcal{X}| \leq |G|/2$. 
\end{proof}

We will also use the following simple claim:
\begin{claim}\label{claim:yuv}
   If $G$ is a {linear} subspace of $\zo^n$ and $g_f$ is not anti-monotone, then 
$$\Prx_{\substack{\bu \sim F \\ \by \preceq x}}\big[f(\by \oplus \bu )=0\big] \geq 1/4,\quad\text{for any  $x \in {G} \setminus \mathcal{X}$.}$$
\end{claim}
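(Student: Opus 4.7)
The plan is to decompose the desired event into two pieces, via the intermediate condition $g_f(\by) = 0$: informally, with probability at least $1/2$ the sub-cube point $\by$ lies outside $G$, and conditioned on this, the definition of $g_f$ forces $f(\by \oplus \bu) = 0$ with probability at least $1/2$ over $\bu \sim F$. Multiplying gives $1/4$.

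The first step is to observe that the sub-cube $T := \{y \in \{0,1\}^n : y \preceq x\}$ is itself a \emph{linear} subspace of $\{0,1\}^n$: coordinate-wise, $(y \oplus y')_i \le \max(y_i, y'_i) \le x_i$, so $T$ is closed under $\oplus$. Consequently $T \cap G$ is a linear subspace of $T$ (being the intersection of two linear subspaces). Since $x \in G \setminus \mathcal{X}$, by definition of $\mathcal{X}$ there exists some $y^\ast \preceq x$ with $y^\ast \notin G$, so $T \not\subseteq G$. Applying \Cref{claim:PRS-15} with the roles of $H, H'$ played by $T$ and $T \cap G$ inside $T$ (or directly to $T$ and $G$, noting $T \cap G \subsetneq T$), I get $|T \cap G| \le |T|/2$, and hence
\[
\Prx_{\by \preceq x}\bigl[g_f(\by) = 0\bigr] \;=\; \Prx_{\by \preceq x}[\by \notin G] \;\ge\; \tfrac{1}{2}.
\]

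The second step invokes the definition \eqref{eq:g} of $g_f$: if $y$ satisfies $g_f(y) = 0$, then $\Pr_{\bu \sim F}[f(y \oplus \bu) = 1] < 1/2$, so $\Pr_{\bu \sim F}[f(y \oplus \bu) = 0] \ge 1/2$. Averaging this pointwise bound over all $y \preceq x$ with $g_f(y) = 0$ and combining with the previous inequality gives
\[
\Prx_{\substack{\bu \sim F\\ \by \preceq x}}\bigl[f(\by \oplus \bu) = 0\bigr]
\;\ge\; \Prx_{\by \preceq x}\bigl[g_f(\by) = 0\bigr] \cdot \tfrac{1}{2}
\;\ge\; \tfrac{1}{4},
\]
as desired.

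There is no serious obstacle here; the only subtlety worth flagging is the verification that $T$ is closed under $\oplus$ (which is what allows us to apply \Cref{claim:PRS-15}) and the mild care needed because \Cref{claim:PRS-15} is stated for linear subspaces of $\{0,1\}^n$ rather than of $T$. Both are immediate.
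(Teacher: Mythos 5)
Your proof is correct and follows essentially the same route as the paper's: show $\LT(x)$ is a linear subspace, observe $\LT(x)\not\subseteq G$ since $x\notin\mathcal{X}$, apply \Cref{claim:PRS-15} to get $\Pr_{\by\preceq x}[\by\notin G]\ge 1/2$, and then use the definition of $g_f$ to conclude with probability at least $1/2\cdot 1/2=1/4$. The only difference is that you spell out the verification that $\LT(x)$ is closed under $\oplus$, which the paper treats as immediate.
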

\begin{proof}
    Note that $\LT(x)$ is a {linear} subspace, and since $x \not \in \mathcal{X}$ we have that $\LT(x) \not \subseteq G$. By \Cref{claim:PRS-15}: %
     $$ \Prx_{\by \preceq x}\big[\by \in G\big] \leq 1/2.$$
     Thus, if we sample $\by \preceq x$ uniformly at random, then with probability at least $1/2$ we have  $\by\notin G$ and $g_f(\by)=0$ which means $\Prx_{\bu \sim F}[f(\by \oplus \bu )=0] \geq 1/2$.
This finishes the proof of the claim.
\end{proof}

\begin{lemma}\label{lem: f cap g is large}
   Assume that $G$ is a linear subspace and $g_f$ is not an anti-monotone conjunction, and $\reldist(f,g_f) \leq 0.1$. Then we have$$\left|\left(G \setminus \mathcal{X}\right) \cap F\right| \geq 0.1N.$$
\end{lemma}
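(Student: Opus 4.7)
The idea is to combine the bound $|\mathcal{X}| \le |G|/2$ (which is available from \Cref{lemma:PRS02} precisely because $g_f$ is not an anti-monotone conjunction and $G$ is a linear subspace) with the closeness bound $\reldist(f,g_f)\le 0.1$, which translates into $|F\triangle G|\le 0.1\,N$, and in particular $|G\setminus F|\le 0.1\,N$.

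First I would note that since $|F\setminus G|\le 0.1\,N$, we have
\[
|G|\ \ge\ |F\cap G|\ =\ |F|-|F\setminus G|\ \ge\ 0.9\,N.
\]
Next, applying \Cref{lemma:PRS02} under the hypothesis that $g_f$ is not an anti-monotone conjunction gives $|\mathcal{X}|\le |G|/2$, so
\[
|G\setminus \mathcal{X}|\ \ge\ |G|/2\ \ge\ 0.45\,N.
\]

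To finish, I would pass to the intersection with $F$ using inclusion-exclusion. Since $G\setminus\mathcal{X}\subseteq G$, the set $(G\setminus\mathcal{X})\setminus F$ is contained in $G\setminus F$, whose size is at most $0.1\,N$. Therefore
\[
\bigl|(G\setminus \mathcal{X})\cap F\bigr|\ \ge\ |G\setminus \mathcal{X}|-|G\setminus F|\ \ge\ 0.45\,N-0.1\,N\ =\ 0.35\,N,
\]
which is comfortably at least $0.1\,N$, as desired.

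There is no real obstacle here: the lemma is essentially a two-line book-keeping argument once one has the structural fact $|\mathcal{X}|\le |G|/2$ from \Cref{lemma:PRS02} and the density bound $|G|\ge 0.9\,N$ from the hypothesis on $\reldist(f,g_f)$. The only thing to be a bit careful about is the direction of the symmetric difference: we genuinely need $|G\setminus F|$ to be small (not just $|F\setminus G|$), but both are controlled by $\reldist(f,g_f)\cdot N = 0.1\,N$, so this causes no issue.
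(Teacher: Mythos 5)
Your proof is correct and takes essentially the same route as the paper: obtain $|G|\ge 0.9N$ from $\reldist(f,g_f)\le 0.1$, invoke \Cref{lemma:PRS02} to get $|G\setminus\mathcal{X}|\ge |G|/2\ge 0.45N$, and subtract $|G\setminus F|\le 0.1N$ to conclude $|(G\setminus\mathcal{X})\cap F|\ge 0.35N\ge 0.1N$. The paper's proof is this same computation with the same constants, so nothing to compare beyond phrasing.
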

\begin{proof}
Since $g_f$ is not an anti-monotone conjunction, by \Cref{lemma:PRS02} we have $\left|G \setminus \mathcal{X}\right| \geq \left|G\right|/2$. But since $\reldist(f,g_f) \leq 0.1 $, we have $|G \setminus F|\le 0.1N$, and we also must have $|G| \geq 0.9N$. Hence  
$$\left|\left(G \setminus \mathcal{X}\right) \cap F\right| \geq 0.9N/2 - 0.1N \geq 0.1N.  $$ 
This finishes the proof of the lemma.
\end{proof}

We can now prove \Cref{lem: line 7-11}:

\begin{proofof}{\Cref{lem: line 7-11}}
    Assume that (i) $\reldist(f,f') > \eps$  for every anti-monotone conjunction $f'$; (ii) $\reldist(f,g_f) \leq \epsilon/10$; %
    and (iii) $G$ is a linear subspace. Then $G$ cannot be an anti-monotone linear subspace, as otherwise by \Cref{lem: monotone conj characterization} $g_f$ would be an anti-monotone conjunction and violate (i) and (ii).
    Hence we have $|(G \setminus \mathcal{X} ) \cap F | \geq N/10$ by \Cref{lem: f cap g is large}; this in turn means that for $\bx \sim F$, with probability at least $1/10$ we have $\bx \in G \setminus \mathcal{X}$. 

    By \Cref{claim:yuv}, for any $x \in G \setminus \mathcal{X}$ we have 
$$\Prx_{\substack{ \bu \sim F \\ \by \preceq x}}\big[f(\by \oplus \bu )=0\big] \geq 1/4.$$
Thus by \Cref{lem: f cap g is large} we have that
$$\Prx_{\substack{\bx,\bu \sim F \\ \by \preceq \bx  }}\big[f(\by \oplus \bu )=0\big] \geq 1/40.$$
Hence, each iteration of lines $4$--$8$ has at least a  $1/40$ probability to reject $f$. So the probability we fail to reject $f$ during lines $4$--$8$ is $(1-1/40)^{c_2}$, which is
    at most $0.1$ for a suitable choice of $c_2$.
\end{proofof}

\subsubsection{Putting the pieces together}
We have all the ingredients we need to establish the soundness of \Cref{algo: mono conjunction tester}: 
\begin{theorem}\label{thm: f isn't mono conj}
    Suppose that $f: \zo^n \to \zo$ satisfies $\reldist(f,f') >\eps$ for every anti-monotone conjunction $f'$. %
    Then \Cref{algo: mono conjunction tester} rejects $f$ with probability at least 0.9.
\end{theorem}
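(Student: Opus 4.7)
The plan is a straightforward case analysis on the probability that Phase~1 passes. Let $p$ denote the probability that an execution of \Cref{algo: mono conjunction tester} on $f$ reaches line~4, i.e., completes Phase~1 without rejecting.

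In the easy case where $p < 1/10$, Phase~1 by itself rejects with probability strictly larger than $9/10$, so the overall algorithm rejects with probability at least $9/10$ and there is nothing more to do.

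In the remaining case $p \geq 1/10$, I would combine the standing hypothesis that $f$ is $\epsilon$-far in relative distance from every anti-monotone conjunction with the contrapositives of \Cref{lem:f is close to g conj} and \Cref{lem:g is affine}. Those two lemmas together yield that $\reldist(f,g_f) \leq \epsilon/10$ and that $G = g_f^{-1}(1)$ is a linear subspace. All three hypotheses of \Cref{lem: line 7-11} are therefore satisfied, so conditional on reaching line~4 the second phase rejects with probability at least $9/10$. The overall acceptance probability is then at most $p \cdot (1/10) \leq 1/10$, as required.

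Since essentially all the technical work already lives in the supporting lemmas, I do not anticipate any obstacle in the final combination step, which is purely bookkeeping. The main effort was in establishing \Cref{lem: line 7-11}: its proof uses that the function $g_f$ (which is close to $f$ in relative distance) corresponds to a linear subspace $G$ that is not anti-monotone, so that \Cref{lem: f cap g is large} together with \Cref{claim:yuv} deliver a constant-probability rejection per iteration of Phase~2. Given that machinery, together with the Phase~1 analysis, the present theorem drops out by the two-case argument above, with the constants $c_1, c_2$ in the algorithm simply needing to be chosen large enough for all the probabilistic bounds in those lemmas to hold simultaneously.
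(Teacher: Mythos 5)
Your proposal is correct and follows essentially the same route as the paper: establish via Lemmas~\ref{lem:f is close to g conj} and~\ref{lem:g is affine} that if Phase~1 passes with probability at least $1/10$ then $\reldist(f,g_f)\le\eps/10$ and $G$ is a linear subspace, then invoke Lemma~\ref{lem: line 7-11} to bound the Phase~2 pass probability. (One small wording slip: you apply the two Phase-1 lemmas in their forward direction, not their contrapositives; the paper phrases the whole argument as a contrapositive of the theorem statement, which amounts to the same bookkeeping as your two-case split.)
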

\begin{proof}
We prove the contrapositive, i.e.~that if \Cref{algo: mono conjunction tester} accepts $f$ with probability at least $0.1$ then $f$ has $\reldist(f,f') \leq \eps$ for some anti-monotone conjunction $f'$.

By \Cref{lem:f is close to g conj}, if $f$ passes Phase 1 with probability at least $1/10$, then $\reldist(f,g_f) \leq \epsilon/10$. 
By \Cref{lem:g is affine}, if $f$ passes Phase 1  with probability at least $1/10$, then $G$ is a linear subspace. 
Finally, by \Cref{lem: line 7-11} if $f$ passes Phase 2  with probability at least 0.1, then we must have $\reldist(f,f') \leq \eps$ for some anti-monotone conjunction $f'.$
\end{proof}
We can now prove \Cref{thm:conjunction} about \hyperlink{Algorithm2}{\sc Conj-Test}:
\begin{proofof}{\Cref{thm:conjunction}}
    Recall that \hyperlink{Algorithm2}{\sc Conj-Test} first samples $\by \sim F$ and then runs \Cref{algo: mono conjunction tester} on~$f_{\by}$. 
    When $f$ is a conjunction, $f_{\by}$ is an anti-monotone conjunction and thus, by \Cref{thm: f is mono conj} \hyperlink{Algorithm2}{\sc Conj-Test}   accepts with probability $1$.
    When  $f$~is far from any conjunction, $f_{\by}$ must be far from any anti-monotone conjunction. So by \Cref{thm: f isn't mono conj}, \hyperlink{Algorithm2}{\sc Conj-Test} rejects with probability at least $0.9$. 
\end{proofof}
\subsection{Robustness of  {\sc Conj-Test} to faulty oracles} 

\hyperlink{Algorithm2}{\sc Conj-Test} will play a crucial role in the relative-error
  tester for decision lists in the next section.
As will becomes clear there, \hyperlink{Algorithm2}{\sc Conj-Test} will be run
  on a function $f$ to test whether it is relative-error close to a conjunction when it can only
  access a ``faulty''
  version of the $\SAMP(f)$ oracle.
For later reference we record the following two explicit statements about
  the performance of \hyperlink{Algorithm2}{\sc Conj-Test};
  we show that as long as the faulty version of $\SAMP(f)$ satisfies some mild conditions,
  \hyperlink{Algorithm2}{\sc Conj-Test} still outputs the correct answer with high probability.
In both statements, we write $\calD$ to denote the distribution over $\{0,1\}^n$
  underlying the faulty sampling oracle $\SAMP^*(f)$, which is not necessarily
  uniform over $F=f^{-1}(1)$ (and {in \Cref{thm: reject weaker assumption}} need not even necessarily be supported over $F$).

\begin{theorem}\label{thm: accept weaker assumption}
    Let $f: \zo^n \to \zo$ be a conjunction.
    Assume that the oracle $\SAMP^\ast(f)$ always returns some string in $F$ (i.e., the  distribution $\calD$ is supported over $F$).
Then \hyperlink{Algorithm2}{\sc Conj-Test}, running on $\MQ(f)$ and $\SAMP^*(f)$,
accepts $f$ with probability 1.
\end{theorem}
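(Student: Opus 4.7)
The plan is to observe that the completeness argument for \hyperlink{Algorithm2}{\sc Conj-Test} (that is, the proof of \Cref{thm: f is mono conj} combined with the reduction in \Cref{remark:monotone}) never actually used the uniformity of $\SAMP(f)$: it only used the fact that every string returned by the sampling oracle lies in $F$. Under the weaker hypothesis of the theorem, this property still holds, so the same argument goes through verbatim.

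In more detail, recall that \hyperlink{Algorithm2}{\sc Conj-Test} draws a single $\by$ from $\SAMP^*(f)$ and then runs \Cref{algo: mono conjunction tester} on the shifted function $f_{\by}(x) = f(x \oplus \by)$, where a call to $\SAMP(f_{\by})$ is simulated by drawing $\bx$ from $\SAMP^*(f)$ and returning $\bx \oplus \by$, and a call to $\MQ(f_{\by})$ on input $z$ is simulated by returning $\MQ(f)(z \oplus \by)$. Since by assumption $\SAMP^*(f)$ always returns an element of $F$, and $\by \in F$, the simulated sampling oracle always returns an element of $f_{\by}^{-1}(1) = F \oplus \by$. Moreover, because $f$ is a conjunction and $\by \in F$, the function $f_{\by}$ is an anti-monotone conjunction (by the observation from \cite{PRS02} recalled in \Cref{remark:monotone}), so by \Cref{lem: monotone conj characterization} the set $f_{\by}^{-1}(1)$ is both anti-monotone and a linear subspace of $\{0,1\}^n$.

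It therefore suffices to verify that \Cref{algo: mono conjunction tester}, when run on the anti-monotone conjunction $f_{\by}$ with a sampling oracle that only guarantees outputs in $f_{\by}^{-1}(1)$, never rejects. This is exactly the content of the proof of \Cref{thm: f is mono conj}: in Phase 1, the only way to reject is to find $\bx,\by' \in f_{\by}^{-1}(1)$ with $\MQ(f_{\by})(\bx \oplus \by') = 0$, which is impossible because $f_{\by}^{-1}(1)$ is closed under XOR; in Phase 2, the only way to reject is to obtain $\by' \preceq \bx$ and $\bu \in f_{\by}^{-1}(1)$ with $\MQ(f_{\by})(\by' \oplus \bu) = 0$, and since anti-monotonicity forces $\by' \in f_{\by}^{-1}(1)$ and linearity then forces $\by' \oplus \bu \in f_{\by}^{-1}(1)$, this too is impossible. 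Neither argument uses that the samples from the oracle are uniform---only that they lie in $f_{\by}^{-1}(1)$.

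There is really no obstacle here: the only thing to check is that the proof of \Cref{thm: f is mono conj} goes through without the uniformity assumption, and by inspection it does. Thus \hyperlink{Algorithm2}{\sc Conj-Test} accepts $f$ with probability $1$.
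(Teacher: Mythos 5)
Your proposal is correct and follows essentially the same route as the paper's own proof: both observe that the completeness argument (\Cref{thm: f is mono conj} together with the reduction in \Cref{remark:monotone}) only uses that every sample returned lies in $F$, never uniformity, so it applies verbatim to $\SAMP^*(f)$. You spell out the Phase 1 / Phase 2 case analysis a bit more explicitly than the paper does, but there is no substantive difference.
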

\begin{proof}
When run on an anti-monotone conjunction $f$,  
\Cref{algo: mono conjunction tester} will never reject as long as all its samples are from $F$; therefore \Cref{algo: mono conjunction tester} will accept $f$ with probabilty $1$. 
Since \hyperlink{Algorithm2}{\sc Conj-Test} works by drawing a single sample $\by \sim \SAMP^*(f)$, which by assumption is in $F$, and running \Cref{algo: mono conjunction tester} on $f_{\by}$, iven that $\by \in F$, the function $f_{\by}$ is guaranteed to be an anti-monotone conjunction when $f$ is a conjunction, and \hyperlink{Algorithm2}{\sc Conj-Test} is guaranteed to accept.
\end{proof}

\begin{theorem}\label{thm: reject weaker assumption}
    Suppose that $f: \zo^n \to \zo$ satisfies $\reldist(f,f') > \epsilon$ for any conjunction $f'$. Let %
    $\SAMP^\ast(f)$ be a sampling oracle for $f$ such that the underlying distribution 
    $\calD$ satisfies
\[\Prx_{\bx \sim {\cal D}}\big[\bx = z\big] \geq \frac{1}{20|F|},\quad\text{for any $z\in F$}.\]
    Then \hyperlink{Algorithm2}{\sc Conj-Test}, running on $\MQ(f)$ and $\SAMP^*(f)$,
    rejects $f$ with probability at least $0.9$.
\end{theorem}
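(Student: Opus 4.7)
My plan is to reuse the soundness analysis of \Cref{algo: mono conjunction tester} from \Cref{sec:conjunction-soundness} essentially verbatim, handling the faulty oracle via a simple coupling argument. The key observation is that the pointwise lower bound $\Pr_{\bz \sim \mathcal{D}}[\bz = z] \geq 1/(20|F|)$ for every $z \in F$ yields the decomposition $\mathcal{D} = \tfrac{1}{20}\mathcal{U}_F + \tfrac{19}{20}\mathcal{D}'$ for some distribution $\mathcal{D}'$. Consequently, each call to $\SAMP^*(f)$ returns a uniformly random element of $F$ with probability at least $1/20$. Since each iteration of Phase~1 and Phase~2 uses two oracle samples, with probability at least $(1/20)^2 = 1/400$ both samples are uniformly distributed on $F$, and conditioned on that event the original uniform-sampling analysis applies.

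I would first handle the reduction from \hyperlink{Algorithm2}{\sc Conj-Test} to \Cref{algo: mono conjunction tester}. \hyperlink{Algorithm2}{\sc Conj-Test} samples $\by \sim \SAMP^*(f)$ and runs \Cref{algo: mono conjunction tester} on $f_\by$, simulating its sampling oracle by returning $\bz \oplus \by$ for $\bz \sim \SAMP^*(f)$. A direct bijection argument yields the identity $\reldist(f_y,g) = \reldist(f,g_y)$ for every $y \in \zo^n$; combined with the fact that $g_y$ is a conjunction whenever $g$ is an anti-monotone conjunction, this shows that $f_\by$ is $\epsilon$-relative-error far from every anti-monotone conjunction whenever $f$ is $\epsilon$-relative-error far from every conjunction. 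Crucially, this conclusion does \emph{not} require $\by \in F$---an important point since $\by$ drawn from the faulty oracle need not lie in $F$. The induced sampling oracle for $f_\by$ satisfies the analogous robust lower bound with respect to $F_{f_\by} = F \oplus \by$, because $\Pr[\bz \oplus \by = w] = \Pr[\bz = w \oplus \by] \geq 1/(20|F|) = 1/(20|F_{f_\by}|)$ for each $w \in F_{f_\by}$.

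The remaining work is to adapt the three ``algorithmic'' lemmas used in the proof of \Cref{thm: f isn't mono conj}: \Cref{lem:f is close to g conj}, \Cref{lem: H big we reject}, and \Cref{lem: line 7-11}. These respectively establish per-iteration rejection probabilities of at least $\epsilon/40$, $\delta^2/2$, and $1/40$ under uniform sampling; the coupling downgrades these by a factor of $400$ to $\epsilon/16000$, $\delta^2/800$, and $1/16000$ under the faulty oracle. The remaining lemmas in \Cref{sec:conjunction-soundness} involve only the mathematical objects $g_f$, $G$, $H$, and $\mathcal{X}$, which are defined purely in terms of uniform sampling from $F$ (not from $\mathcal{D}$), and so they transfer verbatim. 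Choosing $c_1$ and $c_2$ to be sufficiently large absolute constants, each of the three failure probabilities bounded in the proof of \Cref{thm: f isn't mono conj} remains at most $0.1$, and the same three-step argument then yields the desired $0.9$ rejection probability.

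The main obstacle is purely bookkeeping: verifying that the $1/400$ coupling loss can be absorbed uniformly across the three algorithmic lemmas, and confirming that the inflated constants $c_1, c_2$ remain absolute. There is no additional conceptual difficulty beyond what is already carried out in \Cref{sec:conjunction-soundness}.
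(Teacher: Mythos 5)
Your proposal is correct and follows essentially the same approach as the paper's proof, which asserts without detail that under $\SAMP^\ast$ the per-iteration rejection probabilities of Phases 1 and 2 decrease by at most a factor of $(1/20)^2$. Your explicit coupling decomposition $\mathcal{D} = \tfrac{1}{20}\,\mathcal{U}_F + \tfrac{19}{20}\,\mathcal{D}'$ makes that claim precise, and your handling of the initial shift $\by$ via the identity $\reldist(f_y,g)=\reldist(f,g_y)$ matches the paper's brief remark that $f_{\by}$ remains far from anti-monotone conjunctions whether or not $\by\in F$.
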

\begin{proof}
The proof of soundness
  (\Cref{thm: f isn't mono conj}) of \Cref{algo: mono conjunction tester} shows that, given any $f$ that satisfies
  $\reldist(f,f')>\eps$ for any anti-monotone conjunction $f$,
  either each round of Phase 1 rejects $f$ with probability $\Omega(1/\eps)$,
  or each round of Phase 2 rejects $f$ with probability $\Omega(1)$,
  under $\SAMP(f)$.
It is easy to show that under our assumption of $\SAMP^*$,
  both probabilities basically go down by at most a factor of $(1/20)^2$.
So by making $c_1$ and $c_2$ sufficiently large, we have that 
  \Cref{algo: mono conjunction tester} rejects $f$ with probability at least $0.9$
  even when it is given access to $\SAMP^*(f)$ instead of $\SAMP(f)$.

Now consider running \hyperlink{Algorithm2}{\sc Conj-Test} on a function $f$
  with $\reldist(f,f')>\eps$ for any conjunction $f'$, with access to a faulty oracle
  $\SAMP^*(f)$.    
\hyperlink{Algorithm2}{\sc Conj-Test} works by first drawing a single sample $\by \sim \SAMP^*(f)$ and running \Cref{algo: mono conjunction tester} on $f_{{\by}}$. Since $f$ is far from any conjunction, we must have $\reldist(f_{{\by}},f')>\epsilon$ for any anti-monotone conjunction $f'$ {(no matter whether $\by\in F$ or not)}. Hence, \hyperlink{Algorithm2}{\sc Conj-Test} is guaranteed to reject $f $ with probability at least $0.9$.
\end{proof}

\section{A relative-error testing algorithm for decision lists} \label{sec:DL}

Let's first recall how   
 a decision list is represented and some basic facts about decision lists.

\subsection{Basics about decision lists} \label{sec:basics-DL}

A decision list $f:\{0,1\}^n\rightarrow \{0,1\}$ can 
  be represented as a list of rules:%
$$
(x_{i_1}, b_1, v_1), \cdots, (x_{i_k}, b_k, v_k), v_{\textsf{default}},
$$
where $i_1,\ldots,i_k\in [n]$ are indices of variables
  and $b_1,\ldots,b_k,v_1,\ldots,v_k,v_{\textsf{default}}\in \{0,1\}$.
For each $x\in \{0,1\}^n$, $f(x)$ is set to be $v_j$ with the
  smallest $j$ such that $x_{i_j}=b_j$, and is set to be
  $v_{\textsf{default}}$ if $x_{i_j}\ne b_j$ for all $j\in [k]$.
Without loss of generality, we will make the following assumptions:
\begin{flushleft}\begin{itemize}
    \item \emph{Each variable appears at most once in the list of rules (i.e., no two $i_j$'s are the same).} Indeed if two rules of the form $(x_{i_j}, b_j,v_j)$ and $(x_{i_{\ell}}, b_{\ell}, v_{\ell})$ with $j<\ell$ and $i_j=i_{\ell}$ are present,  
    then either $b_j=b_{\ell}$, in which case 
    the latter rule can be trivially deleted,
    or $b_{\ell}=1-b_j$, in which case the latter rule and all rules after it can be replaced by $v_{\default}=v_{\ell}$, without changing the 
    function $f$ represented. 
    \item \emph{There exists at least one rule $(x_{i_j},b_j,v_j)$ with $v_j=1$.}
    This is because if $v_1=\cdots=v_k=0$, then either (1) $v_{\textsf{default}}=0$, in which case $f$ is the all-$0$ function; (2) $v_{\textsf{default}}=1$ and at least one variable $x_i$ does not appear in the list, in which case we can insert $(x_i,1,1)$ before $v_{\textsf{default}}$, which does not change the function and makes sure at least one $v_j$ is $1$; or (3) $v_{\textsf{default}}=1$ but all variables appear in the list, in which case we have $|f^{-1}(1)|=1$. Functions $f$ with $|f^{-1}(1)|\le 1$ are easy to recognize with the sampling oracle.
\end{itemize}\end{flushleft}
In the rest of the section, whenever we talk about 
  decision lists, the function $f$ is assumed to satisfy $|f^{-1}(1)|>1$ so it can be represented by a list of rules that satisfies the two conditions above.

Let $f:\{0,1\}^n\rightarrow \{0,1\}$ be a decision list
  represented by $(x_{i_1}, b_1,v_1),\cdots (x_{i_k}, b_k,v_k), v_{\textsf{default}}$.
The following decomposition of the list plays a crucial role both in the algorithm and its analysis:
\begin{flushleft}\begin{itemize}
    \item Let%
    $p\in [k]$ be the smallest index with $v_p = 1$. We will call $p$ the \textbf{pivot}. 
    \item The prefix before $(x_{i_p},b_p,v_{p})$, i.e., $(x_{i_1}, b_1, 0), \ldots, (x_{i_{p-1}}, b_{p-1}, 0)$ will be called the \textbf{head}. The key insight is that we can view the head as a conjunction $C:\{0,1\}^n\rightarrow \{0,1\}$, where $$C(x)= \big(x_{i_1}=1-b_1\big) \land \cdots \land \big(x_{i_{p-1}} = 1-b_{p-1}\big).$$ 
    And we write ${\cal H}_C:=\{{i_1}, \ldots {i_{p-1}}\}$ to denote the set of $p-1$
     variables in the conjunction.
    \item The part after $(x_{i_p},b_p,v_p)$, i.e., $(x_{i_{p+1}}, b_{p+1}, v_{p+1}),\cdots,(x_{i_k},b_k,v_k), v_{\default}$ will be called the $\textbf{tail}$.
Let $L:\{0,1\}^n\rightarrow \{0,1\}$ denote
  the decision list represented by the pivot and tail:
 $$(x_{i_p}, b_p,v_p=1),(x_{i_{p+1}}, b_{p+1},v_{p+1}), \cdots,(x_{i_k},b_k,v_k), v_{\default}.$$
Then it is easy to verify that $f(x)=C(x)\land L(x)$ for all $x\in \{0,1\}^n$.
\end{itemize}\end{flushleft}

We record the following simple fact  
  about the decomposition $f=C\land L$:

\begin{observation}\label{ob:simple1}
All $z\in f^{-1}(1)$ satisfy $C(z)=1$, i.e.,
  $z_{i_j}=1-b_j$ for all $i_j\in \calH_C$.
\end{observation}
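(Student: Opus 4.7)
The plan is to invoke the decomposition $f(x) = C(x) \wedge L(x)$ that has just been recorded in the preceding bullet points. Given any $z \in f^{-1}(1)$, we have $f(z) = 1 = C(z) \wedge L(z)$, which forces $C(z) = 1$. Unpacking the definition of $C$ immediately yields $z_{i_j} = 1 - b_j$ for every $i_j \in \mathcal{H}_C$, which is the claim.

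If one prefers not to appeal to the decomposition (which is asserted in the excerpt as ``easy to verify'' but not proved), a direct argument from the rule-by-rule semantics of a decision list works identically well. Suppose for contradiction that some $z \in f^{-1}(1)$ has $C(z) = 0$. Then the set $\{ j \in [p-1] : z_{i_j} = b_j \}$ is nonempty; let $j^\ast$ be its minimum. By the minimality of $j^\ast$, none of the rules $(x_{i_1}, b_1, 0), \dots, (x_{i_{j^\ast - 1}}, b_{j^\ast - 1}, 0)$ are triggered at $z$, so the decision-list evaluation of $f(z)$ reaches rule $j^\ast$ and outputs $v_{j^\ast}$. But $j^\ast < p$ and $p$ is by definition the smallest index with $v_p = 1$, so $v_{j^\ast} = 0$, giving $f(z) = 0$ and contradicting $z \in f^{-1}(1)$.

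There is no real obstacle here: the observation is essentially a restatement of what it means for a positive example of $f$ to ``fall past'' the all-$0$ head of the list, and either one-line argument above suffices. I would include the direct-semantics version in the paper for self-containedness, since it also implicitly justifies the $f = C \wedge L$ decomposition used throughout the subsequent analysis.
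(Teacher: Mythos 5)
Your first argument is exactly the paper's implicit reasoning: the observation is a one-line consequence of the decomposition $f = C \wedge L$ recorded just above it, and the paper treats it as such without further proof. Your alternative direct-semantics argument is also correct and, as you note, has the side benefit of justifying the $f = C \wedge L$ decomposition itself, but it is not needed given what the paper has already established.
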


We will need the following lemma about the decomposition $f=C\land L$ of a decision list, which says that coordinates in the tail are fairly unbiased. The intuition is that at least half the points in $f^{-1}(1)$ are accepted by the rule $(x_{i_p} , b_{p}, 1)$, and for these points the coordinates $x_{i_j}$ with $j>p$, do not matter, and thus are distributed uniformly. 
\begin{lemma}\label{lem: low bias tail}
    Let $f$ be a decision list represented
    by $(x_{i_1},b_1,v_1),\cdots,(x_{i_k},b_k,v_k),v_{\textsf{default}}$, and let $p$ be its pivot. 
    Then we have
\begin{equation*}\Prx_{\bz \sim f^{-1}(1)}\big[\bz_{i_p}=b_p\big] \geq 1/2\quad\text{and}\quad
    \Prx_{\bz \sim f^{-1}(1)}\big[\bz_{i_j}=1\big] \in [1/4, 3/4],\quad\text{for all $j:p<j\le k$.}
\end{equation*}
\end{lemma}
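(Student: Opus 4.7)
My plan is to exploit the decomposition $f = C \wedge L$ from the excerpt and partition $f^{-1}(1)$ according to whether the pivot rule fires. Define
\[
A = \big\{z \in \{0,1\}^n : C(z)=1,\ z_{i_p}=b_p\big\},\qquad B = \big\{z \in f^{-1}(1) : z_{i_p}=1-b_p\big\}.
\]
Every point of $A$ already lies in $f^{-1}(1)$, because $C(z)=1$ forces the head to pass through and the rule $(x_{i_p},b_p,1)$ then fires and outputs $1$; so $f^{-1}(1) = A \sqcup B$. The head fixes the $p-1$ coordinates in $\mathcal{H}_C$ to their complemented values, and $x_{i_p}$ is pinned inside $A$, leaving the $n-p$ coordinates outside $\mathcal{H}_C \cup \{i_p\}$ completely free. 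Hence $|A| = 2^{n-p}$. On the other hand, every $z \in B$ must also satisfy $C(z)=1$ (by \Cref{ob:simple1}) and $z_{i_p}=1-b_p$, so $B$ lives inside a coset of size $2^{n-p}$ as well, giving $|B| \le 2^{n-p} = |A|$.

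From here the first inequality is immediate: $\Pr_{\bz\sim f^{-1}(1)}[\bz_{i_p}=b_p] = |A|/(|A|+|B|) \ge |A|/(2|A|) = 1/2$.

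For the second inequality, fix $j$ with $p < j \le k$. The key observation is that $x_{i_j}$ is one of the $n-p$ free coordinates inside $A$, so exactly half of the points of $A$ have $z_{i_j}=1$ and the other half have $z_{i_j}=0$. Letting $N_1$ denote the number of $z \in f^{-1}(1)$ with $z_{i_j}=1$, we therefore have $|A|/2 \le N_1 \le |A|/2 + |B|$. Dividing by $|f^{-1}(1)| = |A|+|B|$ and using $|B| \le |A|$ yields
\[
\frac{1}{4} \;\le\; \frac{|A|/2}{|A|+|B|} \;\le\; \Prx_{\bz \sim f^{-1}(1)}\big[\bz_{i_j}=1\big] \;\le\; \frac{|A|/2 + |B|}{|A|+|B|} \;=\; 1 - \frac{|A|/2}{|A|+|B|} \;\le\; \frac{3}{4},
\]
which is exactly the claimed range.

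There is no serious obstacle here; the entire argument hinges on the two easy structural facts that (i) $A$ is a full $(n-p)$-dimensional subcube on which $x_{i_j}$ is genuinely free, and (ii) $B$ is contained in a subcube of the same size and so cannot outweigh $A$. Both facts are straight consequences of the head/tail decomposition and of the uniqueness-of-appearance normalization on the rule list. The only place one needs to be a bit careful is confirming $|B| \le |A|$, which requires noting that membership in $B$ forces $C(z)=1$ (by \Cref{ob:simple1}) in addition to $z_{i_p}=1-b_p$; this uses the decision-list preprocessing that guarantees the head variables cannot be re-queried later in the tail.
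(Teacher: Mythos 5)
Your proof is correct and follows essentially the same route as the paper's: both arguments rest on the double inclusion $\{z : C(z)=1, z_{i_p}=b_p\} \subseteq f^{-1}(1) \subseteq \{z : C(z)=1\}$ (your $A$ and its containing coset), the fact that the pivot constraint halves the larger set, and the observation that $x_{i_j}$ for $j>p$ remains a free coordinate inside $A$. Your explicit $A/B$ partition and two-sided count $|A|/2 \le N_1 \le |A|/2 + |B|$ is just a slightly more spelled-out version of the paper's computation, which instead bounds $\Pr[\bz_{i_j}=b]\ge 1/4$ for both values of $b$.
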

\begin{proof}
Using the decomposition $f=C\land L$, we have
\begin{equation}\label{eq:hehe2}
\big\{z\in \{0,1\}^n : C(z)=1\ \text{and}\  z_{i_p}=b_p \big\} \subseteq f^{-1}(1) \subseteq \big\{z\in \{0,1\}^n : C(z)=1\big\}.
\end{equation}
Given that  the LHS is exactly half of the size of the set on the RHS,
the first part follows.%

Furthermore, consider any $i_j$ with $p<j\le k$. Exactly half of the strings $z$ in the set on the LHS
  of \Cref{eq:hehe2} have $z_{i_j}=1$ and exactly half of them have $z_{i_j}=0$.
Thus, we have for any $b\in \{0,1\}$,
    \begin{align*}
        \Prx_{\bz \sim f^{-1}(1)}\big[\bz_{i_j}=b\big] &\geq \frac{|\{z: \text{$C(z)=1,z_{i_p}=b_p$ and 
        $z_{i_j}=b$}\}|}{|f^{-1}(1)|} 
         = \frac{|\{z: \text{$C(z)=1$}\}|}{4|f^{-1}(1)|} \ge \frac{1}{4},
    \end{align*}
where the last inequality used the second inclusion of 
  \Cref{eq:hehe2}. This finishes the proof.
\end{proof}

\subsection{The Algorithm}
\label{sec:DL-algorithm}

Our main algorithm for testing decision lists in relative distance is presented in \Cref{algo: Decision list tester},
  which uses two subroutines described in \Cref{algo:gamma_query} and \Cref{algo:gamma_sampler}, respectively.
Before diving into~the analysis of \Cref{algo: Decision list tester}, we start with an overview of the algorithm in this subsection, to give some intuition behind it about why it accepts functions that are decision lists and rejects functions that are far from decision lists in relative  distance with high probability.

We will start with the case when $f:\{0,1\}^n\rightarrow \{0,1\}$ is a decision list, and explain how the algorithm 
  extracts information about the underlying decomposition
  $C\land L$ of $f$ so that $f$ passes all the tests in
  \Cref{algo: Decision list tester} with high probability.
Towards the end of this subsection, we will switch to 
  the case when $f$ is far from any decision list in relative distance, and discuss why passing all the tests 
  in \Cref{algo: Decision list tester} gives evidence that 
  $f$ is close in relative distance to some decomposition
  $C\land L$ of a decision list, which should not happen given the assumption that $f$ is far from every decision list in relative distance.
The formal proof is given in \Cref{sec:queryandnonadaptivity,sec:completeness,sec:soundness}.

  \medskip
  \noindent {\bf The case when $f$ is a decision list.}
Let $f$ be a decision list represented by 
  $(x_{i_1},b_{1},v_1),\ldots,$ $(x_{i_k},b_k,v_k),v_{\textsf{default}}$.
Let $p$ be the pivot and $f=C\land L$ be the decomposition,
  where $C$ is a conjunction over variables in ${\cal H}_C:=\{i_1,\ldots,i_{p-1}\}$ and 
  $L$ is a decision list formed by the pivot and the tail of the list.

The first step of \Cref{algo: Decision list tester} draws samples from $\SAMP(f)$ to partition variables into $\bU\cup \bR$:\medskip

\begin{algorithm}[H]%
    \begin{algorithmic}[1] \vspace{0.2cm}
    \setcounter{ALG@line}{1}
            \State Draw $O(1/\eps)$ samples $\bz \sim \SAMP(f)$, and call the set of samples $\bS$. Let 
    $$\hspace{-1cm}\bU=\big\{i\in [n]  : \exists \, b\in \{0,1\} \text{ s.t. } \bz_i = b \ \text{for all $\bz \in \bS$} \big\} \quad\text{and}\quad
    \bR=[n] \setminus \bU$$ 
 and $\bu \in \{0,1\}^{\bU}$ be the unique string such that $\bu_i=\bz_i$ for all $i\in \bU$ and $\bz \in \bS$.\vspace{0.15cm}  
    \end{algorithmic}
\end{algorithm}\medskip

\noindent So $\bU$ is the set of \emph{unanimous coordinates} of 
  samples in $\bS$ and thus, we always have ${\cal H}_C\subseteq \bU$ and $\bu_{i_j}=1-b_j$ for all $i_j\in {\cal H}_C$
  by \Cref{ob:simple1}. 
Ideally, we would hope for $\bR$ to contain all variables in the tail, i.e., $i_{p+1},\ldots,i_k$; however, this won't be the case in general, unfortunately, since $k$ can be much bigger than $p$ (e.g., $k-p$ could be polynomial in $n$). %
But by \Cref{lem: low bias tail} and given that $\bS$ contains $O(1/\eps)$ samples from $\SAMP(f)$, it is easy to show that with high probability, the variables occurring in the beginning part of the tail %
(excluding those variables that are hidden too deep in the tail) all lie in $\bR$ with high probability, and it turns out that this is all we will need for the proof of completeness %
to go through.

Now, the pivot variable $x_{i_p}$ can go either way, since 
  it may be highly biased towards $b_{p}$.
However, if it is too biased towards $b_{p}$ so that
  none of the samples $\bz$ in $\bS$ have $\bz_p=1-b_p$, then one can argue that $f$ must be very close
  to a conjunction over $\{0,1\}^n$ in relative distance.
Because of this, Step $0$ of \Cref{algo: Decision list tester}
  will accept $f$ with high probability.

Proceeding to Step 2, assume without loss of generality that
  \Cref{algo: Decision list tester} has found $\bU$ and $\bR$ such that ${\cal H}_C\subseteq \bU$ and $\bR$ contains variables in the beginning part of $L$, including $i_p$, and the string $\bu\in \{0,1\}^{\bU}$ satisfies $\bu_{i_j}=1-b_j$ for all $i_j\in {\cal H}_C$.
This implies the following properties:
\begin{flushleft}\begin{enumerate}
\item The function $f{\upharpoonleft_{\bu}}$ over $\{0,1\}^{\bR}$  is a ``dense'' decision list, where 
  ``dense'' means the number\\ of its satisfying assignments 
  is at least half of $2^{|\bR|}$, and it is a decision list
  because the restriction of any decision list remains a decision list.
\item For any $\alpha\in \{0,1\}^{\bU}$, $f{\upharpoonleft_{\alpha}}$ is either the all-$0$ function
  (because $\alpha$ does not satisfy the conjunction $C$), or very close to $f{\upharpoonleft_{\bu}}$  (because $\bR$ contains the beginning part of the tail) and thus, both $f{\upharpoonleft_{\bu}}$
  and $f{\upharpoonleft_{\alpha}}$ are close to $L$ (when viewed as a decision list over $\bR$) under the uniform distribution.
\end{enumerate}\end{flushleft}

Given these properties of $\bU,\bR$ and $\bu$, it is easy to see that $f$ passes both tests in Step 2 with high probability:
\medskip

\begin{algorithm}[H]%
    \begin{algorithmic}[1]\vspace{0.2cm}
    \setcounter{ALG@line}{2}
        \State Draw  $O(1)$ points $\bw\sim \{0,1\}^{\bR}$;
    halt and reject 
    $f$
if less than $1/4$ of the sampled $\bw$'s have  $f(\bu\circ \bw)=1$. %
    \State Run the algorithm of \Cref{thm:Bshouty20} on $f{\upharpoonleft_{\bu}}$ with $\epsilon/100$;
     halt and reject $f$ if it rejects.\vspace{0.15cm}
    \end{algorithmic}
\end{algorithm}\medskip

\noindent since line 3 checks whether $f{\upharpoonleft_{\bu}}$ is ``dense,'' and line 4 checks if $f{\upharpoonleft_{\bu}}$ is close to a decision list under the uniform distribution. The function
$f$ would also pass  the tests in Step 3 
 with high probability: \medskip

\begin{algorithm}[H]%
    \begin{algorithmic}[1]\vspace{0.2cm}
    \setcounter{ALG@line}{4}
        \State Repeat $O(1/\eps)$ times:
     \State \hskip2em Draw $\bz \sim \SAMP(f)$ and $\bw \sim \{0,1\}^{\bR}$; if $f(\bz_{\bU} \circ \bw) \neq f(\bu \circ \bw)$, halt and reject\vspace{0.15cm} $f$.
    \end{algorithmic}
\end{algorithm}\medskip
\noindent as for any $z\in f^{-1}(1)$, $z_{\bU}$ satisfies $C$ and thus,
$f{\upharpoonleft_{z_{\bU}}}$ is close to $f{\upharpoonleft_{\bu}}$ under the uniform distribution.

For step $4$, let $\bGamma:\{0,1\}^{\bU}\rightarrow \{0,1\}$ be
  defined as follows:
$$\bGamma(\alpha)=\begin{cases}
         1 & \text{if $\Pr_{\bw \sim \{0,1\}^{\bR} }\big[f(\alpha \circ \bw)=1\big] \ge 1/16$} \\
         0 &\text{otherwise}
    \end{cases}$$
Now, observe that for any $\alpha\in \{0,1\}^{\bU}$,  
  either $\alpha$ does not satisfy $C$, in which case
$f{\upharpoonleft_{\alpha}}$ is the all-$0$ function,
  or $\alpha$ satisfies $C$, in which case 
$f{\upharpoonleft_{\alpha}}$ is close to 
  $f{\upharpoonleft_{\bu}}$ and thus
  $\Gamma(\alpha)=1$ (because $f{\upharpoonleft_{\bu}}$ is ``dense'').
As~a result, $\bGamma$ is indeed the same as $C$ and thus it is a conjunction over $\{0,1\}^{\bU}$. From this, it may seem straightforward to conclude that $\Gamma$
  always passes the test in Step 4 below:\medskip 
  
\begin{algorithm}[H]%
    \begin{algorithmic}[1]\vspace{0.2cm}
    \setcounter{ALG@line}{6}
\State Run \hyperlink{Algorithm2}{\sc Conj-Test}
    with $\SAMP(\bGamma) \leftarrow $(\Cref{algo:gamma_sampler}), $\MQ(\bGamma) \leftarrow $(\Cref{algo:gamma_query})  to test if  
    $\bGamma$ \Statex  is $(\epsilon/100)$-close to a conjunction in relative distance; halt and return the same answer. \vspace{0.15cm} 
    \end{algorithmic}
\end{algorithm}\medskip

\noindent except that we don't actually have direct access to 
  sampling and query oracles of $\bGamma$.
They are simulated by \Cref{algo:gamma_sampler} and \Cref{algo:gamma_query}, respectively.
We show in the proof that (1) \Cref{algo:gamma_sampler} (for the sampling oracle of $\bGamma$) only returns samples in 
  $\bGamma^{-1}(1)$; and (2) with high probability,
  \Cref{algo:gamma_query} returns the correct answer to all $O(1/\eps)$ membership queries made by \hyperlink{Algorithm2}{\sc Conj-Test}.
From this we can apply \Cref{thm: accept weaker assumption} %
to conclude that \hyperlink{Algorithm2}{\sc Conj-Test} accepts with high probability.

\medskip
\noindent {\bf The case when $f$ is far from every decision list.}
Let's now switch to the case when $f$ is $\eps$-far from
  any decision list in relative distance.
Because conjunctions are special cases of decision lists,
  Step 0 does not accept $f$ with high probability and 
  \Cref{algo: Decision list tester}~goes through Step 1 to obtain $\bU,\bR$ and $\bu\in \{0,1\}^{\bU}$.
Let's consider what conditions $f$ needs to satisfy 
  in order for it to pass tests in Steps 2 and 3 with high probability:
\begin{flushleft}
\begin{enumerate}
\item To pass Step 2,   $f{\upharpoonleft_{\bu}}$  over $\{0,1\}^{\bR}$ needs to be ``dense:'' $|f{\upharpoonleft_{\bu}}^{(-1)}(1)|\ge 2^{|\bR|}/8$, and 
  must also be $(\eps/100)$-close to a decision list under the uniform distribution.
\item To understand the necessary condition posed on $f$ to 
  pass Step 3, let's introduce the following function $\bg:\{0,1\}^n\rightarrow \{0,1\}$, which is closely related to $\bGamma$ defined earlier and\\ will play a crucial role in the analysis:
$$
\bg(x)= \bGamma(x_{\bU}) \wedge f(\bu\circ x_{\bR}). 
$$
We show that to pass Step 3 with high probability,
  it must be the case that
  $\reldist(f,\bg)$ is small.
This means that for each $\alpha\in \{0,1\}^{\bU}$, one can replace $f{\upharpoonleft_{\alpha}}$ by the 
  all-$0$ function if $\bGamma(\alpha)=0$ (which means that $f{\upharpoonleft_{\alpha}}$ is not dense enough),
  and replace $f{\upharpoonleft_{\alpha}}$ by 
  $f{\upharpoonleft_{\bu}}$ otherwise, and this will only change a small number of bits of $f$ (relative to $|f^{-1}(1)|$). 
  Note that $\bg$ is getting closer to the decomposition of a decision list 
  since we know $f{\upharpoonleft_{\bu}}$ is close to a decision list over $\{0,1\}^{\bR}$.
\end{enumerate}
\end{flushleft}

Assuming all the necessary conditions summarized above for $f$ to pass Steps 2 and 3
and using the assumption that $f$ is far from decision lists in relative distance,
  we show that $\bGamma$ must be far from conjunctions in relative distance.
Finally, given the latter, it can be shown that \hyperlink{Algorithm2}{\sc Conj-Test} in Step 4 rejects with high probability, again by analyzing performance guarantees of \Cref{algo:gamma_sampler} and \Cref{algo:gamma_query} as simulators of sampling and query oracles of $\bGamma$, respectively, so that \Cref{thm: reject weaker assumption}  for \hyperlink{Algorithm2}{\sc Conj-Test} %
can be applied. 

\begin{algorithm}
\caption{Relative-error decision list tester. 
Here $c_1$ and $c_2$ are  absolute constants\\ chosen at the beginning of \Cref{sec:queryandnonadaptivity}.} \label{algo: Decision list tester}
\vspace{0.15cm} \textbf{Input:}
 Oracle access to $\MQ(f),$ $\SAMP(f)$ of $f:\{0,1\}^n\rightarrow \{0,1\}$ and a parameter $\eps$. \\
\textbf{Output:} ``Reject'' or ``Accept.''

 \begin{tikzpicture}
\draw [thick,dash dot] (0,1) -- (15.3,1);
\end{tikzpicture}
\begin{algorithmic}[1]\vspace{-0.15cm}
    \Algphase{Step 0: Special case when $f$ is close to a conjunction\vspace{-0.15cm} }
    \State Run \hyperlink{Algorithm2}{\sc Conj-Test}
    on $f$ with $\epsilon$;
    halt and accept $f$ if it
    accepts.\vspace{-0.15cm}
    \Algphase{Step 1: Draw samples to obtain $\bU$ and $\bR$\vspace{-0.15cm}}
    \State Draw ${c_1/\eps}$ samples $\bz \sim \SAMP(f)$, and call the set of samples $\bS$. Let 
    $$\hspace{-1cm}\bU=\big\{i\in [n]  : \exists \, b\in \{0,1\} \text{ s.t. } \bz_i = b \ \text{for all $\bz \in \bS$} \big\} \quad\text{and}\quad
    \bR=[n] \setminus \bU$$ 
 and $\bu \in \{0,1\}^{\bU}$ be the unique string such that $\bu_i=\bz_i$ for all $i\in \bU$ and $\bz \in \bS$.  \vspace{-0.15cm}
    \Algphase{Step 2: Check that $f{\upharpoonleft_{\bu}}$ is close to a ``dense'' decision list \vspace{-0.15cm}}
    \State Draw  ${c_2}$ points $\bw\sim \{0,1\}^{\bR}$;
    halt and reject 
    $f$
if less than $1/4$ of the sampled $\bw$'s have  $f(\bu\circ \bw)=1$. %
    \State Run the algorithm of \Cref{thm:Bshouty20} on $f{\upharpoonleft_{\bu}}$ with $\epsilon/100$;
     halt and reject $f$ if it rejects.\vspace{-0.15cm}
     \Algphase{Step 3: Check that 
     $f{\upharpoonleft_{w}}$ is close to $f{\upharpoonleft_{\bu}}$ on ``most'' $w \in \zo^{\bU}$
     \vspace{-0.15cm}} %
     \State Repeat ${c_2/\eps}$ times:
     \State \hskip2em Draw $\bz \sim \SAMP(f)$ and $\bw \sim \{0,1\}^{\bR}$; if $f(\bz_{\bU} \circ \bw) \neq f(\bu \circ \bw)$, halt and reject\vspace{-0.15cm} $f$. 

    \Algphase{Step 4: Is $\bGamma$ a conjunction?\vspace{-0.15cm}}
    \Statex Let 
    $\bGamma : \{0,1\}^{\bU} \to \{0,1\}$ be defined as follows:
$$\bGamma(\alpha)=\begin{cases}
         1& \text{if $\Pr_{\bw \sim \{0,1\}^{\bR} }\big[f(\alpha \circ \bw) =1\big] \ge  1/16$} \\ 
         0 &\text{otherwise.}
    \end{cases}$$

    \State Run \hyperlink{Algorithm2}{\sc Conj-Test}
    with $\SAMP(\bGamma) \leftarrow $(\Cref{algo:gamma_sampler}), $\MQ(\bGamma) \leftarrow $(\Cref{algo:gamma_query})  to test if  
    $\bGamma$ \Statex  is $(\epsilon/100)$-close to a conjunction in relative distance; halt and return the same answer. \vspace{0.15cm}
\end{algorithmic}
\end{algorithm}

\begin{algorithm}\caption{Simulator for query access to $\bGamma$.}\label{algo:gamma_query} \textbf{Input: }Oracle access to $\MQ(f)$ and a query $\alpha \in \{0,1\}^{\bU}$. \\
 \begin{tikzpicture}
\draw [thick,dash dot] (0,1) -- (15.5,1);
\end{tikzpicture}
    \begin{algorithmic}[1]
        \State Draw $\bw \in \{0,1\}^{\bR}$ for {$c_2\log(1/\epsilon)$} many times and return $0$ if $f(\alpha\circ \bw)=0$ for all $\bw$.%
        \State Halt and reject $f$ if $f(\alpha \circ \bw) \neq f(\bu \circ \bw)$ for some $\bw$; return $1$ otherwise. %
        
    \end{algorithmic}
\end{algorithm}

\begin{algorithm}\caption{Simulator for sample access to $\bGamma$.}\label{algo:gamma_sampler}
 \textbf{{Input: }}Oracle access to $\SAMP(f)$. \\
 \begin{tikzpicture}
\draw [thick,dash dot] (0,1) -- (15.5,1);
\end{tikzpicture}
    \begin{algorithmic}[1]
        \State Draw $\bz \sim \SAMP(f)$ and  
 return $\bz_{\bU}$.
    \end{algorithmic}
\end{algorithm}

\subsection{Query Complexity}\label{sec:queryandnonadaptivity}

Let $c_0$ be the absolute constant such that running 
  \hyperlink{Algorithm2}{\sc Conj-Test} with relative-error distance
  parameter $\eps/100$ uses no more than $c_0/\eps$ many
  samples and membership queries.
Then the two constants $c_1$ and $c_2$ are chosen to satisfy the 
  following conditions:
 1) $c_2$ is both sufficiently large and sufficiently larger
  than $c_0$;
 2) $c_1$ is chosen to be sufficiently larger than $c_2$ (and thus, $c_1$ is also sufficiently large and sufficiently larger than $c_0$).

The following lemma gives the easy efficiency analysis of \Cref{algo: Decision list tester}:
\begin{lemma}
    \Cref{algo: Decision list tester}  makes no more than $\tilde{O}(1/\epsilon)$ calls to $\SAMP(f)$ and $\MQ(f)$.
\end{lemma}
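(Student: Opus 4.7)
The plan is to bound oracle usage step-by-step in \Cref{algo: Decision list tester}, relying on the oracle complexities already recorded for the two subroutines it invokes: \hyperlink{Algorithm2}{\sc Conj-Test} from \Cref{thm:conjunction} and the standard-model decision-list tester from \Cref{thm:Bshouty20}. Every step either makes $O(1/\eps)$ direct oracle calls, or invokes one of these two subroutines at some distance parameter that is $\Theta(\eps)$, each of which has at most $\tilde{O}(1/\eps)$ oracle cost by the cited theorems. Summing these contributions will give the claimed bound.

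Concretely, I would walk through the algorithm as follows. Step 0 invokes \hyperlink{Algorithm2}{\sc Conj-Test} at distance $\eps$, contributing $O(1/\eps)$ samples and queries. Step 1 draws $c_1/\eps$ samples and makes no queries. In Step 2, line 3 uses only $c_2 = O(1)$ queries to $\MQ(f)$ (each query to $f{\upharpoonleft_{\bu}}$ is simulated by a single call to $\MQ(f)$), while line 4 runs the algorithm of \Cref{thm:Bshouty20} on $f{\upharpoonleft_{\bu}}$ with distance parameter $\eps/100$, costing $\tilde{O}(1/\eps)$ queries. Step 3 performs $c_2/\eps$ iterations, each using one $\SAMP(f)$ call and two $\MQ(f)$ calls, so it contributes $O(1/\eps)$ of each.

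The only step whose analysis requires a moment of care is Step 4, which invokes \hyperlink{Algorithm2}{\sc Conj-Test} at distance $\eps/100$ against \emph{simulated} oracles for $\bGamma$. By \Cref{thm:conjunction}, this invocation makes $c_0/\eps$ simulated oracle calls in total. Each simulated sample via \Cref{algo:gamma_sampler} uses exactly one $\SAMP(f)$ call, so this part contributes $O(1/\eps)$ sample calls. Each simulated membership query via \Cref{algo:gamma_query} uses $c_2 \log(1/\eps)$ calls to $\MQ(f)$, so Step 4 contributes $O(\log(1/\eps)/\eps) = \tilde{O}(1/\eps)$ membership queries. Summing across all steps yields the claimed $\tilde{O}(1/\eps)$ bound on both oracles.

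There is no real obstacle in this argument; it is essentially bookkeeping. The only subtle point worth highlighting is that simulating $\MQ(\bGamma)$ in Step 4 inflates each query by a multiplicative factor of $c_2 \log(1/\eps)$, which is precisely where the $\log(1/\eps)$ factor hidden in the $\tilde{O}(1/\eps)$ notation enters; every other step's contribution is bare $O(1/\eps)$ up to the standard-model decision-list tester's own polylogarithmic overhead in \Cref{thm:Bshouty20}.
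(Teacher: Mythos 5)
Your proof is correct and takes essentially the same step-by-step bookkeeping approach as the paper's own proof: bound the direct oracle usage in Steps 1--3, charge $O(1/\eps)$ to each invocation of {\sc Conj-Test}, $\tilde{O}(1/\eps)$ to the decision-list tester of \Cref{thm:Bshouty20}, and observe that simulating $\MQ(\bGamma)$ via \Cref{algo:gamma_query} incurs a $\log(1/\eps)$ overhead per query. No discrepancy to report.
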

\begin{proof}
    Running 
    \hyperlink{Algorithm2}{\sc Conj-Test}
    in line $1$ takes $O(1/\epsilon)$ samples and non-adaptive queries. Line $2$ uses 
    ${O(1/\eps)}$
    samples.
    Line {$3$} uses $O(1)$ queries. Line {$4$} uses $\tilde{O}(1/\epsilon )$ queries by \Cref{thm:Bshouty20}. Line ${6}$ uses $O(1/\epsilon)$ samples and queries. 
    \hyperlink{Algorithm2}{\sc Conj-Test}
    on line ${7}$ runs
    \Cref{algo:gamma_query} and \Cref{algo:gamma_sampler}~each $O(1/\eps)$ times, which in total uses $\tilde{O}(1/\eps)$
    samples and queries to $\SAMP(f)$ and $\MQ(f)$.
\end{proof}
\begin{remark} \label{remark:DL-nonadaptive}
An alternative version of \Cref{algo: Decision list tester} is obtained simply by replacing the algorithm of \Cref{thm:Bshouty20}
with the algorithm of \Cref{thm:DLM+:07}.  As discussed in \Cref{sec:results-techniques}, this yields a version of \Cref{thm:DL} which is nonadaptive and uses $\tilde{O}(1/\eps)$ samples and makes $\tilde{O}(1/\eps^2)$ calls to $\MQ(f).$
\end{remark}

\subsection{Completeness: 
 Correctness of \Cref{algo: Decision list tester} when $f$ is a decision list}
 \label{sec:completeness}

Throughout this section, we assume $f$ to be a decision list with 
  pivot $p$ and decomposition~$C\land L$,
  where
$C:\{0,1\}^n\rightarrow \{0,1\}$ is a conjunction over variables in ${\cal H}_C=\{i_1,\ldots,i_{p-1}\}$ and 
$L:\{0,1\}^n\rightarrow \{0,1\}$ is a decision list over other variables,
  represented by the following list of rules: %
 $$(x_{i_p}, b_{p},v_p), \ldots, (x_{i_k}, b_k, v_k), v_{\default}$$
with $v_p=1$ given that $p$ is the pivot.
  Our goal is to show the following:
\begin{theorem}\label{thm: when f is a DL}
If $f$ is a decision list, then \Cref{algo: Decision list tester} accepts $f$ with probability at least $0.7$.
\end{theorem}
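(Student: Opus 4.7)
The plan is to bound the acceptance probability of \Cref{algo: Decision list tester} from below by $0.7$ when $f$ is a decision list. Step~0 never rejects, so it suffices to show Steps~1--4 pass with probability $\geq 0.7$; then $\Pr[\text{algorithm accepts}]\ge \Pr[\text{Steps 1--4 pass}]\ge 0.7$. Write $f = C\wedge L$ with pivot $p$ and set $q := \Pr_{\bz\sim f^{-1}(1)}[\bz_{i_p} = 1-b_p] \in [0,1/2]$.

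First I would define a high-probability good event $\mathcal{E}$ on the Step~1 samples $\bS$: (i) deterministically $\mathcal{H}_C \subseteq \bU$ with $\bu_{i_j} = 1-b_j$ on those coordinates (\Cref{ob:simple1}); (ii) the tail variables $i_{p+1},\ldots,i_{p+T^*}$ all lie in $\bR$, for a parameter $T^* = \Theta(\log(1/\epsilon))$ chosen so $(1/2)^{T^*} \le \epsilon/(100c_2)$, which by \Cref{lem: low bias tail} follows from each such variable splitting with probability $\ge 1 - 2(3/4)^{c_1/\epsilon}$ plus a union bound of size $T^*$; (iii) if $i_p \in \bU$ then $\bu_{i_p} = b_p$, which fails with probability $\le (1/2)^{c_1/\epsilon}$ since $\Pr[\bz_{i_p}=b_p]\ge 1/2$. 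For $c_1$ large enough relative to $c_2$, $\Pr[\mathcal{E}]\ge 0.99$. Under $\mathcal{E}$, $f{\upharpoonleft}_{\bu}$ on $\{0,1\}^{\bR}$ is a decision list whose uniform density is $\ge 1/2$ (the pivot rule always fires when $i_p \in \bU$ by (iii), and fires with probability $1/2$ when $i_p \in \bR$), so Step~2 passes: the density check by a Chernoff bound, and \Cref{thm:Bshouty20} with probability $\ge 19/20$.

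For Step~3, $f(\bz_{\bU}\circ\bw)\ne f(\bu\circ\bw)$ in a single iteration only when either (a) none of the first $T^*$ tail rules (all in $\bR$) fires on $\bw$, probability $\le (1/2)^{T^*}$, or (b) $i_p\in\bU$ but the fresh $\bz$ has $\bz_{i_p} = 1-b_p$, which across $c_2/\epsilon$ iterations contributes at most $\Pr[i_p\in\bU]\cdot q \cdot c_2/\epsilon$; the maximum over $q$ of $(1-q)^{c_1/\epsilon}\cdot q(c_2/\epsilon)$ is $O(c_2/c_1)$, small by our constant hierarchy. For Step~4, I would show $\bGamma$ on $\{0,1\}^{\bU}$ is (essentially) a conjunction: $\alpha$ violating $C$ gives $\bGamma(\alpha)=0$; $\alpha$ satisfying $C$ with either $i_p\in \bR$ or $\alpha_{i_p} = b_p$ gives $\Pr_{\bw}[f(\alpha\circ\bw)=1]\ge 1/2$ and thus $\bGamma(\alpha)=1$; in the remaining sub-case ($i_p\in\bU$, $\alpha_{i_p}=1-b_p$) this probability depends only on the ``deep tail'' and equals some constant $t$ independent of $\alpha$ up to a $(1/2)^{T^*}$ error, so $\bGamma$ equals either $C$ or $C\wedge (x_{i_p}=b_p)$. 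The sampler \Cref{algo:gamma_sampler} returns $\bz_{\bU}\in\bGamma^{-1}(1)$: always in the first case, and with probability $\ge 1-q$ in the second (where one can show $q < 1/16$ by tracing $q = t/(1+t)$ with tail density $t<1/16$); the joint failure probability over $O(1/\epsilon)$ Conj-Test samples is again $O(c_0/c_1)$. The query simulator \Cref{algo:gamma_query} answers each Conj-Test query correctly with probability $1 - O((1/2)^{T^*}\log(1/\epsilon))$ by essentially the Step~3 agreement bound. Applying \Cref{thm: accept weaker assumption}, \hyperlink{Algorithm2}{\sc Conj-Test} then accepts $\bGamma$ with probability $\ge 0.9$.

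The main obstacle is Step~4: \Cref{algo:gamma_query} halts-and-rejects $f$ globally whenever any of its $O(\log(1/\epsilon))$ inner consistency checks fails, so the union bound over all $O(1/\epsilon)\cdot O(\log(1/\epsilon))$ inner samples must stay small --- this is exactly what forces the logarithmic choice $T^* = \Theta(\log(1/\epsilon))$ and the hierarchy $c_1 \gg c_2 \gg c_0$ of absolute constants, and it is also where the potential ``$\bGamma$ is not exactly a conjunction'' boundary regime must be ruled out. Summing failure probabilities across $\mathcal{E}$, Steps~2--4, and the two simulators yields a total at most $0.3$, giving $\Pr[\text{Steps 1--4 succeed}]\ge 0.7$ as required.
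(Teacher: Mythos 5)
Your proposal discards Step~0 entirely and tries to show Steps~1--4 succeed with probability $\geq 0.7$ for \emph{every} decision list $f$, handling the subcase $i_p\in\bU$ inline. This is a genuinely different route from the paper, which instead splits on whether the pivot variable is heavily biased: in Case~1 ($\Pr_{\bz\sim f^{-1}(1)}[\bz_{i_p}\neq b_p]\le \epsilon/(10c_0)$) the paper shows that \emph{Step~0 itself accepts} with high probability (because $f$ contains the subcube $D=C\wedge(x_{i_p}=b_p)$, all Conj-Test samples land in $D^{-1}(1)$ w.h.p., and \Cref{lem:simple1} then forces acceptance), while in Case~2 the pivot is unbiased enough that $i_p\in\bR$ w.h.p.\ (\Cref{lem: U and R are good}), so \Cref{assumption: U R are good} holds, $\Gamma=C^*$ exactly (\Cref{thm: gamma=C}), and the boundary issue never arises.

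Your version has a concrete unresolved gap precisely where you flag it: when $i_p\in\bU$ and $\alpha_{i_p}=1-b_p$, you show $\Pr_{\bw}[f(\alpha\circ\bw)=1]$ equals some $t$ up to additive error $2^{-T^*}$, and then \emph{assert} that $\bGamma$ is one of $C^*$ or $C^*\wedge(x_{i_p}=b_p)$. That conclusion fails when $t$ falls within $2^{-T^*}$ of $1/16$: then some $\alpha$ land above the threshold and others below, and $\bGamma$ need not be a conjunction, so \Cref{thm: accept weaker assumption} (which is one-sided and requires $\bGamma$ to \emph{be} a conjunction) simply does not apply. To close this you would need a further argument that the boundary regime forces the tail density $\tau\approx 1/16$, hence $q=\tau/(1+\tau)$ is bounded below by a constant $\approx 1/17$, hence $\Pr[i_p\in\bU]\le (1-q)^{c_1/\epsilon}$ is exponentially small; you neither state nor prove this. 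Moreover, the ``max over $q$ of $(1-q)^{c_1/\epsilon}\cdot q\cdot c_2/\epsilon = O(c_2/c_1)$'' bookkeeping you use repeatedly for Step~3 and the sampler is correct but delicate, and your declared choice $2^{-T^*}\le\epsilon/(100c_2)$ is actually too weak for the query simulator (which sees $\Theta((1/\epsilon)\log(1/\epsilon))$ inner samples --- you need $2^{-T^*}=O(\epsilon^3)$ as the paper's $\ell=3\log(c_1/\epsilon)$ provides, though this is fixable by taking $T^*$ a bit larger). The paper's Case~1/Case~2 split makes all of this unnecessary: it is not an incidental stylistic choice but the key device that guarantees $i_p\in\bR$ whenever the algorithm actually reaches Steps~2--4 in the analysis, which is exactly what makes $\Gamma=C^*$ an equality rather than an approximation.
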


To prove \Cref{thm: when f is a DL}, we will  consider two cases (recall that $c_0$ is the constant such that $c_0/\eps$~is the number of 
    queries and samples asked by \hyperlink{Algorithm2}{\sc Conj-Test} with relative error parameter $\eps/100$):
$$
\text{Case $1$:}\quad
\Prx_{\bz \sim f^{-1}(1)}\big[\bz_{i_p} \neq b_p\big] \leq \frac{\epsilon}{10c_0};
\qquad\text{ Case $2$:}\quad 
 \Prx_{\bz \sim f^{-1}(1)}\big[\bz_{i_p} \neq b_p\big] > \frac{\epsilon}{10c_0}.
$$
The key idea, is that in the first case, $f$ is so close to a conjunction that \hyperlink{Algorithm2}{\sc Conj-Test}
will accept in line 1 with high probability. In the second case, with high probability we will have $i_p \in \bU$. 

\subsubsection{Case 1}
We will first prove a lemma regarding our conjunction tester.

\begin{lemma}\label{lem:simple1}
    Let $h : \{0,1\}^n \to \{0,1\}$ and %
    let $D$ be an anti-monotone conjunction with $D^{-1}(1)\subseteq h^{-1}(1)$. Consider running \Cref{algo: mono conjunction tester} on $h$ using $\MQ(h)$ and $\SAMP(h)$. 
    If all samples \Cref{algo: mono conjunction tester} received lie in $D^{-1}(1)$, then it always accepts $h$.%
\end{lemma}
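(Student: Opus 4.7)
The plan is a direct inspection of the two phases of \Cref{algo: mono conjunction tester} combined with the structural characterization of anti-monotone conjunctions from \Cref{lem: monotone conj characterization}. Specifically, since $D$ is an anti-monotone conjunction, \Cref{lem: monotone conj characterization} tells us that $D^{-1}(1)$ is both a linear subspace of $\{0,1\}^n$ and an anti-monotone set. We will use these two closure properties of $D^{-1}(1)$ to argue that every membership query the algorithm makes to $h$ returns $1$ under the given hypothesis, so neither Phase 1 (line 3) nor Phase 2 (line 8) ever triggers a rejection, and therefore the algorithm reaches line 9 and accepts.

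First I would handle Phase 1. In each iteration, the algorithm draws $\bx, \by \sim \SAMP(h)$ and queries $h(\bx \oplus \by)$. By hypothesis $\bx, \by \in D^{-1}(1)$, and since $D^{-1}(1)$ is a linear subspace we have $\bx \oplus \by \in D^{-1}(1)$. The assumption $D^{-1}(1) \subseteq h^{-1}(1)$ then gives $h(\bx \oplus \by) = 1$, so line 3 never rejects.

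Next I would handle Phase 2. In each iteration, the algorithm draws $\bx \sim \SAMP(h)$, then a uniform $\by \preceq \bx$, then $\bu \sim \SAMP(h)$. By hypothesis $\bx, \bu \in D^{-1}(1)$. Since $D^{-1}(1)$ is anti-monotone and $\by \preceq \bx$, we conclude $\by \in D^{-1}(1)$. Then, using once more that $D^{-1}(1)$ is a linear subspace, $\by \oplus \bu \in D^{-1}(1) \subseteq h^{-1}(1)$, so $h(\by \oplus \bu) = 1$ and line 8 never rejects.

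I do not anticipate any real obstacle here: both steps only require invoking the two closure properties of $D^{-1}(1)$ (linear-subspace closure and downward closure) that follow from \Cref{lem: monotone conj characterization}, together with the inclusion $D^{-1}(1) \subseteq h^{-1}(1)$. Since neither rejection condition is ever met along any execution path, the algorithm reaches line 9 and accepts $h$ with probability $1$.
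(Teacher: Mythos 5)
Your proof is correct and takes essentially the same route as the paper's: inspect Phase 1 and Phase 2, use the linear-subspace closure of $D^{-1}(1)$ for the XOR queries and the anti-monotonicity of $D^{-1}(1)$ for the downward step, and conclude via $D^{-1}(1) \subseteq h^{-1}(1)$ that no rejection is ever triggered. The only cosmetic difference is that you explicitly invoke \Cref{lem: monotone conj characterization} for the two closure properties, whereas the paper uses them implicitly.
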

\begin{proof}
    The algorithm can only reject on lines 3 and 8.
    On lines $3$ assumption on samples we have $\bx, \by \in D^{-1}(1)$. Thus $\bx \oplus \by \in D^{-1}(1)$ so $h(\bx \oplus \by)=1$. 
    For line 11, we have $\bx,\bu \in D^{-1}(1)$, and since $\by \preceq \bx$ we have that $\by\in D^{-1}(1)$.
So again, $h(\by \oplus \bu)=1$.
\end{proof}

\begin{corollary}\label{lem:samples and query from conj}
    Let $h : \{0,1\}^n \to \{0,1\}$, and  $D$ be a  conjunction 
with $D^{-1}(1)\subseteq h^{-1}(1)$. %
    Consider running \hyperlink{Algorithm2}{\sc Conj-Test}
    on $h $ using $\MQ(h)$ and $\SAMP(h)$.
If all samples that \hyperlink{Algorithm2}{\sc Conj-Test} received lie in $D^{-1}(1)$, then 
    it always
    accepts $h$.
\end{corollary}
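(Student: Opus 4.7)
The plan is to reduce to \Cref{lem:simple1} by unpacking how \hyperlink{Algorithm2}{\sc Conj-Test} invokes \Cref{algo: mono conjunction tester}. Recall from \Cref{remark:monotone} that \hyperlink{Algorithm2}{\sc Conj-Test} first draws a single $\by \sim \SAMP(h)$ and then runs \Cref{algo: mono conjunction tester} on $h_{\by}(x):=h(x\oplus \by)$, where $\MQ(h_{\by})$ and $\SAMP(h_{\by})$ are simulated using $\MQ(h)$ and $\SAMP(h)$, respectively. By hypothesis, every sample \hyperlink{Algorithm2}{\sc Conj-Test} draws from $\SAMP(h)$ lies in $D^{-1}(1)$; in particular the initial $\by$ satisfies $\by \in D^{-1}(1)$.

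First I would invoke the observation from \Cref{remark:monotone} (i.e.\ the modified Observation~3 of \cite{PRS02}): since $D$ is a conjunction and $\by \in D^{-1}(1)$, the function $D_{\by}(x):=D(x\oplus \by)$ is an \emph{anti-monotone} conjunction. Next I would check the two conditions needed to apply \Cref{lem:simple1} to the inner run of \Cref{algo: mono conjunction tester} on $h_{\by}$ with respect to the anti-monotone conjunction $D_{\by}$. The containment $D_{\by}^{-1}(1)\subseteq h_{\by}^{-1}(1)$ is immediate: if $D_{\by}(x)=1$ then $D(x\oplus \by)=1$, so by hypothesis $h(x\oplus \by)=1$, i.e.\ $h_{\by}(x)=1$.

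It remains to verify the sample invariant, namely that every sample fed into the inner \Cref{algo: mono conjunction tester} lies in $D_{\by}^{-1}(1)$. The simulator of $\SAMP(h_{\by})$ draws $\bx\sim\SAMP(h)$ and returns $\bx\oplus \by$; by hypothesis $\bx\in D^{-1}(1)$, and since $\by\in D^{-1}(1)$ as well, we have $D_{\by}(\bx\oplus \by)=D(\bx)=1$, so indeed $\bx\oplus \by\in D_{\by}^{-1}(1)$. Thus \Cref{lem:simple1} applies and \Cref{algo: mono conjunction tester} always accepts $h_{\by}$, which means \hyperlink{Algorithm2}{\sc Conj-Test} always accepts $h$.

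There is no real obstacle here beyond bookkeeping: the corollary is essentially \Cref{lem:simple1} composed with the Parnas--Ron--Samorodnitsky reduction from conjunctions to anti-monotone conjunctions, and the only thing to be careful about is tracking the XOR-by-$\by$ transformation through both the function $h\mapsto h_{\by}$ and the sample simulation $\bx\mapsto \bx\oplus \by$, so that the hypothesis ``all samples lie in $D^{-1}(1)$'' translates correctly to ``all samples fed to \Cref{algo: mono conjunction tester} lie in $D_{\by}^{-1}(1)$.''
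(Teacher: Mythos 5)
Your proof is correct and takes essentially the same approach as the paper: reduce to \Cref{lem:simple1} via the XOR-by-$\by$ transformation, checking that the containment and the sample invariant both survive the transformation. The only cosmetic difference is that the paper names its anti-monotone conjunction $D'$ abstractly (``replace each positive literal with its negation''), whereas you define it as $D_{\by}(x) := D(x \oplus \by)$; since $\by \in D^{-1}(1)$ these coincide, and your version is if anything slightly more explicit about why the two containment/sample facts hold.
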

\begin{proof}
    Let $D'$ be the anti-monotone version of $D$, i.e.~$D'$ is obtained from $D$ by replacing every positive literal
      $x_i$ with the negated version $\overline{x_i}$. %
Recall that 
    \hyperlink{Algorithm2}{\sc Conj-Test}  first draws
 $\by \sim h^{-1}(1)$, which by assumption satisfies $\by\in D^{-1}(1)$, and then runs \Cref{algo: mono conjunction tester} on $h_{\by}$ (cf.~\Cref{remark:monotone}).
$D^{-1}(1)\subseteq h^{-1}(1)$ implies that $\smash{D'^{-1}(1)\subseteq h^{-1}_{{\by}}(1)}$.
Moreover,
since all samples lie in $D^{-1}(1)$, all  samples to \Cref{algo: mono conjunction tester}~must~lie~in $D'^{-1}(1)$. 
 It follows from \Cref{lem:simple1} that
    \hyperlink{Algorithm2}{\sc Conj-Test}
    accepts.
\end{proof}

We are ready to prove \Cref{thm: when f is a DL} for Case 1:

\begin{lemma}\label{lem: f is a DL case 1}
    Let $f=C\land L$ be a decision list as described at the start of \Cref{sec:completeness} with $$\Prx_{\bz \sim f^{-1}(1)}\big[\bz_{i_p} \neq b_p\big] \leq \frac{\epsilon}{10c_0}\,.$$ Then
    \Cref{algo: Decision list tester} accepts $f$ on line 1 with probability at least $0.9$. %
\end{lemma}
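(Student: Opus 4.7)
The plan is to exhibit a conjunction $D$ whose positive set is contained in $f^{-1}(1)$ and such that, with high probability, every sample that \hyperlink{Algorithm2}{\sc Conj-Test} receives from $\SAMP(f)$ happens to lie in $D^{-1}(1)$; then invoke \Cref{lem:samples and query from conj} to conclude that \hyperlink{Algorithm2}{\sc Conj-Test} accepts.

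Concretely, define
\[
D(x) \;:=\; C(x) \;\wedge\; \big(x_{i_p} = b_p\big),
\]
which is a conjunction because $C$ is a conjunction over $\mathcal{H}_C$ and we add one further literal on the pivot variable $x_{i_p}$ (note $i_p \notin \mathcal{H}_C$ by the no-repeat assumption on the list). For any $x$ with $D(x)=1$, we have $C(x)=1$ and $x_{i_p}=b_p$, so in evaluating the list $f = C \wedge L$ all head rules are passed over and the pivot rule fires with output $v_p=1$; hence $f(x)=1$. Thus $D^{-1}(1) \subseteq f^{-1}(1)$.

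Next I would combine \Cref{ob:simple1} with the case hypothesis to control how often a uniform positive example of $f$ lies in $D^{-1}(1)$. By \Cref{ob:simple1}, every $z \in f^{-1}(1)$ satisfies $C(z)=1$, so a uniform $\bz \sim f^{-1}(1)$ fails to lie in $D^{-1}(1)$ exactly when $\bz_{i_p} \neq b_p$; by hypothesis this has probability at most $\epsilon/(10c_0)$. Since \hyperlink{Algorithm2}{\sc Conj-Test} on $f$ with relative-error parameter $\epsilon$ draws at most $c_0/\epsilon$ samples from $\SAMP(f)$, a union bound gives
\[
\Prx\big[\text{all samples drawn lie in } D^{-1}(1)\big] \;\geq\; 1 - \frac{c_0}{\epsilon}\cdot\frac{\epsilon}{10c_0} \;=\; \frac{9}{10}.
\]
Conditioned on this event, \Cref{lem:samples and query from conj} (applied to $h=f$ and the conjunction $D$) says that \hyperlink{Algorithm2}{\sc Conj-Test} is guaranteed to accept. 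Hence line~1 of \Cref{algo: Decision list tester} accepts $f$ with probability at least $0.9$, as required.

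I do not anticipate any serious obstacle: the only subtlety to check is that $D$ is indeed a well-defined conjunction (i.e.\ the pivot variable is not already constrained by $C$, which follows from the standing assumption that no variable appears twice in the list) and that \Cref{lem:samples and query from conj} can be applied verbatim here, since \hyperlink{Algorithm2}{\sc Conj-Test}'s only randomness-sensitive input from $f$ is its stream of $\SAMP(f)$ samples.
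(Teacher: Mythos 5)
Your proof is correct and matches the paper's argument essentially step for step: the same conjunction $D = C \wedge (x_{i_p}=b_p)$, the same observation that $D^{-1}(1)\subseteq f^{-1}(1)$ and that a positive example of $f$ escapes $D^{-1}(1)$ only when $\bz_{i_p}\neq b_p$, the same union bound over the $c_0/\epsilon$ samples, and the same appeal to \Cref{lem:samples and query from conj}. No further comments are needed.
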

\begin{proof}
 Consider the conjunction $D=C \land (x_{i_p}=b_p)$. We have $D^{-1}(1) \subseteq f^{-1}(1)$.

It follows from \Cref{lem:samples and query from conj} that 
  \hyperlink{Algorithm2}{\sc Conj-Test} accepts $f$ on line 1
  if all the $c_0/\epsilon$ samples drawn from $f^{-1}(1)$ lie 
  in $D^{-1}(1)$.
{(Recall that $c_0/\eps$ is the number of samples and queries 
  asked by \hyperlink{Algorithm2}{\sc Conj-Test}  
  with parameter $\eps/100$. So $c_0/\eps$ is also an upper bound for the number of samples \hyperlink{Algorithm2}{\sc Conj-Test} draws in line 1  with  $\eps$.)}
Note that every $z\in f^{-1}(1)$ satisfies $z\in D^{-1}(1)$ as long as 
  $z_{i_p}=b_p$.
So all $c_0/\eps$ samples lie in $D^{-1}(1)$ with probability at least $0.9$ by a union bound.
\end{proof}

\subsubsection{Case 2}
\label{sec:case2}

In the following analysis we set $\ell$ to be the following parameter: 
$$
\ell:= \min\left(k -p,\hspace{0.06cm}3\log\left(\frac{c_1}{\epsilon}\right)\right)\le 3\log\left(\frac{c_1}{\epsilon}\right).%
$$
\begin{lemma}\label{lem: U and R are good}
    Let $f=C\land L$ be a decision list as described at the start of \Cref{sec:completeness}. Suppose that%
    $$\Prx_{\bz \sim f^{-1}(1)}\big[\bz_{i_p} \neq b_p\big] \geq \frac{\epsilon}{10c_0}\,.$$
    If the algorithm reaches line $3$, then $i_{p+j}\in \bR$ for all $j:0\le j\le \ell$ with probability at least $0.9 $.
\end{lemma}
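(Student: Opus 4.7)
The plan is to show that for each coordinate $i_{p+j}$ with $0 \le j \le \ell$, the probability it fails to land in $\bR$ (i.e.\ all $c_1/\eps$ samples in $\bS$ agree on this coordinate) is exponentially small in $c_1/\eps$, and then to union bound over the $\ell+1 \le 3\log(c_1/\eps)+1$ coordinates involved. Crucially, the samples in $\bS$ are drawn unconditionally in Step~1 from $\SAMP(f)$, so conditioning on the algorithm reaching line~3 (equivalently, not accepting in Step~0) does not change the distribution of $\bS$; hence the samples are i.i.d.\ uniform over $f^{-1}(1)$.

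First I would handle the pivot coordinate $j=0$. The Case~2 hypothesis gives $\Pr_{\bz \sim f^{-1}(1)}[\bz_{i_p} \neq b_p] \geq \eps/(10 c_0)$, and \Cref{lem: low bias tail} gives $\Pr_{\bz \sim f^{-1}(1)}[\bz_{i_p} = b_p] \geq 1/2$. The coordinate $i_p$ fails to lie in $\bR$ only if all $c_1/\eps$ i.i.d.\ samples of $\bS$ take the same value on $i_p$, which happens with probability at most
\[
(1/2)^{c_1/\eps} + \bigl(1 - \tfrac{\eps}{10c_0}\bigr)^{c_1/\eps} \le 2^{-c_1/\eps} + e^{-c_1/(10 c_0)}.
\]
Since $c_1$ is chosen to be sufficiently large and sufficiently larger than $c_0$, this is much smaller than $0.1 / (\ell+1)$.

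Next, I would handle the tail coordinates $j = 1, \ldots, \ell$. By the second part of \Cref{lem: low bias tail}, each $i_{p+j}$ satisfies $\Pr_{\bz \sim f^{-1}(1)}[\bz_{i_{p+j}} = b] \in [1/4, 3/4]$ for both $b \in \{0,1\}$. Hence the probability that all $c_1/\eps$ samples in $\bS$ agree on $i_{p+j}$ is at most $2 \cdot (3/4)^{c_1/\eps}$.

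Finally, I would union bound over the $\ell+1$ coordinates, bounding the overall failure probability by
\[
2^{-c_1/\eps} + e^{-c_1/(10 c_0)} + (\ell+1) \cdot 2 \cdot (3/4)^{c_1/\eps}
\ \le\ 2^{-c_1/\eps} + e^{-c_1/(10 c_0)} + \bigl(6 \log(c_1/\eps) + 2\bigr) \cdot (3/4)^{c_1/\eps}.
\]
Since the exponentials in $c_1/\eps$ dominate the $\log(c_1/\eps)$ factor and $c_1 \gg c_0$, each term is at most $0.1/3$ for a sufficiently large choice of $c_1$, so the sum is at most $0.1$. Therefore with probability at least $0.9$ every $i_{p+j}$ for $0\le j\le \ell$ lands in $\bR$, as required. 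The only subtlety in the argument is the pivot case (the bound $3/4$ from \Cref{lem: low bias tail} does not apply there), which is handled by combining the Case~2 hypothesis with the $\ge 1/2$ lower bound from the same lemma; everything else is a routine union bound.
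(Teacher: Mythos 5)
Your proof is correct and follows essentially the same approach as the paper: using the Case~2 hypothesis together with \Cref{lem: low bias tail} to bound the failure probability for the pivot coordinate, using the $[1/4,3/4]$ bias bound from \Cref{lem: low bias tail} for the other tail coordinates $i_{p+1},\ldots,i_{p+\ell}$, and then union bounding over all $\le \ell+1$ coordinates, observing that the logarithmic bound on $\ell$ is dominated by the exponentially small per-coordinate failure probability. The only cosmetic remark is that your intermediate claim that the pivot failure probability is "much smaller than $0.1/(\ell+1)$" is unnecessary (the pivot needs only an $O(1)$ bound since it is a single term in the union bound, as your final tally with $0.1/3$ per group correctly reflects), but this does not affect the validity of the argument.
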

\begin{proof}
The event follows if for all $i\in \{i_p,\ldots,i_{p+\ell}\}$ and $b\in \{0,1\}$,
  at least one $\bz\in\bS$ has $\bz_i=b$.

    On line $2$, we sample ${c_1/\epsilon}$ many $\bz \sim f^{-1}(1)$ with $c_1$ sufficiently larger than $c_0\ge 1$.
First we handle $i=i_p$: because we are in Case 2, the probability of $\bz_{i_p}=b_p$ for all $\bz \in \bS$ is at most
    $$\left(1-\frac{\epsilon}{10c_0}\right)^{{c_1/\eps}}.$$
    By \Cref{lem: low bias tail}, $\Pr_{\bz \sim f^{-1}(1)}[\bz_{i_p}=b_p] \geq 1/2$. Hence the probability of $\bz_{i_p}=1-b_p$ for all $\bz$ is at most 
     $\smash{(1/2)^{{c_1/\epsilon}}}$.
Setting $c_1$ to be sufficiently larger than $c_0$, we have $i_p\notin \bR$ with probability at most $0.05$.

For each $i_{p+j}$ with $j>0$, by \Cref{lem: low bias tail} we have for either $b \in \{0,1\}$, $$\Prx_{\bz\sim f^{-1}(1)}\big[ \text{$\bz_{i_{p+j}}=b$ for all $\bz\in \bS$}\big ] \leq \left(\frac{3}{4}\right)^{{c_1/\epsilon}}.$$
The probability of each of these $2\ell$ events can be made
  sufficiently smaller than $1/\ell$ (given that $\ell$ is only logarithmic in $c_1/\eps$) by taking $c_1$ sufficiently large.
The lemma follows from a union bound.
\end{proof}

We will also need the following simple observation about $\bU$:
\begin{lemma}\label{lem: H is good}
    If \Cref{algo: Decision list tester} reaches line 3, then
    $\mathcal{H}_C\subseteq \bU$ and $\bu_{i_j}=1-b_{j}$ for all $i_j\in {\cal H}_C$.
\end{lemma}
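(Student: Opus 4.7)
The plan is to observe that this lemma is a direct consequence of \Cref{ob:simple1} combined with the definition of $\bU$ from line 2. Specifically, \Cref{ob:simple1} guarantees that every $z \in f^{-1}(1)$ satisfies $z_{i_j} = 1 - b_j$ for each $i_j \in \calH_C$, so every sample $\bz \in \bS$ (which is drawn from $f^{-1}(1)$ via $\SAMP(f)$) will have this property deterministically, regardless of the random choices made by the algorithm.

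The proof would then proceed as follows. Fix any $i_j \in \calH_C$. By \Cref{ob:simple1}, for every $\bz \in \bS$ we have $\bz_{i_j} = 1 - b_j$. Thus the coordinate $i_j$ satisfies the unanimity condition in the definition of $\bU$ (with $b = 1 - b_j$), so $i_j \in \bU$. Moreover, by the definition of $\bu$ as the unique string such that $\bu_i = \bz_i$ for all $i \in \bU$ and all $\bz \in \bS$, we immediately get $\bu_{i_j} = 1 - b_j$.

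There is no real obstacle here; the lemma is essentially a restatement of \Cref{ob:simple1} through the lens of the algorithm's construction of $\bU$ and $\bu$. The only thing to note is that the conclusion holds with probability $1$ (not merely with high probability), since it follows from the deterministic fact that all satisfying assignments of $f$ agree on the head variables, independently of how many samples are drawn or what they happen to be.
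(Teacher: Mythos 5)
Your proof is correct and takes essentially the same route as the paper's: both observe that \Cref{ob:simple1} (equivalently, the decomposition $f = C \land L$) forces every $z \in f^{-1}(1)$, and hence every sample in $\bS$, to have $z_{i_j} = 1 - b_j$ for all $i_j \in \calH_C$, from which the membership $i_j \in \bU$ and the value $\bu_{i_j} = 1 - b_j$ follow directly from the algorithm's definitions of $\bU$ and $\bu$. Your added remark that the conclusion holds with probability $1$ is a correct and mildly useful clarification, though the paper leaves it implicit.
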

\begin{proof}
    Given that $f=C\land L$, by definition  we must have that
    every $z\in f^{-1}(1)$ satisfies $z_{i_j}=1-b_{j}$ for all $i_j\in {\cal H}_C$. 
So this must also be the case for points in $\bS$.
\end{proof}

For all lemmas in the rest of this section we will assume that the conclusions of \Cref{lem: U and R are good} and \Cref{lem: H is good} hold, i.e.:%

\begin{assumption}\label{assumption: U R are good}
\Cref{algo: Decision list tester} reaches line 3 with $\bU=U,\bR=R$ and $\bu=u$ satisfying%
    \begin{itemize}
        \item $\mathcal{H}_C \subseteq  U$ and $ u_{i_j}=1-b_{j}$ for all $i_j\in {\cal H}_C$.
        \item For any $j:p \leq j \leq  p+\ell$, $i_j \in  R$.
    \end{itemize}
Note that once $U,R$ and $u$ are fixed, 
  the function $\bGamma=\Gamma$ over $\{0,1\}^U$ is also fixed. 
\end{assumption}

The next lemma will play a key role in the rest of the proof; it says that under \Cref{assumption: U R are good}, one can ignore the rules of $L$ that are too deep in the list.
\begin{lemma}\label{thm: subcubes look the same}
Under \Cref{assumption: U R are good},
for any $\alpha\in \{0,1\}^{U}$ with $\alpha_i=1-b_i$ for all $i\in  \mathcal{H}_C$, we have 
\begin{align*}
    \Prx_{\bw \sim \{0,1\}^{R}}\big[f(\alpha \circ \bw) \neq f(u \circ \bw)\big] \leq %
    \left(\frac{\eps}{ c_1}\right)^3.
\end{align*}
\end{lemma}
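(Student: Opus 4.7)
The plan is to exploit the decomposition $f = C \wedge L$ together with the guarantees from \Cref{assumption: U R are good} to show that both $f(\alpha \circ \bw)$ and $f(u \circ \bw)$ depend only on the rules $(x_{i_p},b_p,v_p),\ldots,(x_{i_{p+\ell}},b_{p+\ell},v_{p+\ell})$ of $L$, except on a tiny event.

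First I would argue that both $\alpha \circ w$ and $u \circ w$ satisfy the conjunction $C$. Indeed, $C$ is a conjunction over variables in $\mathcal{H}_C \subseteq U$, and by hypothesis $\alpha_i = 1-b_i$ for all $i \in \mathcal{H}_C$, while \Cref{assumption: U R are good} ensures the same for $u$. Hence $C(\alpha \circ w) = C(u \circ w) = 1$ for every $w$, and so $f(\alpha \circ w) = L(\alpha \circ w)$ and $f(u \circ w) = L(u \circ w)$. Therefore the claim reduces to bounding $\Pr_{\bw}[L(\alpha \circ \bw) \neq L(u \circ \bw)]$.

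Next, I would analyze where $\alpha \circ w$ and $u \circ w$ can disagree as inputs to $L$. Since $L$ only reads the variables $x_{i_p},\ldots,x_{i_k}$, and since \Cref{assumption: U R are good} guarantees that $i_p,\ldots,i_{p+\ell} \in R$, the two strings $\alpha \circ w$ and $u \circ w$ agree on the first $\ell+1$ variables queried by $L$ (both carry the bits of $w$ there). Thus the only way to get $L(\alpha \circ w) \neq L(u \circ w)$ is if $L$ reaches a rule of index strictly greater than $p+\ell$ on at least one of the two inputs. For $L$ to reach such a rule on $\alpha \circ w$, each of the first $\ell+1$ rules must fail, i.e.~$w_{i_{p+j}} \neq b_{p+j}$ for all $j = 0,1,\ldots,\ell$. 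Because these are $\ell+1$ distinct coordinates in $R$ and $\bw$ is uniform on $\{0,1\}^R$, this event has probability exactly $2^{-(\ell+1)}$. The same bound holds for $u \circ \bw$, so by a union bound
\[
\Prx_{\bw \sim \{0,1\}^R}\big[L(\alpha \circ \bw) \neq L(u \circ \bw)\big] \leq 2 \cdot 2^{-(\ell+1)} = 2^{-\ell}.
\]

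Finally, I would plug in $\ell = \min(k-p,\,3\log(c_1/\eps))$. If $\ell = k-p$ then $L$ has no rules of index greater than $p+\ell$, so the probability is zero. Otherwise $\ell = 3\log(c_1/\eps)$, giving $2^{-\ell} = (\eps/c_1)^3$, which matches the claimed bound. I do not anticipate any real obstacle here; the one point that must be handled carefully is observing that the variables $i_p,\ldots,i_{p+\ell}$ are all distinct (guaranteed by the normal-form assumption that no variable appears twice in a decision list), which is what makes the probability of all $\ell+1$ rules failing exactly $2^{-(\ell+1)}$ rather than merely an upper bound.
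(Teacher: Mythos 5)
Your proof is correct and follows essentially the same approach as the paper: reduce to $L$ via $C(\alpha\circ w)=C(u\circ w)=1$, then observe that since $i_p,\ldots,i_{p+\ell}\in R$ the two inputs agree on the first $\ell+1$ rules of $L$, so they can only disagree when all those rules fail, which has probability $2^{-(\ell+1)}$. One small remark: your union bound over the two inputs is unnecessary since the event ``$L$ reads past rule $p+\ell$'' depends only on $w_{i_p},\ldots,w_{i_{p+\ell}}$ and is therefore literally the same event for $\alpha\circ\bw$ and $u\circ\bw$; dropping it recovers the paper's slightly sharper $2^{-(\ell+1)}\le(\eps/c_1)^3$, but your $2^{-\ell}=(\eps/c_1)^3$ already matches the claimed bound.
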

\begin{proof}
Notice that both $u$ and $\alpha$ satisfy the conjunction given by $C$. Hence we have $f(u \circ w)=L(u \circ w)$ and $f(\alpha \circ w)=L(\alpha \circ w)$ for every $w\in\{0,1\}^R$. By \Cref{assumption: U R are good}, we have $i_p, \ldots, i_{p+\ell} \in R$. 
On the one hand, if $\ell <3\log (c_1/\eps)$, then we have $\ell=k-p$ and thus,  $L(\alpha\circ w)=L(u\circ w)$ for all $w\in \{0,1\}^{R}$ in which case the probability we are interested in is 
  trivially $0$.
On the other hand, if $\ell \geq 3\log (c_1/\eps)$ then if $w_{i_j}=b_{i_j}$ for some $j: p \leq j \leq p + \ell$ we have $L(\alpha \circ w)=L(u \circ w)$. Thus, 
    \begin{align*}
        \Prx_{\bw \sim \{0,1\}^{R}}\big[f(\alpha \circ \bw) \neq f(u \circ \bw)\big] &\leq \Prx_{\bw \sim \{0,1\}^{R}}\big[ \text{$\bw_{i_j} \neq b_{i_j}$ for all $j:p \leq j \leq p + \ell$}\big] = 2^{-(\ell+1)}. 
    \end{align*} 
Plugging in  $\ell=3\log (c_1/\eps)$   finishes the proof of the lemma in this case.
\end{proof}

Next we show that, under \Cref{assumption: U R are good}, \Cref{algo: Decision list tester} is unlikely to reject during lines 3--6:

\begin{lemma}\label{lem: f is DL 6-7}
Under \Cref{assumption: U R are good}, \Cref{algo: Decision list tester} rejects $f$ on lines 3--6 with probability at most $0.1$.
\end{lemma}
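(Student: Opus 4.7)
The plan is to bound the rejection probability in each of the three substeps (line 3, line 4, and lines 5--6) separately by a small constant and then combine via a union bound. I would set the individual targets to, say, $0.03$, $0.05$, and $0.02$ respectively, which comfortably sum to less than $0.1$.

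\textbf{Line 3 (density check).} First I would observe that, since $u$ satisfies the head conjunction $C$ by \Cref{assumption: U R are good}, the restriction $f\upharpoonleft_u$ coincides with $L\upharpoonleft_u$ on $\{0,1\}^R$. Because $i_p\in R$ (again by \Cref{assumption: U R are good}), the decision list $L\upharpoonleft_u$ still has $(x_{i_p},b_p,1)$ as its first rule with output $1$, and this rule alone forces $L\upharpoonleft_u(w)=1$ for every $w\in\{0,1\}^R$ with $w_{i_p}=b_p$. Hence $\Pr_{\bw\sim\{0,1\}^R}[f(u\circ\bw)=1]\geq 1/2$. A standard Chernoff bound on the $c_2$ i.i.d.\ samples then shows that the empirical fraction falls below $1/4$ with probability at most $0.03$, provided $c_2$ is chosen sufficiently large.

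\textbf{Line 4 (standard-model DL tester).} The key point here is simply that any restriction of a decision list is itself a decision list; so $f\upharpoonleft_u$ is a decision list. Then by \Cref{thm:Bshouty20}, the standard-model tester accepts with probability at least $19/20$, giving a rejection probability of at most $0.05$.

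\textbf{Lines 5--6 ($f\upharpoonleft_{\bz_U}$ vs.\ $f\upharpoonleft_u$ check).} For each iteration, I would condition on $\bz\sim f^{-1}(1)$ and note that by \Cref{ob:simple1} combined with \Cref{lem: H is good}, the string $\bz_U\in\{0,1\}^U$ satisfies $(\bz_U)_{i_j}=1-b_j$ for all $i_j\in\mathcal{H}_C$, i.e.\ it meets the hypothesis of \Cref{thm: subcubes look the same}. Applying that lemma with $\alpha=\bz_U$ yields
\[
\Prx_{\bw\sim\{0,1\}^R}\big[f(\bz_U\circ\bw)\neq f(u\circ\bw)\big]\;\leq\;\left(\frac{\eps}{c_1}\right)^3.
\]
A union bound over the $c_2/\eps$ iterations gives a total rejection probability of at most $(c_2/\eps)\cdot(\eps/c_1)^3=c_2\eps^2/c_1^3$, which is well below $0.02$ since $c_1$ was chosen sufficiently larger than $c_2$.

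Adding the three contributions yields the claimed $0.1$ bound. I expect the main subtlety to lie in step 3: one must be careful to invoke \Cref{thm: subcubes look the same} only after fixing $\bz_U$ (so that the randomness over $\bw$ is independent), and to verify that the cubic saving $(\eps/c_1)^3$ beats the linear $c_2/\eps$ blowup from the union bound. Steps 1 and 2 are essentially immediate from the structure of decision lists and the guarantees of the standard-model tester.
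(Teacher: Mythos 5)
Your proposal is correct and follows essentially the same approach as the paper's proof: a Chernoff bound for line 3 using $\Pr_{\bw}[f(u\circ\bw)=1]\ge 1/2$ (since $i_p\in R$ and $u$ satisfies $C$), \Cref{thm:Bshouty20} for line 4 using that $f\upharpoonleft_u=L\upharpoonleft_u$ is a decision list, and \Cref{thm: subcubes look the same} together with a union bound over the $c_2/\eps$ iterations for line 6 (your explicit invocation of \Cref{ob:simple1} to verify the hypothesis of that lemma for $\alpha=\bz_U$ is a small bit of extra care the paper leaves implicit, but the argument is the same). The only quibble is cosmetic: your targets $0.03+0.05+0.02$ sum to exactly $0.1$, not ``comfortably less than,'' though this still meets the stated bound.
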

\begin{proof}
    Since $i_p \in R$ and $ u$ satisfies the conjunction in $C$, we have that $$\Prx_{\bw \sim \{0,1\}^R}\big[f(u\circ \bw)=1\big] \geq \Prx_{\bw \sim \{0,1\}^R}\big[{\bw_{i_p}=b_p} %
    \big] = 1/2.$$
    Hence, by a simple Chernoff bound and by making the constant $c_2$ sufficiently large, we have that the probability that less than $1/4$ of the points queried have $f(\bz)=1$ is at most $0.01$.

Next for line 4, note that by \Cref{assumption: U R are good}, we have that $f{\upharpoonleft_{u}}=(C\land L){\upharpoonleft_{u}}=L{\upharpoonleft_{u}}$. But $L{\upharpoonleft_{u}}$ is a decision list. %
Hence line $4$ will reject with probability at most $0.05$ by  \Cref{thm:Bshouty20}.

   Recall that line $6$ is repeated $c_2/\epsilon$ times.
   Fix any $z \in f^{-1}(1)$ as the point sampled in line ${6}$.  By \Cref{thm: subcubes look the same}, we have that 
    \begin{align*}
    \Prx_{\bw \sim \{0,1\}^{R}}\big[f(u \circ \bw) \neq f(z_{U} \circ \bw)\big] \leq \left(\frac{\epsilon}{ c_1}\right)^3.
\end{align*}
By a union bound over the $c_2/\eps$ rounds on line 11 and noting
  that $c_1$ is sufficiently larger than $c_2$,
  the algorithm rejects with probability at most $0.01$.
The lemma follows by a union bound.
\end{proof}

Finally it remains to examine line 7. 
Let $C^*:\{0,1\}^U\rightarrow \{0,1\}$ be the restriction of $C$, which~has domain $\{0,1\}^n$ but is just a conjunction over $\calH_C\subseteq U$, onto the domain $\{0,1\}^U$:
$$
C^*(\alpha)=\big(\alpha_{i_1}=1-b_1\big)\land \cdots \land \big(\alpha_{i_{p-1}}=1-b_{p-1}\big),
$$
{for any $\alpha\in \{0,1\}^U$.}
Our first goal will be to show that $\Gamma=C^*$. 

\begin{lemma}\label{thm: gamma=C}
Under \Cref{assumption: U R are good}, we have that  $\Gamma=C^*$ (so in particular, $\Gamma$ is a conjunction).
\end{lemma}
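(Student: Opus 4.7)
The plan is to prove the two cases $C^*(\alpha)=0$ and $C^*(\alpha)=1$ separately, showing that $\Gamma(\alpha)$ agrees with $C^*(\alpha)$ in each.

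First, suppose $C^*(\alpha)=0$. Then by definition of $C^*$, there exists some $i_j \in \mathcal{H}_C$ with $\alpha_{i_j} = b_j$. Since \Cref{assumption: U R are good} guarantees $\mathcal{H}_C \subseteq U$, the coordinate $i_j$ lies in $U$, so for every $w \in \{0,1\}^R$ the point $\alpha \circ w$ has $(\alpha \circ w)_{i_j} = \alpha_{i_j} = b_j$. This violates the conjunction $C$ at this point, so $f(\alpha \circ w) = C(\alpha \circ w) \land L(\alpha \circ w) = 0$ for every $w$. Thus $\Pr_{\bw \sim \{0,1\}^R}[f(\alpha \circ \bw)=1]=0 < 1/16$, giving $\Gamma(\alpha) = 0 = C^*(\alpha)$.

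Next, suppose $C^*(\alpha)=1$, i.e.~$\alpha_{i_j} = 1-b_j$ for all $i_j \in \mathcal{H}_C$. Under \Cref{assumption: U R are good}, the string $u$ also satisfies this condition, so we can apply \Cref{thm: subcubes look the same} to get
\[
\Prx_{\bw \sim \{0,1\}^R}\big[f(\alpha \circ \bw) \neq f(u \circ \bw)\big] \leq \left(\frac{\epsilon}{c_1}\right)^3.
\]
On the other hand, since $u$ satisfies the conjunction $C$ and $i_p \in R$ by \Cref{assumption: U R are good}, we have $f(u \circ w) = L(u \circ w)$, and whenever $w_{i_p}=b_p$ the rule $(x_{i_p},b_p,1)$ fires, forcing $L(u \circ w)=1$. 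Hence $\Pr_{\bw}[f(u \circ \bw)=1] \geq 1/2$. Combining these two bounds via a union bound yields
\[
\Prx_{\bw \sim \{0,1\}^R}\big[f(\alpha \circ \bw)=1\big] \;\geq\; \frac{1}{2} - \left(\frac{\epsilon}{c_1}\right)^3 \;\geq\; \frac{1}{16},
\]
(using that $c_1$ is sufficiently large and $\epsilon < 1$), so $\Gamma(\alpha)=1=C^*(\alpha)$.

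There is no real obstacle here: the argument is essentially a direct computation, and the key input is \Cref{thm: subcubes look the same}, which was already established under \Cref{assumption: U R are good}. The one thing to be careful about is that $u$ itself satisfies the conjunction $C^*$, which follows from the second bullet of \Cref{assumption: U R are good}, ensuring that $\alpha$ and $u$ lie in the same side of Case 2 so that \Cref{thm: subcubes look the same} applies. Combining both cases, $\Gamma(\alpha) = C^*(\alpha)$ for every $\alpha \in \{0,1\}^U$, so $\Gamma = C^*$ is a conjunction.
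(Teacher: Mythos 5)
Your proof is correct and follows essentially the same approach as the paper: split on whether $C^*(\alpha)=0$ or $C^*(\alpha)=1$, with the first case trivial and the second case combining \Cref{thm: subcubes look the same} with the observation that $\Pr_{\bw}[f(u\circ\bw)=1]\geq 1/2$. One minor slip: the fact that $u$ satisfies $C^*$ comes from the \emph{first} bullet of \Cref{assumption: U R are good} (i.e.~$u_{i_j}=1-b_j$ for $i_j\in\mathcal{H}_C$), not the second, which concerns coordinates of the tail lying in $R$.
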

\begin{proof}
If $C^*(\alpha)=0$, then  
  $f(\alpha\circ w)=0$ for all $w\in \{0,1\}^R$ and by definition  $\Gamma(\alpha)=0$.

On the other hand, if $C^*(\alpha)=1$, it follows from
  \Cref{assumption: U R are good} and \Cref{thm: subcubes look the same} that 
$$
\Prx_{\bw\in \{0,1\}^R}
\big[f(\alpha\circ \bw)=1\big]\ge 
\Prx_{\bw\in \{0,1\}^R}
\big[f(u\circ \bw)=1\big]-\left(\frac{\eps}{c_1}\right)^3
\ge \frac{1}{2}-\left(\frac{\eps}{c_1}\right)^3\ge \frac{1}{16}
$$
when $c_1$ is sufficiently large and thus, $\Gamma(\alpha)=1$.
\end{proof}

By \Cref{thm: gamma=C}, if we were to run 
\hyperlink{Algorithm2}{\sc Conj-Test}
on $\Gamma$ with the correct membership and sampling oracles, we know it would always accept (recall that \hyperlink{Algorithm2}{\sc Conj-Test} has one-sided error). However,~we need to account for the fact we are only able to simulate access to oracles $\SAMP(\Gamma)$ and $\MQ(\Gamma)$ using \Cref{algo:gamma_sampler}  and \Cref{algo:gamma_query}, respectively.  

\begin{lemma}\label{lem: f is DL all queries are good.}
    Suppose that $f$ reaches line $8$ of \Cref{algo: Decision list tester} under \Cref{assumption: U R are good}. 
     Then  \hyperlink{Algorithm2}{\sc Conj-Test}  accepts with probability at least $0.9$.
\end{lemma}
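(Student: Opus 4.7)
The plan is to reduce the analysis to \Cref{thm: accept weaker assumption}, which says that when \hyperlink{Algorithm2}{\sc Conj-Test} is run on a conjunction using a (possibly faulty) sampling oracle whose distribution is supported on $\Gamma^{-1}(1)$, it accepts with probability 1. Since \Cref{thm: gamma=C} tells us $\Gamma = C^*$ is a conjunction, it suffices to verify two conditions: (a) every string returned by \Cref{algo:gamma_sampler} lies in $\Gamma^{-1}(1)$, and (b) with probability at least $0.9$ over all internal randomness, every call to \Cref{algo:gamma_query} made by \hyperlink{Algorithm2}{\sc Conj-Test} returns the correct value of $\Gamma$ on its input without triggering the ``halt and reject'' branch. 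If both hold, then the execution of \hyperlink{Algorithm2}{\sc Conj-Test} is identical in distribution to one with access to a faithful oracle pair $(\SAMP,\MQ)$ for $\Gamma$, so the conclusion follows.

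Condition (a) is immediate: if $\bz \sim \SAMP(f)$ then $f(\bz)=1$, hence $C(\bz)=1$, hence $C^*(\bz_{\bU})=1$, hence $\bGamma(\bz_{\bU})=1$ by \Cref{thm: gamma=C}. So \Cref{algo:gamma_sampler} deterministically outputs a point in $\Gamma^{-1}(1)$, and the support condition for \Cref{thm: accept weaker assumption} is met.

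For condition (b), fix any query $\alpha \in \{0,1\}^{\bU}$ made by \hyperlink{Algorithm2}{\sc Conj-Test}. If $\Gamma(\alpha)=C^*(\alpha)=0$, then $\alpha$ violates $C$, so $f(\alpha \circ w)=0$ for every $w$; \Cref{algo:gamma_query} then correctly returns $0$ and never rejects. If instead $\Gamma(\alpha)=C^*(\alpha)=1$, the proof of \Cref{thm: gamma=C} gives $\Pr_{\bw}[f(\alpha\circ \bw)=1] \geq 1/16$, while \Cref{thm: subcubes look the same} gives $\Pr_{\bw}[f(\alpha \circ \bw)\neq f(u\circ \bw)] \leq (\epsilon/c_1)^3$. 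Thus over the $c_2\log(1/\epsilon)$ samples drawn by \Cref{algo:gamma_query}, the probability that none of them hits $f(\alpha\circ \bw)=1$ (and the query wrongly returns $0$) is at most $(15/16)^{c_2\log(1/\epsilon)} = \epsilon^{\Omega(c_2)}$, while the probability that some $\bw$ causes a spurious rejection is at most $c_2\log(1/\epsilon)\cdot(\epsilon/c_1)^3$; both bounds are far smaller than $\epsilon$ once $c_2$ is large and $c_1$ is sufficiently larger than $c_2$.

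The main (and essentially only) obstacle is a quantitative one: \hyperlink{Algorithm2}{\sc Conj-Test} issues $O(1/\epsilon)$ queries (at most $c_0/\epsilon$ of them), so we must union-bound the per-query failure probability over this many queries. With the choices of $c_1,c_2$ fixed in \Cref{sec:queryandnonadaptivity}, each query fails with probability at most $O(\epsilon^{10})$ say, so the total failure probability is at most $0.1$. Conditioning on the complementary good event, \Cref{algo:gamma_query} acts as a faithful $\MQ(\Gamma)$ oracle and \Cref{algo:gamma_sampler} is a sampling oracle supported on $\Gamma^{-1}(1)$; applying \Cref{thm: accept weaker assumption} to the conjunction $\Gamma$ then yields that \hyperlink{Algorithm2}{\sc Conj-Test} accepts, giving overall acceptance probability at least $0.9$ as claimed.
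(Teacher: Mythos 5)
Your proposal is correct and follows essentially the same route as the paper: show (a) that \Cref{algo:gamma_sampler} always outputs points in $\Gamma^{-1}(1)$ via $\Gamma = C^*$, and (b) that with probability at least $0.9$ all $c_0/\epsilon$ membership queries are answered faithfully by \Cref{algo:gamma_query} (using the case split on $\Gamma(\alpha)\in\{0,1\}$, the definition of $\Gamma$, and \Cref{thm: subcubes look the same}), then invoke \Cref{thm: accept weaker assumption}. The only cosmetic difference is that you use the definitional $\Pr_{\bw}[f(\alpha\circ\bw)=1]\ge 1/16$ when $\Gamma(\alpha)=1$, while the paper directly argues $\ge 1/2$ from $C^*(\alpha)=1$ and $i_p\in R$; both yield bounds comfortably small enough for the union bound over $c_0/\epsilon$ queries.
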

\begin{proof}
By \Cref{thm: gamma=C}, all samples \hyperlink{Algorithm2}{\sc Conj-Test} receives using \Cref{algo:gamma_sampler} are from $\Gamma^{-1}(1)$ because they are from $\smash{{C^*}^{-1}(1)}$ and $\Gamma=C^*$.
We prove below that with probability at least ${9/10}$,
  all the {$c_0/\eps$}
  membership queries made by 
  \hyperlink{Algorithm2}{\sc Conj-Test} are correctly answered by \Cref{algo:gamma_query}.
When this happens, it follows directly from 
  \Cref{thm: accept weaker assumption} that
  \hyperlink{Algorithm2}{\sc Conj-Test} accepts and the lemma follows.
For this purpose, it suffices to show that given any $\alpha \in \{0,1\}^{U}$, the probability that \Cref{algo:gamma_query}   
  rejects or returns the wrong value for $\Gamma(\alpha)$ is at most $ {{\epsilon}/({10c_0})}$.
  The lemma then follows by a union bound over all ${c_0/\eps}$ %
  membership queries made by \hyperlink{Algorithm2}{\sc Conj-Test}.
  
 If $\Gamma(\alpha)=0$, then it follows from \Cref{thm: gamma=C}
 that   we have $f(\alpha\circ w)=0$ for all $w\in \{0,1\}^R$ and thus   \Cref{algo:gamma_query}  always returns $0$.
 
    If $\Gamma(\alpha)=1$, then $C^*(\alpha)=1$ and thus, the probability of $\bw\sim \{0,1\}^R$ having $f(\alpha\circ \bw)=1$ is at least $1/2$.
    So the probability of \Cref{algo:gamma_query} returning $0$ is at most $2^{-{ c_2\log(1/\epsilon)}} \leq {\epsilon/(20c_0)}$ given that $c_2$ is sufficiently larger than $c_0$.
    On the other hand,
   by \Cref{thm: subcubes look the same} we have 
    $$\Prx_{\bw \sim \{0,1\}^n}\big[f(\alpha \circ \bw) \neq f(u \circ \bw)\big] \leq
    \left(\frac{\eps}{ c_1}\right)^3.
    $$
By a union bound over all %
$ 
c_2\log (1/\eps)
$ 
samples $\bw$, the probability that \Cref{algo:gamma_query} rejects is at most $\eps/(20c_0)$ given that $c_1$ is sufficiently larger than $c_0$ and $c_2$.
So the probability of either rejecting or returning $0$ is at most ${\eps/(10c_0)}$.
This finishes the proof of the lemma.%
\end{proof}

Finally, we prove \Cref{thm: when f is a DL}:
\begin{proof}
    Let $f:\zo^n \to \zo$ be a decision list. If Case 1 holds, then by \Cref{lem: f is a DL case 1} the algorithm accepts $f$ with probability at least $0.9$. Otherwise, $f$ falls in Case 2. %
By \Cref{lem: U and R are good} and \Cref{lem: H is good},
  with probability at least $0.9$, the algorithm either accepts
  $f$ in line $1$ (in which case we are done) or reaches line 4 with $U,R$ and $u$
  that satisfy \Cref{assumption: U R are good}. 

Under \Cref{assumption: U R are good}, by \Cref{lem: f is DL 6-7} $f$ gets rejected in lines {3--6}
with probability at most $0.1$,
and gets rejected in line {7}
by \hyperlink{Algorithm2}{\sc Conj-Test} (including the case when it is rejected
  by \Cref{algo:gamma_query}) with probability at most $0.1$.
    So altogether,   $f$ is rejected with probability at most $0.3$. %
\end{proof}

\subsection{Soundness: 
 Correctness of \Cref{algo: Decision list tester} when $f$ is far from decision lists}\label{sec:soundness}

Let $f:\{0,1\}^n\rightarrow \{0,1\}$ be a function that is $\eps$-far from
  decision lists in relative distance.
Given~that conjunctions are special cases of 
  decision lists,
  by \Cref{thm:conjunction}, %
    \hyperlink{Algorithm2}{\sc Conj-Test} 
    on line 1 rejects with probability at least $0.9$, %
    in which case the algorithm continues 
    and reaches line 3.
Let $\bU=U$, $\bR=R$ and $\bu=u$ when it reaches line 3. After $U,R$ and $u$
  are fixed, the function  
  $\bGamma=\Gamma$ is also fixed.

If $f$ gets rejected on line 3 or 4 we are done;
  the next lemma gives a necessary condition
  for $f$ to get through lines 3--4 without 
  being rejected {with high probability}:

\begin{lemma}\label{thm: large premimage fu}
Assume that $f$ reaches line 3 with $U,R$ and $u$. 
If either 
\begin{equation}\label{eq:hehe1}\left|f{\upharpoonleft_{u}}^{-1}(1)\right|<\frac{2^{|R|}}{8}
\end{equation}
or $\reldist(f{\upharpoonleft_{u}},\textsf{DL})\ge \eps/10$, %
 then $f$ gets rejected during lines 3--4 with probability at least $0.9$.
\end{lemma}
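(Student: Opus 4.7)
The plan is to split the analysis into two cases based on the density $p := |f{\upharpoonleft_{u}}^{-1}(1)|/2^{|R|}$ of the restricted function, aligning each rejection condition of the lemma with one of the two tests in lines 3 and 4. Intuitively, sparsity of $f{\upharpoonleft_{u}}$ is caught by the simple sampling check in line 3, while being far (in standard Hamming distance) from every decision list is caught by the Bshouty tester in line 4.

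First I would handle the hypothesis $|f{\upharpoonleft_{u}}^{-1}(1)| < 2^{|R|}/8$, which is just $p < 1/8$. Then each of the $c_2$ independent queries $f(u \circ \bw)$ performed in line 3 returns $1$ with probability strictly less than $1/8$. By a Hoeffding bound, the probability that the empirical fraction exceeds $1/4$ is at most $\exp(-2 c_2 (1/8)^2) = \exp(-c_2/32)$, which is at most $0.1$ since $c_2$ is chosen sufficiently large at the start of \Cref{sec:queryandnonadaptivity}. Hence line 3 rejects with probability at least $0.9$, which disposes of \Cref{eq:hehe1}.

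For the second disjunct, suppose $\reldist(f{\upharpoonleft_{u}}, \textsf{DL}) \geq \eps/10$. If $p < 1/8$ the previous step already applies, so I may assume $p \geq 1/8$. The key observation is the identity $\dist(f{\upharpoonleft_{u}}, g) = p \cdot \reldist(f{\upharpoonleft_{u}}, g)$ for any $g \colon \{0,1\}^R \to \{0,1\}$; taking the infimum over $g \in \textsf{DL}$ gives $\dist(f{\upharpoonleft_{u}}, \textsf{DL}) \geq (1/8)(\eps/10) = \eps/80 > \eps/100$. Since query access to $f{\upharpoonleft_{u}}$ is simulated on line 4 by querying $f$ at points of the form $u \circ w$, the Bshouty tester invoked there with distance parameter $\eps/100$ rejects with probability at least $19/20 \geq 0.9$ by \Cref{thm:Bshouty20}. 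Combining the two cases, whenever either hypothesis of the lemma holds $f$ is rejected in lines 3--4 with probability at least $0.9$.

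I do not anticipate any real obstacle beyond the clean case split. The only thing to verify quantitatively is that the constants ($1/8$ as the implicit density cutoff coming from the $1/4$ threshold in line 3, $\eps/10$ as the relative-distance hypothesis, and $\eps/100$ as the Bshouty parameter) are arranged with enough slack for the implication $\reldist \geq \eps/10 \;\Rightarrow\; \dist \geq \eps/80 \geq \eps/100$ to go through whenever $p \geq 1/8$, and they are.
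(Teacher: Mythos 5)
Your proposal is correct and follows essentially the same case split as the paper's proof: the sparse case $p < 1/8$ is caught by the line-3 sampling check via a concentration bound (the paper writes "given the $1/4$ vs $1/8$ gap" where you spell out Hoeffding), and the $p \ge 1/8$ case uses the same conversion $\dist = p \cdot \reldist \ge \eps/80 > \eps/100$ to invoke the Bshouty tester on line 4. The quantitative details and constants match.
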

\begin{proof}
If \Cref{eq:hehe1} holds then the probability of $f(u\circ \bw)=1$ when $\bw\sim \{0,1\}^R$ is at most $1/8$.
Given the $1/4$ vs $1/8$ gap and that $c_2$ is sufficiently large,
    \Cref{algo: Decision list tester} rejects on line 3 with probability at least $0.9$.

Now assume that \Cref{eq:hehe1} does not hold but $\reldist(f{\upharpoonleft_{u}},\textsf{DL})\ge \eps/10$.
We show that these two conditions together imply 
  $\dist(f{\upharpoonleft_{u}},\textsf{DL})\ge \eps/80$ and thus, by \Cref{thm:Bshouty20}
  the algorithm rejects on line 4 with probability
  at least $0.9$.
To lower bound $\dist(f{\upharpoonleft_{u}},\textsf{DL})$, we use
    $${\frac \eps {10}} \le \reldist(f{\upharpoonleft_{ u}},\textsf{DL})= %
     \dist(f{\upharpoonleft_{ u}},\textsf{DL})\cdot \frac{2^{|R|}}{\big|f{\upharpoonleft_{ u}}^{-1}(1)\big|}\leq 8\cdot \dist(f{\upharpoonleft_{ u}},\textsf{DL}).$$
This finishes the proof of the lemma.
\end{proof}

Now we look at lines 5--6. 
The next lemma gives a necessary condition
  for $f$ not getting rejected on line {6}
with high probability.
To this end, we introduce a new function
$g:\{0,1\}^n\rightarrow \{0,1\}$:
$$
g(x)=\Gamma(x_U) {\wedge} f(u\circ x_R).
$$  
By definition, for each $\alpha\in \{0,1\}^U$, $g{\upharpoonleft_{\alpha}}$ is all-$0$ if $\Gamma(\alpha)=0$, and $g{\upharpoonleft_{\alpha}}=f{\upharpoonleft_{ u}}$ if $\Gamma(\alpha)=1$.
\begin{lemma}\label{thm: f close to g (DL case)}
Assume that $f$ reaches line 4 with $U,R$ and $u$ and satisfies 
  $ |f{\upharpoonleft_{u}}^{-1}(1)|\ge 2^{|R|}/8.$
If we have $\reldist(f,g) \geq \eps/10$,
  then $f$ gets rejected on line 6 with probability at least $0.9$.
\end{lemma}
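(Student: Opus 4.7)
The plan is to lower bound, by a constant times $\eps$, the probability that a \emph{single} iteration of the line-6 test triggers, and then conclude via a Chernoff/union argument over the $c_2/\eps$ iterations. For $\bz \sim f^{-1}(1)$ and $\bw \sim \{0,1\}^R$, write $\alpha = \bz_U$ and observe that the rejection probability in one iteration is
\[
q \;=\; \sum_{\alpha \in \{0,1\}^U} \pi_\alpha\, p_\alpha,
\qquad \text{where } \pi_\alpha := \frac{|f{\upharpoonleft_\alpha}^{-1}(1)|}{N},\quad p_\alpha := \Pr_{\bw \sim \{0,1\}^R}\!\big[f(\alpha \circ \bw) \neq f(u \circ \bw)\big],
\]
and $N=|f^{-1}(1)|$.

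Next I would unpack $|f^{-1}(1) \triangle g^{-1}(1)|$ by splitting on $\Gamma(\alpha)$. For $x = \alpha \circ \beta$, the definition of $g$ gives $g(x)=0$ when $\Gamma(\alpha)=0$ and $g(x)=f(u\circ \beta)$ when $\Gamma(\alpha)=1$. Hence
\[
|f^{-1}(1) \triangle g^{-1}(1)| \;=\; A + B,
\quad A := \!\!\sum_{\alpha:\,\Gamma(\alpha)=0}\!\! |f{\upharpoonleft_\alpha}^{-1}(1)|,
\quad B := \!\!\sum_{\alpha:\,\Gamma(\alpha)=1}\!\! p_\alpha \cdot 2^{|R|},
\]
and the hypothesis $\reldist(f,g) \ge \eps/10$ becomes $A + B \ge \eps N/10$.

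The crux is to lower bound both contributions of $q$ by $A/(16N)$ and $B/(16N)$. For $\alpha$ with $\Gamma(\alpha)=1$ we have $|f{\upharpoonleft_\alpha}^{-1}(1)| \ge 2^{|R|}/16$ by definition of $\Gamma$, so $\pi_\alpha p_\alpha \ge p_\alpha \cdot 2^{|R|}/(16N)$; summing gives $\sum_{\alpha:\,\Gamma(\alpha)=1} \pi_\alpha p_\alpha \ge B/(16N)$. For $\alpha$ with $\Gamma(\alpha)=0$ we have $|f{\upharpoonleft_\alpha}^{-1}(1)| < 2^{|R|}/16$, while by hypothesis $|f{\upharpoonleft_u}^{-1}(1)| \ge 2^{|R|}/8$; the inequality $|X \triangle Y| \ge \big||X|-|Y|\big|$ then yields $p_\alpha \ge 1/16$, so $\pi_\alpha p_\alpha \ge |f{\upharpoonleft_\alpha}^{-1}(1)|/(16N)$ and summing gives $\sum_{\alpha:\,\Gamma(\alpha)=0} \pi_\alpha p_\alpha \ge A/(16N)$. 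Adding the two bounds,
\[
q \;\ge\; \frac{A+B}{16N} \;\ge\; \frac{\eps}{160}.
\]

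Finally, since line 6 is repeated $c_2/\eps$ times independently, the probability that none of the iterations rejects is at most $(1-\eps/160)^{c_2/\eps} \le e^{-c_2/160}$, which is at most $0.1$ for $c_2$ sufficiently large. The main conceptual obstacle is the two-sided analysis of $p_\alpha$ across the $\Gamma(\alpha)=0$ versus $\Gamma(\alpha)=1$ regimes; the trick of using $|X\triangle Y|\ge ||X|-|Y||$ combined with the density hypothesis $|f{\upharpoonleft_u}^{-1}(1)| \ge 2^{|R|}/8$ handles the $\Gamma(\alpha)=0$ side cleanly, while the threshold in the definition of $\Gamma$ handles the other side.
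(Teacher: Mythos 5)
Your argument is correct and is essentially the same as the paper's: your $A$ and $B$ correspond exactly to the paper's partition of $f^{-1}(1) \triangle g^{-1}(1)$ into the sets $P$ (where $\Gamma=0$) and $Q$ (where $\Gamma=1$), and both proofs rest on the identical two observations, namely that $\Gamma(\alpha)=0$ combined with the density of $f{\upharpoonleft_u}$ forces $p_\alpha \ge 1/16$, while $\Gamma(\alpha)=1$ forces $\pi_\alpha \ge 2^{|R|}/(16N)$. The only cosmetic difference is that you bound both contributions at once to get $q \ge (A+B)/(16N) \ge \eps/160$, whereas the paper splits into cases on whichever of $|P|,|Q|$ is at least $\eps N/20$ and lands on $\eps/320$.
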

\begin{proof}
Assume that $\reldist(f,g)\ge \eps/10$. 
We partition $f^{-1}(1)\hspace{0.06cm}\triangle\hspace{0.06cm} g^{-1}(1)$ into
\begin{align*}
P&:=%
\big\{y\in \{0,1\}^n: \Gamma(y_U)=0\ \text{and}\ f(y)=1\big\}%
\quad\text{and}\\[1ex]
Q&:=\big\{y\in \{0,1\}^n: \Gamma(y_U)=1\ \text{and}\ 
f(y)\ne g(y)\big\}.
\end{align*}
So we either have $|P|\ge (\eps/20)\cdot |f^{-1}(1)|$ or $|Q|\ge (\eps/20)\cdot |f^{-1}(1)|.$

First consider the case when $P$ is large.
Each round of line 6 rejects with probability at least
$$
 \frac{\eps}{20} \cdot \frac{1}{16},
$$
where $\eps/20$ is the probability of $\bz$ landing in $P$.
To see where $1/16$ comes from, note that whenever $z\in P$, we have $\Gamma( z_U) =0$ and thus $|f{\upharpoonleft_{z_U}}^{-1}(1)|\le 2^{|R|}/16$
  while $|f{\upharpoonleft_{u}}^{-1}(1)|\ge 2^{|R|}/8$ by assumption.
The volume difference between $f{\upharpoonleft_{ u}}$
  and $f{\upharpoonleft_{ z_U}}$ shows that $f(u\circ \bw)\ne f( z_U\circ \bw)$ with probability at least $1/16$.
By setting $c_2$ sufficiently large, line 6 rejects with probability at least $0.9$.

Next for the case when $Q$ is large,
the probability that each round of line {6}
rejects is at least
$$
\sum_{y\in Q}
 \frac{ |f{\upharpoonleft_{ y_U}}^{-1}(1)|}{|f^{-1}(1)|}\cdot \frac{1}{2^{|R|}}.
$$
{Here for each $y \in Q$, the term above is exactly the probability that 
  $\bz\sim \SAMP(f)$ and $\bw\sim \{0,1\}^R$ of line 6 together satisfy
  $\bz_U\circ \bw=y$, in which case line 6 rejects.}
The numerator is at least $2^{|R|}/16$ given that  $\Gamma(z_U)=1$.
Plugging it in, each round of line {6} 
rejects with probability at least 
$$
|Q| \cdot \frac{2^{|R|}/16}{|f^{-1}(1)|}\cdot \frac{1}{2^{|R|}}\ge 
\frac{\eps}{20}\cdot |f^{-1}(1)|\cdot \frac{2^{|R|}/16}{|f^{-1}(1)|}\cdot \frac{1}{2^{|R|}}\ge \frac{\eps}{320}.
$$
We again finish the proof by setting the constant
  $c_2$ to be sufficiently large.
\end{proof}

Finally, we look at Step 4 of the tester. 
Given the lemmas established so far, we assume that $f$ reaches line 7 with $U,R$ and $u$ (as well as $g$ and $\Gamma$) that satisfy the following assumption:

\begin{assumption}\label{assumption: fu dense + close to g}
$f$ reaches line 7 with $U,R,u$ and $g$ that satisfy the following conditions:
    \begin{itemize}
    \item $\reldist(f,\textsf{DL})\ge \eps$;
        \item $\big|f{\upharpoonleft_{ u}}^{-1}(1)\big| \geq 2^{| R|}\big/8 $;
        \item $\reldist(f{\upharpoonleft_{u}},\textsf{DL})\le \eps\big/10 $; and
        \item $\reldist(f, g)\leq \epsilon\big/10$.
    \end{itemize}
\end{assumption}

We need to show that under \Cref{assumption: fu dense + close to g}, 
\hyperlink{Algorithm2}{\sc Conj-Test} on line 7
  rejects with high probability. 
We proceed in two steps.
\begin{flushleft}\begin{enumerate}
\item First we show in \Cref{thm: f would be close to DL} that if $f$ satisfies \Cref{assumption: fu dense + close to g},
  then $\Gamma$ must be $(\eps/100)$-far from conjunctions in
  relative distance. 
\item Next we show in \Cref{thm: gamma is rejected whp} that \hyperlink{Algorithm2}{\sc Conj-Test} on line 7 rejects with high probability. Note that this would follow trivially if \hyperlink{Algorithm2}{\sc Conj-Test} were given access to sampling and membership oracles of $\Gamma$.
However, \hyperlink{Algorithm2}{\sc Conj-Test} is only given access simulated by \Cref{algo:gamma_query} and \Cref{algo:gamma_sampler},
  respectively. 
So we analyze the their performance guarantees in \Cref{thm: no query mistake when far}
  and \Cref{thm: samples are almost uniform}, respectively, and then apply \Cref{thm: reject weaker assumption} to prove \Cref{thm: gamma is rejected whp}.
\end{enumerate}\end{flushleft}

We start with the first step:
\begin{lemma}\label{thm: f would be close to DL}
Under 
  \Cref{assumption: fu dense + close to g}, $\Gamma$
  is $(\eps/100)$-far from conjunctions in relative distance.
\end{lemma}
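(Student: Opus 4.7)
The plan is to argue by contrapositive: we will show that if $\Gamma$ is $(\eps/100)$-close in relative distance to some conjunction, then $f$ is within relative distance $\eps$ of a decision list, contradicting the first bullet of \Cref{assumption: fu dense + close to g}.

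Suppose, for contradiction, that there exists a conjunction $C^\star : \{0,1\}^U \to \{0,1\}$ with $\reldist(\Gamma, C^\star)\le \eps/100$. By the third bullet of \Cref{assumption: fu dense + close to g}, there is also a decision list $L^\star : \{0,1\}^R \to \{0,1\}$ with $\reldist(f{\upharpoonleft_{u}}, L^\star)\le \eps/10$. I would then define the function
\[
h(x)\ :=\ C^\star(x_U)\ \wedge\ L^\star(x_R),
\]
and observe that $h$ is a decision list on $\{0,1\}^n$: one first lays down the rules coming from the conjunction $C^\star$ (each literal producing a rule ``if $x_i \ne c_i$ then output $0$''), and then appends the decision list $L^\star$ as the tail.

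The main step is to bound $\reldist(g,h)$. I would split the pairs $(\alpha,w)\in \{0,1\}^U\times\{0,1\}^R$ into cases according to the values of $\Gamma(\alpha)$ and $C^\star(\alpha)$. When both are $0$, $g$ and $h$ agree (both zero); when both are $1$, $g(\alpha,w)\ne h(\alpha,w)$ iff $f{\upharpoonleft_{u}}(w)\ne L^\star(w)$; when $\Gamma(\alpha)\ne C^\star(\alpha)$, the contribution is at most $|f{\upharpoonleft_{u}}^{-1}(1)|+|{L^\star}^{-1}(1)|$ per such $\alpha$. Summing and using $|{L^\star}^{-1}(1)|\le (1+\eps/10)|f{\upharpoonleft_{u}}^{-1}(1)|$ together with $|g^{-1}(1)|=|\Gamma^{-1}(1)|\cdot|f{\upharpoonleft_{u}}^{-1}(1)|$ and $\reldist(\Gamma,C^\star)\le \eps/100$, a routine calculation shows
\[
\reldist(g,h)\ \le\ \tfrac{\eps}{10}\ +\ \tfrac{\eps}{100}\ +\ \tfrac{\eps}{100}\bigl(1+\tfrac{\eps}{10}\bigr).
\]

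Finally, combining this with the fourth bullet $\reldist(f,g)\le \eps/10$ via the approximate triangle inequality (\Cref{thm: approx triangle ineq}) yields $\reldist(f,h) < \eps$ for all sufficiently small $\eps$, contradicting $\reldist(f,\textsf{DL})\ge \eps$. The main obstacle is setting up the case analysis for $\reldist(g,h)$ cleanly so that the factor $|f{\upharpoonleft_{u}}^{-1}(1)|$ in the numerator of $|g^{-1}(1)\triangle h^{-1}(1)|$ cancels against the same factor appearing in $|g^{-1}(1)|$ in the denominator; once this is done, the approximate triangle inequality closes the argument.
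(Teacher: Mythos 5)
Your proposal is correct and follows essentially the same strategy as the paper: argue by contradiction, build the candidate decision list $C^\star(x_U)\wedge L^\star(x_R)$ from a conjunction close to $\Gamma$ and a decision list close to $f{\upharpoonleft_u}$, and close with the approximate triangle inequality. The only cosmetic difference is that the paper introduces an intermediate function $\zeta(z)=\Gamma(z_U)\wedge L(z_R)$ and bounds $\reldist(g,\zeta)\le\eps/10$ and $\reldist(\zeta,\lambda)\le\eps/100$ separately (then chains the approximate triangle inequality twice), whereas you bound $\reldist(g,h)$ in one step by a four-way case split on $(\Gamma(\alpha),C^\star(\alpha))$; both yield the same conclusion.
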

\begin{proof}
Assume for a contradiction that $\reldist(\Gamma,C)\le \eps/100$
  for some conjunction $C$ over $\{0,1\}^U$.
We prove in the rest of the proof that $\reldist(f,\textsf{DL})< \eps$, a contradiction.

    The proof proceeds in two steps, where we define 
      two Boolean functions $\zeta$ and $\lambda$ over $\{0,1\}^n$.
    In particular, the last function $\lambda$ will be a decision list.
    We prove the following inequalities: 
$$  \reldist(g,\zeta)\le \eps/10 \quad\text{and}\quad\reldist(\zeta,\lambda)\le \eps/100 .$$
Combining with $\reldist(f,g)\le \eps/10 $ from
  \Cref{assumption: fu dense + close to g},  it follows from  
  the approximate triangle inequality for relative distance (\Cref{thm: approx triangle ineq}) %
  that  $\reldist(f,\lambda)<\eps$, a contradiction.

First, given $\reldist(f{\upharpoonleft_{u}},\textsf{DL})\le \eps/10 $ from \Cref{assumption: fu dense + close to g}, we let $L$ be a decision list over $\{0,1\}^R$ that satisfies $\reldist(f{\upharpoonleft_{u}},L)\le \eps/10 $.
Then our first function
$\zeta$ over $\{0,1\}^n$ is defined as %
    $$\zeta(z) :=\Gamma(z_U){\wedge} L(z_{ R}).$$    
Observe that for any $\alpha \in \{0,1\}^{U}$, if $\Gamma(\alpha)=0$ then $g{\upharpoonleft_\alpha}=\zeta{\upharpoonleft_\alpha}=0$. And if $\Gamma(\alpha)=1$, then $g{\upharpoonleft_\alpha}=f{\upharpoonleft_{ u}}$ while $\zeta{\upharpoonleft_\alpha}=L$, in which case, ${\reldist(g{\upharpoonleft_\alpha}, \zeta{\upharpoonleft_\alpha} )}=\reldist(f{\upharpoonleft_{ u}}, L) \leq \epsilon/10 $. We thus have that:
     \begin{align*}
        \reldist(g, \zeta)&=\frac{1}{|g^{-1}(1)|} \sum_{\alpha: \Gamma(\alpha)=1} \left| \big\{w \in \{0,1\}^R:g(\alpha\circ w) \neq \zeta(\alpha\circ w)\big\}\right| \\
        &= \frac{1}{|g^{-1}(1)|} \sum_{\alpha:\Gamma(\alpha)=1} \reldist(g{\upharpoonleft_\alpha}, \zeta{\upharpoonleft_\alpha}) \cdot \big|g{\upharpoonleft_\alpha^{-1}}(1)\big| \\
        &\leq \frac{\eps}{10 }\cdot \frac{1}{|g^{-1}(1)|} \sum_{\alpha:\Gamma(\alpha)=1} \big|g{\upharpoonleft_\alpha^{-1}}(1)\big|.
    \end{align*}
Since $|g^{-1}(1)|=\sum_{\alpha }|g{\upharpoonleft_\alpha^{-1}}(1)|$, we get that $\reldist(g, \zeta) \leq \epsilon/10$.

Next recall that $C$ is a conjunction over $\{0,1\}^U$ that satisfies
  $\reldist(\Gamma,C)\le \eps/100$.
The last function $\lambda:\{0,1\}^n\rightarrow \{0,1\}$ is obtained  as follows:
    $$\lambda(z):=C(z_U) {\wedge} L(z_R)$$
    (i.e.~we are swapping out $\Gamma(z_U)$ for $C(z_U)$ in the definition of $\zeta$).
    In particular, %
    $\lambda$ is a decision list as promised earlier in the proof.
    Below we bound $\reldist(\zeta,\lambda)$.%

    Observe that for any $\alpha \in \{0,1\}^{ U}$, if $\Gamma(\alpha)=C(\alpha)$, then $\lambda{\upharpoonleft_\alpha}=\zeta{\upharpoonleft_\alpha}$. Otherwise, we have that~one of  $\lambda{\upharpoonleft_\alpha}$ or $\zeta{\upharpoonleft_\alpha}$ is $L$ while the other is the all-$0$ function. 
    So we have 
    \begin{align*}
        \reldist(\zeta, \lambda)&=\frac{1}{\left|\zeta^{-1}(1)\right|} \cdot \left|\big\{\alpha\in \{0,1\}^U: \Gamma(\alpha)\ne C(\alpha)\big\}\right|\cdot \big|L^{-1}(1)\big|.
    \end{align*}
On the other hand, $ 
         |\zeta^{-1}(1) |
         = | \Gamma^{-1}(1) |\times  |L^{-1}(1) |.
    $  
    Thus, we have
    \begin{align*}
          \reldist(\zeta, \lambda)
          = \frac{ | \{\alpha\in \{0,1\}^U: \Gamma(\alpha)\ne C(\alpha) \} |}{|\Gamma^{-1}(1)|}= \reldist(\Gamma,C)\le \frac{\epsilon}{100}.
    \end{align*}

This finishes the proof of the lemma.
\end{proof}

Next we analyze performance guarantees of \Cref{algo:gamma_sampler}
  and \Cref{algo:gamma_query} to simulate sampling and membership
  oracles of $\Gamma$, respectively:

\begin{lemma}\label{thm: samples are almost uniform}
    Assume that $f$ respects \Cref{assumption: fu dense + close to g}. 
    For any $\alpha\in \{0,1\}^U$ with $\alpha \in \Gamma^{-1}(1)$, we have
$$
\Pr\big[\text{\Cref{algo:gamma_sampler} returns }\alpha\big] \geq \frac{1}{20|\Gamma^{-1}(1)|}.%
$$
\end{lemma}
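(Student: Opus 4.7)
The plan is to directly lower-bound the return probability of \Cref{algo:gamma_sampler} by working out exactly what distribution it induces on $\{0,1\}^U$. Since \Cref{algo:gamma_sampler} draws $\bz\sim\SAMP(f)$ and returns $\bz_U$, for any fixed $\alpha\in\{0,1\}^U$ we have
\[
\Pr\big[\text{\Cref{algo:gamma_sampler} returns }\alpha\big]=\Prx_{\bz\sim f^{-1}(1)}\big[\bz_U=\alpha\big]=\frac{\big|f{\upharpoonleft_\alpha}^{-1}(1)\big|}{|f^{-1}(1)|}.
\]
So the task reduces to lower-bounding the numerator and upper-bounding the denominator.

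For the numerator I will simply unpack the definition of $\Gamma$: since $\alpha\in\Gamma^{-1}(1)$, the definition of $\Gamma$ in \Cref{algo: Decision list tester} gives
$\Pr_{\bw\sim\{0,1\}^R}[f(\alpha\circ\bw)=1]\ge 1/16$, and hence $|f{\upharpoonleft_\alpha}^{-1}(1)|\ge 2^{|R|}/16$.

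For the denominator I will use the two pieces of \Cref{assumption: fu dense + close to g}, namely $\reldist(f,g)\le\eps/10$ and the structural form $g(z)=\Gamma(z_U)\wedge f(u\circ z_R)$. The latter gives the exact identity $|g^{-1}(1)|=|\Gamma^{-1}(1)|\cdot|f{\upharpoonleft_u}^{-1}(1)|$, because for each $\beta\in\Gamma^{-1}(1)$ the restriction $g{\upharpoonleft_\beta}$ equals $f{\upharpoonleft_u}$, while $g{\upharpoonleft_\beta}\equiv 0$ otherwise. Combined with the trivial bound $|f{\upharpoonleft_u}^{-1}(1)|\le 2^{|R|}$ this yields $|g^{-1}(1)|\le|\Gamma^{-1}(1)|\cdot 2^{|R|}$. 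Finally, unwinding $\reldist(f,g)\le\eps/10$ gives $|g^{-1}(1)|\ge(1-\eps/10)|f^{-1}(1)|$, so
\[
|f^{-1}(1)|\le\frac{|\Gamma^{-1}(1)|\cdot 2^{|R|}}{1-\eps/10}.
\]

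Putting the two bounds together,
\[
\Prx_{\bz\sim f^{-1}(1)}\big[\bz_U=\alpha\big]\ge\frac{2^{|R|}/16}{|\Gamma^{-1}(1)|\cdot 2^{|R|}/(1-\eps/10)}=\frac{1-\eps/10}{16\,|\Gamma^{-1}(1)|}\ge\frac{1}{20\,|\Gamma^{-1}(1)|},
\]
where the final inequality holds for all sufficiently small $\eps$ (e.g.\ $\eps\le 1$, which we may assume since the testing problem is trivial otherwise). There is no real obstacle here — the only subtle point is keeping straight the direction of the relative-distance inequality (it is $|g^{-1}(1)|$ that we must lower-bound in terms of $|f^{-1}(1)|$, not the reverse), and being careful that the factor-$16$ slack from the definition of $\Gamma$ together with the $(1-\eps/10)$ slack from $\reldist(f,g)$ do beat the target constant $20$.
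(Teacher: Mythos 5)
Your proof is correct and follows essentially the same route as the paper: lower-bound the numerator $|f{\upharpoonleft_\alpha}^{-1}(1)|\ge 2^{|R|}/16$ from the definition of $\Gamma$, upper-bound $|f^{-1}(1)|$ via $\reldist(f,g)\le\eps/10$ together with $|g^{-1}(1)|\le|\Gamma^{-1}(1)|\cdot 2^{|R|}$, and combine. The only cosmetic difference is that you observe the exact identity $|g^{-1}(1)|=|\Gamma^{-1}(1)|\cdot|f{\upharpoonleft_u}^{-1}(1)|$ before relaxing it, whereas the paper states the relaxed bound directly.
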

\begin{proof}
Let $\alpha\in \{0,1\}^U$ with $\Gamma(\alpha)=1$.
By definition of $\Gamma$, we have
    \begin{equation}
        \Prx_{\bz \sim f^{-1}(1)}\big[\bz_U=\alpha\big] %
        \ge \frac{2^{|R|}/16}{|f^{-1}(1)|}.\label{eq:a}
    \end{equation}
    On the other hand, 
      we know that
      $$
      \frac{\eps}{10}\ge \reldist(f,g)\ge 
      \frac{|f^{-1}(1)|-|g^{-1}(1)|}{|f^{-1}(1)|}
      $$
      and $|g^{-1}(1)|\le 
  |\Gamma^{-1}(1)|\cdot 2^{|R|}$.
Combining these two we have
    \begin{equation}
    |f^{-1}(1)|\le \frac{|g^{-1}(1)|}{1-(\eps/10)}\le  \frac{|\Gamma^{-1}(1)|\cdot 2^{|R|}}{1-(\eps/10)} \le 1.2\cdot |\Gamma^{-1}(1)|\cdot 2^{|R|}.\label{eq:b}
    \end{equation}
The lemma follows by combining \Cref{eq:a} and \Cref{eq:b}.
\end{proof}

\begin{lemma}\label{thm: no query mistake when far}
    Assume $f$ respects \Cref{assumption: fu dense + close to g}. For any $\alpha\in \{0,1\}^U$, 
    \Cref{algo:gamma_query} run on $\alpha$ satisfies
    $$
    \Pr\big[\text{\Cref{algo:gamma_query} does not reject on line {2}
     and  returns $1-\Gamma(\alpha)$}\big]
    \le \frac{\epsilon}{40c_0}.$$
\end{lemma}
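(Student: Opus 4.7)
The plan is to split the analysis by the value of $\Gamma(\alpha) \in \{0,1\}$, and in each case to show that the bad event is contained in an event that requires all $M := c_2\log(1/\eps)$ independent samples $\bw$ to exhibit a ``coincidence'' that individually occurs with probability at most $15/16$. Pushing $c_2$ sufficiently large then drives $(15/16)^M$ below $\eps/(40c_0)$, as required. Note first that by \Cref{assumption: fu dense + close to g} we have $\Pr_{\bw\sim\{0,1\}^R}[f(u\circ \bw)=1]\ge 1/8$, which we will use in the $\Gamma(\alpha)=0$ case.

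First consider $\Gamma(\alpha)=1$. By the definition of $\Gamma$ we have $\Pr_{\bw}[f(\alpha\circ\bw)=1]\ge 1/16$. For \Cref{algo:gamma_query} to return $1-\Gamma(\alpha)=0$, every one of the $M$ samples must satisfy $f(\alpha\circ\bw)=0$, which occurs with probability at most $(15/16)^M$. Choosing $c_2$ sufficiently large (in particular sufficiently larger than $c_0$) drives this below $\eps/(40c_0)$.

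Now consider $\Gamma(\alpha)=0$. Here $\Pr_{\bw}[f(\alpha\circ\bw)=1]<1/16$, while $\Pr_{\bw}[f(u\circ\bw)=1]\ge 1/8$ by the density clause of \Cref{assumption: fu dense + close to g}. Therefore for a single random $\bw$,
\[
\Prx_{\bw}\big[f(\alpha\circ\bw)\ne f(u\circ\bw)\big]\ \ge\ \Prx_{\bw}\big[f(u\circ\bw)=1,\ f(\alpha\circ\bw)=0\big]\ \ge\ \frac{1}{8}-\frac{1}{16}\ =\ \frac{1}{16}.
\]
For \Cref{algo:gamma_query} not to reject on line 2, all $M$ samples must satisfy $f(\alpha\circ\bw)=f(u\circ\bw)$, which happens with probability at most $(15/16)^M$. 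Since the event ``algo does not reject and returns $1-\Gamma(\alpha)=1$'' is contained in ``algo does not reject on line 2'', the same $(15/16)^M$ bound applies, and again choosing $c_2$ sufficiently large drives this below $\eps/(40c_0)$.

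The key (and only nontrivial) step is the $\Gamma(\alpha)=0$ case, where the naive bound $M\cdot \Pr[f(\alpha\circ\bw)=1]\le M/16$ is too weak; the fix is to bound the ``no reject'' event rather than the ``return 1'' event, using the density of $f{\upharpoonleft_{u}}$ to guarantee a constant per-sample disagreement probability. Given this, both cases yield the same $(15/16)^{c_2\log(1/\eps)}=\eps^{\Theta(c_2)}$ bound, which is at most $\eps/(40c_0)$ once $c_2$ is chosen sufficiently larger than $c_0$, completing the proof.
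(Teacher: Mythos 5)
Your proof is correct and follows essentially the same approach as the paper: split on $\Gamma(\alpha)$, use the $\ge 1/16$ per-sample probability of $f(\alpha\circ\bw)=1$ (from the definition of $\Gamma$) in the $\Gamma(\alpha)=1$ case, and the $\ge 1/16$ per-sample disagreement probability of $f(\alpha\circ\bw)\ne f(u\circ\bw)$ (from $\Gamma(\alpha)=0$ plus the density clause of \Cref{assumption: fu dense + close to g}) in the $\Gamma(\alpha)=0$ case, then push $c_2$ large. One small imprecision worth noting: your sentence ``for \Cref{algo:gamma_query} not to reject on line 2, all $M$ samples must satisfy $f(\alpha\circ\bw)=f(u\circ\bw)$'' is not literally true, since line 2 is never reached when line 1 returns $0$; the containment you actually need (and implicitly use) is that the event ``returns $1$ and does not reject'' forces all samples to agree, which is correct because returning $1$ requires reaching line 2.
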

\begin{proof}
If $\Gamma(\alpha)=0$, then by the definition of $\Gamma$ we have $|f{\upharpoonleft_\alpha}^{-1}(1)|\le 2^{|R|}/16.$ %
So {by the second bullet of \Cref{assumption: fu dense + close to g}, we have}
    $$\Prx_{\bw \sim \{0,1\}^{\bR}}\big[f(\alpha \circ \bw) \neq f(u \circ \bw)\big] \geq \frac{1}{16}.$$ 
The probability of \Cref{algo:gamma_query} 
  returning $1$ without rejecting $f$  is at most
$$
\left(\frac{15}{16}\right)^{c_2\log (1//\eps)}\le \frac{\eps}{40c_0}
$$
by making $c_2$ sufficiently large and also sufficiently larger than $c_0$.

If $\Gamma(\alpha)=1$, 
  the algorithm returns $0$ if all samples  $\bw$  had $f(\alpha\circ \bw)=0$.
By the definition of $\Gamma$, this happens with probability at 
  most $(15/16)^{c_2\log (1/\eps)}$.
\end{proof}

We are now ready to show that $f$
  is rejected on line 7 with high probability:
\begin{lemma}\label{thm: gamma is rejected whp}
    Assume $f$ respects \Cref{assumption: fu dense + close to g}. Then  \hyperlink{Algorithm2}{\sc Conj-Test}
    rejects on line $7$ of \Cref{algo: Decision list tester} with probability at least $0.85$.
\end{lemma}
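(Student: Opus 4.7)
The plan is to reduce to \Cref{thm: reject weaker assumption} by showing that both simulated oracles behave well enough. First, by \Cref{thm: f would be close to DL}, under \Cref{assumption: fu dense + close to g} the function $\bGamma$ is $(\eps/100)$-far from every conjunction in relative distance, so $\bGamma$ satisfies the distance hypothesis of \Cref{thm: reject weaker assumption} with parameter $\eps/100$. Write $F^\ast := \bGamma^{-1}(1)$; \Cref{thm: samples are almost uniform} says that the distribution underlying \Cref{algo:gamma_sampler} places mass at least $1/(20|F^\ast|)$ on every $\alpha \in F^\ast$, which is exactly the condition required of the faulty oracle $\SAMP^\ast$ in \Cref{thm: reject weaker assumption}.

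It remains to account for the fact that \Cref{algo:gamma_query} is only an approximate simulator of $\MQ(\bGamma)$. Let $\alpha_1,\ldots,\alpha_m$ (with $m \le c_0/\eps$ by the choice of constants) denote the membership queries made by \hyperlink{Algorithm2}{\sc Conj-Test} during its execution on line $7$. For each query, two things can go wrong: \Cref{algo:gamma_query} may halt and reject $f$ outright, which is actually \emph{desirable} since the outer algorithm then immediately rejects, or it may return the wrong value $1 - \bGamma(\alpha_i)$ without rejecting. By \Cref{thm: no query mistake when far} and a union bound over the $m \le c_0/\eps$ queries, the harmful ``silently wrong'' event occurs with probability at most $(c_0/\eps)\cdot \epsilon/(40c_0) = 1/40$.

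Finally I would condition on the complementary event $E$ that no query is silently wrong, which occurs with probability at least $39/40$. On $E$, every query to \Cref{algo:gamma_query} either correctly returns $\bGamma(\alpha_i)$ or halts and rejects $f$. If any query halts and rejects then the outer algorithm rejects on line $7$ and we are done. Otherwise every query answer coincides with the answer that would be produced by an exact $\MQ(\bGamma)$ oracle; combined with the sampler guarantee from the first paragraph, the conditions of \Cref{thm: reject weaker assumption} (applied with relative-error parameter $\eps/100$) are met, so \hyperlink{Algorithm2}{\sc Conj-Test} rejects with probability at least $0.9$. Combining, the total probability of rejection is at least $(39/40) \cdot 0.9 > 0.87 \ge 0.85$, as claimed. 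The main subtlety, and the only real obstacle, is the bookkeeping of the two failure modes of \Cref{algo:gamma_query}: one must notice that the halt-and-reject mode helps rather than hurts, so that only the silently-wrong mode needs to be controlled by a union bound.
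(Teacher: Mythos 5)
Your proof is correct and follows essentially the same route as the paper's: apply \Cref{thm: f would be close to DL} to get that $\bGamma$ is $(\eps/100)$-far from conjunctions, invoke \Cref{thm: samples are almost uniform} to verify the faulty-sampler hypothesis of \Cref{thm: reject weaker assumption}, union-bound via \Cref{thm: no query mistake when far} over the $c_0/\eps$ membership queries to bound the ``silently wrong'' failure probability by $1/40$ (observing, as the paper implicitly does, that a halt-and-reject in \Cref{algo:gamma_query} only helps), and combine to get $(39/40)\cdot 0.9\ge 0.85$. The one place where your phrasing is slightly looser than the paper's---conditioning on the no-silent-error event $E$ and then asserting the conditional rejection probability is still $\ge 0.9$---is a subtlety the paper also glosses over; a fully rigorous account would couple the real execution to an ideal execution with a correct $\MQ(\bGamma)$ oracle and bound $\Pr[\text{accept}]\le\Pr[\text{ideal accepts}]+\Pr[\overline{E}]$, but both write-ups reach the same bound.
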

\begin{proof}
By \Cref{thm: f would be close to DL}, $\Gamma$ is $(\eps/100)$-far from conjunctions in relative distance.
By \Cref{thm: samples are almost uniform}, every $\alpha \in \Gamma^{-1}(1)$ is returned by \Cref{algo:gamma_sampler}  with probability at least $1/(20|\Gamma^{-1}(1)|)$.
If all the $c_0/\eps$ many membership queries made by \hyperlink{Algorithm2}{\sc Conj-Test} are answered correctly, then
  by \Cref{thm: reject weaker assumption} %
  \hyperlink{Algorithm2}{\sc Conj-Test} rejects $\Gamma$ 
  with probability at least $0.9$.

Now consider an execution of \hyperlink{Algorithm2}{\sc Conj-Test}
  under the sampling oracle simulated by \Cref{algo:gamma_sampler} together with the correct membership query oracle,
  and the $c_0/\eps$ membership queries it makes.
If all these $c_0/\eps$ membership queries are sent to
  \Cref{algo:gamma_query}, then
by \Cref{thm: no query mistake when far} and a union bound, with
  probability at least $39/40$, either $f$ is rejected on line {2}
of \Cref{algo:gamma_query} or all $c_0/\eps$ membership queries are 
  answered correctly.
This shows that with probability at least $39/40$, either
  $f$ is rejected or \hyperlink{Algorithm2}{\sc Conj-Test} returns the same answer as it would with the correct membership oracle.
As a result, line 7 of \Cref{algo: Decision list tester} rejects with probability at least $0.9 \cdot (39/40)\ge 0.85$.  %
\end{proof}

Finally we can prove our result establishing soundness:

\begin{theorem} \label{thm:soundness}
    If $f$ is $\epsilon$-far from any decision list in relative distance, then \Cref{algo: Decision list tester} rejects it with probability at least $0.7$. 
\end{theorem}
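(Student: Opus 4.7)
I plan to upper-bound $\Pr[\text{\Cref{algo: Decision list tester} accepts } f]$ by $0.3$, reading off the three rejection guarantees already established in this section. Since every conjunction is a decision list, the hypothesis $\reldist(f,\textsf{DL})\ge\eps$ implies $\reldist(f,\textsf{Conj})\ge\eps$, so by \Cref{thm:conjunction} applied to the \hyperlink{Algorithm2}{\sc Conj-Test} call in Step 0, the algorithm halts-and-accepts in Step 0 with probability at most $0.1$.

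For the remaining contribution to the acceptance probability, condition on any particular outcome $(U,R,u)$ of Step 1; this also pins down the derived function $\Gamma$ and the function $g(x)=\Gamma(x_U)\land f(u\circ x_R)$ defined before \Cref{thm: f close to g (DL case)}. I then split into three sub-cases covering all possibilities. First, if $\bigl|f{\upharpoonleft_{u}}^{-1}(1)\bigr|<2^{|R|}/8$ or $\reldist(f{\upharpoonleft_u},\textsf{DL})\ge\eps/10$, then \Cref{thm: large premimage fu} says $f$ is rejected in lines 3--4 with probability at least $0.9$. Second, if those two quantitative conditions are fine but $\reldist(f,g)\ge\eps/10$, then \Cref{thm: f close to g (DL case)} says $f$ is rejected on line 6 with probability at least $0.9$. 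Third, if none of these failures occurs, then the first bullet of \Cref{assumption: fu dense + close to g} holds by our standing hypothesis on $f$ and the other three bullets hold precisely because the first two sub-cases do not apply; hence \Cref{assumption: fu dense + close to g} is satisfied in full and \Cref{thm: gamma is rejected whp} gives rejection on line 7 with probability at least $0.85$.

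Thus, conditional on any fixed $(U,R,u)$, the probability of passing Steps 2--4 is at most $\max(0.1,0.1,0.15)=0.15$. Averaging over the Step 1 randomness and adding the Step 0 contribution yields $\Pr[\text{accept}]\le 0.1+0.15=0.25\le 0.3$, i.e.~$\Pr[\text{reject}]\ge 0.7$. The bulk of the work has already been carried out in the preceding lemmas; in particular the main obstacle, \Cref{thm: gamma is rejected whp}, leans on the faulty-oracle robustness theorem \Cref{thm: reject weaker assumption} for \hyperlink{Algorithm2}{\sc Conj-Test} together with the simulation guarantees \Cref{thm: samples are almost uniform} and \Cref{thm: no query mistake when far}. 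The present theorem merely stitches these pieces together via the case analysis above.
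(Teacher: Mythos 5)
Your proof is correct and follows essentially the same structure as the paper's: bound the Step~0 acceptance probability using \Cref{thm:conjunction}, then split on whether \Cref{assumption: fu dense + close to g} holds for the realized $(U,R,u)$ and invoke \Cref{thm: large premimage fu}, \Cref{thm: f close to g (DL case)}, and \Cref{thm: gamma is rejected whp} accordingly. The only cosmetic difference is the final bookkeeping step, where you add the two failure contributions ($0.1 + 0.15 \le 0.3$) while the paper multiplies the two success probabilities ($0.9 \cdot 0.85 \ge 0.7$); both are valid and yield the claimed bound.
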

\begin{proof}
Given that $f$ is $\eps$-far from decision lists, line 1 accepts
  with probability at most $0.1$ by \Cref{thm:conjunction}.
When this does not happen, the algorithm reaches line {3}
and we consider two cases.
  
If $f$ does not satisfy \Cref{assumption: fu dense + close to g},
  then by \Cref{thm: large premimage fu} and \Cref{thm: f close to g (DL case)} it is rejected during lines 3--6 with probability at least $0.9$.
If $f$ satisfies \Cref{assumption: fu dense + close to g}, then it follows from \Cref{thm: gamma is rejected whp} that it is rejected on line 7 with probability at least $0.85.$

As a result, the algorithm rejects $f$ with probability
  at least $0.9 \cdot 0.85\ge 0.7.$
\end{proof}

\begin{flushleft}
\bibliographystyle{alpha}
\bibliography{allrefs}
\end{flushleft}

\appendix

\def\Dyes{\calD_{\textrm{yes}}}
\def\Dno{\calD_{\textrm{no}}}
\def\bff{\boldsymbol{f}}
\section{Lower Bounds}\label{appendix:lower}

We follow ideas of \cite{BshoutyGoldreich022}
  to give an $\Omega(1/\eps)$ lower bound for any two-sided adaptive relative-error testing algorithm for conjunctions and decision lists.

To this end we define the following two distributions $\Dyes$ and $\Dno$:
\begin{flushleft}\begin{enumerate}
    \item $\Dyes$ is supported on the constant-$1$ function only; and \item $\bff\sim\Dno$ is drawn by setting independently every entry $\bff(x)$ of $\bff$ to be $0$ with probability $3\eps$\\ and $1$ with probability $1-3\eps.$
\end{enumerate}\end{flushleft}
The constant-$1$ function is trivially both a conjunction and 
  a decision list.
On the other hand, the following lemma shows that, as long as $\eps$ satisfies $C\cdot {\log n}/2^n\le \eps\le 0.1$ for some sufficiently large constant $C$,  
  $\bff\sim \Dno$ is far from decision lists (and thus, conjunctions) with high probability. 

\begin{lemma}
With probability at least $1-o_n(1)$,
  $\bff\sim \Dno$ satisfies $\reldist(\bff,\textsf{DL})\ge \eps.$
\end{lemma}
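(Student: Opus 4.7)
The plan is to show that, with probability $1 - o_n(1)$ over $\bff \sim \Dno$, every decision list $g$ satisfies $\reldist(\bff, g) \geq \eps$. I will achieve this by a union bound over all decision lists, combined with a Chernoff-type concentration argument applied separately to each fixed $g$. The key structural input is that under $\Dno$ the $2^n$ entries of $\bff$ are independent Bernoullis, so for any fixed $g$ the symmetric difference $|\bff^{-1}(1)\triangle g^{-1}(1)| = \sum_{x} \mathbf{1}[\bff(x) \neq g(x)]$ is a sum of independent $\{0,1\}$ variables, directly amenable to Chernoff.

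First I would count decision lists. Since each decision list has at most $n$ rules (no variable repeats, per Section~\ref{sec:basics-DL}) and each rule specifies a variable ($n$ choices), a bit, and an output, together with a default bit, the total number of decision lists on $\{0,1\}^n$ is at most $2 \cdot (4n)^n = 2^{O(n \log n)}$. Next I would fix $g \in \textsf{DL}$, set $k := |g^{-1}(0)|$, and compute the expected symmetric difference
\[ \mathbb{E}\big[|\bff^{-1}(1)\triangle g^{-1}(1)|\big] = 3\eps(2^n - k) + (1-3\eps)k = 3\eps \cdot 2^n + (1-6\eps)k \geq 3\eps \cdot 2^n, \]
where the last inequality uses $\eps \leq 0.1 < 1/6$ so that the $k$-term is nonnegative. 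By the multiplicative Chernoff bound with deviation $\delta = 2/3$ around the mean,
\[ \Pr\!\left[|\bff^{-1}(1)\triangle g^{-1}(1)| \leq \eps \cdot 2^n\right] \leq \exp(-c \eps \cdot 2^n) \]
for an absolute constant $c>0$ (taking $c = 2/3$ works).

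Then I would union bound over all $2^{O(n\log n)}$ decision lists: the probability that some $g \in \textsf{DL}$ fails the bound above is at most $2^{O(n \log n)} \cdot \exp(-c\eps \cdot 2^n)$, which is $o_n(1)$ once the hypothesized lower bound on $\eps$ forces $\eps \cdot 2^n$ to dominate the union-bound cost by a sufficiently large constant factor. On the complement of this event, every decision list $g$ simultaneously satisfies $|\bff^{-1}(1)\triangle g^{-1}(1)| \geq \eps \cdot 2^n$, and using the trivial bound $|\bff^{-1}(1)| \leq 2^n$ in the denominator gives
\[ \reldist(\bff, g) = \frac{|\bff^{-1}(1)\triangle g^{-1}(1)|}{|\bff^{-1}(1)|} \geq \frac{\eps \cdot 2^n}{2^n} = \eps \]
uniformly over $g \in \textsf{DL}$, which is the desired conclusion.

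The main obstacle is calibrating the Chernoff tail against the $2^{\Theta(n \log n)}$ cardinality of the decision list class: the per-$g$ failure probability must decay fast enough in $\eps \cdot 2^n$ to beat this union-bound cost, and the lower bound on $\eps$ in the hypothesis is precisely what makes this balance go through. Beyond this bookkeeping, nothing structural about decision lists is needed beyond the crude rule-count bound and the trivial expectation computation.
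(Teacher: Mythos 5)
There is a genuine gap in the union-bound calibration. Your per-$g$ Chernoff tail is $\exp(-c\eps\cdot 2^n)$, which you then multiply by the number of decision lists, $2^{\Theta(n\log n)}$. For this product to be $o_n(1)$ you would need $\eps\cdot 2^n = \Omega(n\log n)$. But the stated regime for the lemma is $\eps \ge C\log n/2^n$, i.e.\ only $\eps\cdot 2^n \ge C\log n$; at the low end of this range, $\exp(-c\eps 2^n) = n^{-\Theta(1)}$ while the number of decision lists is $n^{\Theta(n)}$, and the union bound produces a quantity that diverges rather than vanishes. So the ``calibration'' you flag as the main obstacle is not just bookkeeping --- it actually breaks, and your final sentence (``the lower bound on $\eps$ in the hypothesis is precisely what makes this balance go through'') is mistaken. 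The hypothesis on $\eps$ is chosen to make a much cheaper argument work, not this one.

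The paper's proof sidesteps the union bound over all decision lists by exploiting structure: it establishes two Chernoff-type events (concentration of $|\bff^{-1}(0)|$, and, for every $i\in[n]$ and $b\in\{0,1\}$, that the subcube $\{x:x_i=b\}$ contains $\ge 0.9\cdot 3\eps 2^{n-1}$ zeros of $\bff$), which involve only $O(n)$ events and therefore survive a union bound even when $\eps 2^n = \Theta(\log n)$. On the intersection of these events, it then argues deterministically, via a short case analysis on the \emph{first rule} of $g$ (default $1$, default $0$ or a $0$-output first rule, or a $1$-output first rule), that every decision list $g$ simultaneously has $\dist(\bff,g)\ge\eps$, and concludes via $\reldist\ge\dist$. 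The key idea you are missing is that for a random $\bff$ of this form, the distance to any decision list $g$ is already forced to be large by the behavior of $g$ on a single subcube determined by $g$'s first rule, so one never needs to control all $2^{\Theta(n\log n)}$ decision lists individually.
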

\begin{proof}
By setting $C$ suitably large and a Chernoff bound,
  with probability  $1-o_n(1)$, we have
\begin{enumerate}
    \item $|\bff^{-1}(0)|$ 
  lies in $(1\pm 0.1)(3\eps 2^n)$; and
  \item For any $i\in [n]$ and $b\in \{0,1\}$, we have $\big|\{x: x_i=b\ \text{and}\ \bff(x)=0\}\big|\ge 0.9(3\eps 2^{n-1})$.
\end{enumerate}
We show that $\reldist(\bff,\textsf{DL})\ge \eps$ whenever these conditions hold. Consider any decision list $g$:
\begin{enumerate}
\item If $g$ starts with $v_{\textsf{default}}=1$ then $g$ is the constant-$1$ function and we have $\dist(\bff,g)\ge 2.7\eps $;
\item If $g$ starts with $v_{\textsf{default}}=0$ or $(x_{i_1},b_1,0)$, then $|g^{-1}(1)|\le 2^{n-1}$ and $\dist(\bff,g)\ge 0.5-3.3\eps$;
\item If $g$ starts with $(x_{i_1},b_1,1)$,
  then by the second condition, we have 
  $\dist(\bff,g)\ge 1.35\eps$.
\end{enumerate}
The lemma then follows from $\reldist(\bff,g)\ge \dist(\bff,g)$ 
  and $0.5-3.3\eps\ge \eps$.
\end{proof}

Let $q=c/\eps$ for some constant $c$. 
Assume for a contradiction that $\calA$ is a randomized, two-sided, adaptive relative-error testing algorithm
for conjunctions\hspace{0.1cm}/\hspace{0.1cm}decision lists, which receives $q$ samples from $\SAMP(f)$ and makes $q$ queries on $\MQ(f)$. Then by the correctness of the algorithm, we have
\begin{align*}
\Pr\big[\text{$\calA$ accepts the constant-$1$ function}\big]\ge 0.9\quad\text{and}\quad
\Prx_{\bff\sim \Dno} \big[\text{$\calA$ accepts
  $\bff$}\big]\le 0.1
\end{align*}
and thus, we have 
$$
\Pr\big[\text{$\calA$ accepts the constant-$1$ function}\big]-
\Prx_{\bff\sim \Dno} \big[\text{$\calA$ accepts
  $\bff$}\big]\ge 0.8.
$$
Given that $\calA$ is a distribution
  over deterministic algorithms, there is
  a deterministic algorithm $\calB$ (i.e., what it does after receiving the samples is deterministic) with the 
  same complexity such that 
$$
\Pr\big[\text{$\calB$ accepts the constant-$1$ function}\big]-
\Prx_{\bff\sim \Dno} \big[\text{$\calB$ accepts
  $\bff$}\big]\ge 0.8.
$$
We show below that when $c$ is sufficiently small, 
  the LHS is at most $0.1$, a contradiction.

To this end, consider an execution of $\calB$ on
  the constant-$1$ function.
First $\calB$ draws $q$ samples from $\{0,1\}^n$,
  which we denote by $\bX=(\bx^1,\ldots,\bx^q)$ and is 
  uniform over $(\{0,1\}^n)^q$.
After receiving~$\bX$, the deterministic algorithm $\calB$ makes a sequence of $q$ queries. While $\calB$ is adaptive, given that it is run on the constant-$1$ function, all queries return $1$ and thus, the $q$ queries made by $\calB$ is determined by $\bX$ and we denote them by $\bY=(\by^1,\ldots,\by^q)$.
We refer to a pair $(X,Y)$ as a transcript of $\calB$
  on the constant-$1$ function if $Y$ consists of 
  queries that $\calB$ makes after receiving $X$ as samples.
There are $2^{nq}$ transcripts $(X,Y)$, one for each
  $X\in (\{0,1\}^n)^q$, and $(\bX,\bY)$ is uniformly distributed among all the $2^{nq}$ transcripts.

Let $\calT$ denote the set of all transcripts that lead $\calB$ to accept the constant-$1$ function. 
Then 
\begin{equation}\label{eq:hehe99}
\Pr\big[\text{$\calB$ accepts the constant-$1$ function}\big]=|\calT|\big/2^{nq}.
\end{equation}

On the other hand, fix any $(X,Y)$ in $\calT$,
  where $X=(x^1,\ldots,x^q)$ and $Y=(y^1,\ldots,y^q)$, and
we consider the probabiblity of the following event:
\begin{quote}
$\calE_{X,Y}$: Draw $\bff\sim \Dno$; when running 
  $\calB$ on $\bff$, $\calB$ receives $X$ as the sequence of $q$ samples and the sequence of $q$ queries it makes
  are $Y=(y^1,\ldots,y^q)$ and all of them return $1$.
\end{quote}
We note that when $\calE_{X,Y}$ occurs,   
  $\calB$ must accept $\bff$.
As a result we have
$$
\Prx_{\bff\sim \Dno} \big[\text{$\calB$ accepts
  $\bff$}\big]\ge \sum_{(X,Y)\in \calT}
\Prx_{\bff\sim \Dno} \big[\calE_{X,Y}\big].
$$
Finally, for any $(X,Y)\in \calT$,
  by setting $c$ sufficiently small, the probability of $\calE_{X,Y}$ is at least %
$$
(1-3\eps)^{2q}\cdot \left(\frac{1}{2^n}\right)^q=(1-3\eps)^{2c/\eps}\cdot \frac{1}{2^{nq}}\ge 0.9\cdot \frac{1}{2^{nq}},
$$
where the $(1-3\eps)^{2q}$ on the LHS is the probability that every point $z\in X\cup Y$ has $\bff(z)=1$; when this happens, the probability of getting $X$ as the samples is
$$
\left(\frac{1}{|\bff^{-1}(1)|}\right)^q\ge \left(\frac{1}{2^n}\right)^q,
$$
and when this also happens, $Y$ must be the queries that $\calB$ makes and all queries return $1$.

Putting all pieces together, we have 
$$
\Pr\big[\text{$\calB$ accepts the constant-$1$ function}\big]-
\Prx_{\bff\sim \Dno} \big[\text{$\calB$ accepts
  $\bff$}\big]\le \frac{|\calT|}{2^{nq}}-0.9\cdot \frac{|\calT|}{2^{nq}}=0.1\cdot \frac{|\calT|}{2^{nq}}
$$
and using $|\calT|\le 2^{nq}$, the LHS is at most
  $0.1$, a contradiction.

\end{document}